\newif\ifdraft
\newtheorem{theorem}{Theorem}[section]
\newtheorem{lemma}[theorem]{Lemma}
\newtheorem{proposition}[theorem]{Proposition}
\newtheorem{definition}[theorem]{Definition}
\newtheorem{corollary}[theorem]{Corollary}
\newtheorem{observation}[theorem]{Observation}
\renewcommand{\epsilon}{\varepsilon}
\numberwithin{equation}{section}
\newcommand{\RR}{\mathbb{R}}
\newcommand{\cA}{\mathscr{A}}
\newcommand{\cC}{\mathcal{C}}
\newcommand{\cD}{\mathcal{D}}
\newcommand{\cF}{\mathcal{F}}
\newcommand{\cI}{\mathcal{I}}
\newcommand{\cP}{\mathscr{P}}
\newcommand{\cQ}{\mathcal{Q}}
\newcommand{\cS}{\mathscr{S}}
\newcommand{\cU}{\mathcal{U}}
\renewcommand{\emptyset}{\varnothing}
\newcommand{\sse}{\subseteq}
\newcommand{\eps}{\varepsilon}
\newcommand{\anupam}[1]{{\color{red}AG: #1}\marginpar{$\star\star$}}
\newcommand{\anupam}[1]{}
\newcommand{\supp}{\operatorname{supp}}
\newcommand{\sep}{\operatorname{sep}}
\newcommand{\cost}{\operatorname{cost}}
\newcommand{\coverage}{\operatorname{coverage}}
\newcommand{\profit}{\operatorname{profit}}
\newcommand{\gain}{\operatorname{gain}}
\newcommand{\val}{\operatorname{value}}
\newcommand{\OPT}{\operatorname{OPT}}
\newcommand{\excess}{\operatorname{excess}}
\newcommand{\Usep}{\cU_{\rm{sep}}}
\newcommand{\Uunsep}{\cU_{\rm{unsep}}}
\newcommand{\ysep}{y(\Usep)}
\newcommand{\yunsep}{y(\Uunsep)}
\newcommand{\deac}{\mathbf{d}}
\newcommand{\nf}{\nicefrac}
\newcommand{\lineshere}{\medskip\hrule\vspace{0.01in}\hrule\medskip}
\newcommand{\appx}{\ensuremath{1.994}}
\title{Steiner Forest: A Simplified Better-Than-2 Approximation}
\author{
Anupam Gupta\thanks{
Computer Science Department, New York University, USA.
 Email: \href{mailto:anupam.g@nyu.edu}%
 {anupam.g@nyu.edu}. Supported in part by NSF awards CCF-2224718 and CCF-2422926.}
\and
Vera Traub\thanks{
Department of Computer Science, ETH Zurich, Switzerland.
Email: \href{mailto:vtraub@ethz.ch}%
{vtraub@ethz.ch}.
}
}
\date{}
\begin{document}

\maketitle

\begin{abstract}
  In the Steiner Forest problem, we are given a graph with edge
  lengths, and a collection of demand pairs; the goal is to find a
  subgraph of least total length such that each demand pair is connected in
  this subgraph. For over twenty
  years, the best approximation ratio known for the problem was a
  $2$-approximation due to Agrawal, Klein, and Ravi\ (STOC 1991), despite many
  attempts to surpass this bound. Finally, in a recent breakthrough,
  Ahmadi, Gholami, Hajiaghayi, Jabbarzade,
and Mahdavi\ (FOCS 2025) gave a $2-\eps$-approximation, where
  $\eps \approx 10^{-11}$. 
  
  In this work, we show how to simplify and extend the work of Ahmadi
  et al.\ to obtain an improved $1.994$-approximation. 
  We combine some ideas from their work (e.g., 
  an extended run of the moat-growing primal-dual algorithm, 
  and identifying autarkic pairs)
  with other ideas---submodular maximization to find components
  to contract, as in the relative greedy algorithms for Steiner
  \emph{tree}, and the use of autarkic \emph{triples}. We hope that our cleaner abstraction will open the way for further improvements.
\end{abstract}

\thispagestyle{empty}
\newpage

\setcounter{page}{1}

\section{Introduction}
\label{sec:introduction}

We study the Steiner Forest problem, where we are given an undirected
graph $G = (V,E)$ with non-negative edge costs $\{c_e\}_{e \in E}$,
and a collection of demand pairs $\cD \sse \binom{V}{2}$; a solution
is a subset of edges $F \sse E$ such that for each demand pair, both of its vertices lie in the
same connected component of the subgraph $(V,F)$. The goal
of the optimization algorithm is to minimize the total cost $c(F) \coloneqq \sum_{e
  \in F} c_e$.

The problem is APX-hard~\cite{ChlebikC08}, and the first constant-factor
approximation ratio given for this problem was the $2$-approximation
due to Agrawal, Klein, and Ravi~\cite{AKR95}, based on the (by-now standard)
moat-growing primal-dual framework. Remarkably, this approximation
ratio proved extremely difficult to improve upon: Goemans and
Williamson extended the primal-dual framework to a wide class of
constrained forest problems~\cite{GW95}, but the approximation ratio
for Steiner forest remained the same. (See \Cref{sec:related-work} for
a discussion of several other algorithms for the problem.)

Finally, in a remarkable result earlier this year, Ahmadi, Gholami,
Hajiaghayi, Jabbarzade, and Mahdavi~\cite{AGHJM25} gave a
$(2-\eps)$-approximation for $\eps \approx 10^{-11}$, breaking the
three-decades old logjam. Their work %
combines the primal-dual scheme with a local-search approach and a new
technique of ``autarkic pairs'', which builds on ideas from earlier
work of the same authors on the prize-collecting version of Steiner
Forest \cite{AhmadiGHJM25}.

 Building on this breakthrough, we show how to simplify and
  abstract these powerful new ideas and integrate them with classical
  approaches to get a more intuitive algorithm, a simpler analysis,
  and a stronger approximation guarantee.
Our main result is the
following:
\begin{theorem}
  There is an $\appx$-approximation algorithm for the Steiner Forest
  problem.
\end{theorem}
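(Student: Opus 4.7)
The plan is to hybridize the classical moat-growing primal-dual of Agrawal-Klein-Ravi with two ``beyond-$2$'' ingredients: a relative-greedy contraction phase driven by a submodular potential (analogous to the Robins--Zelikovsky algorithm for Steiner tree), and an accounting trick that charges certain short edges to ``autarkic'' small structures (pairs and triples of terminals). Concretely, I would first argue a structural dichotomy: given the optimal forest $F^*$, for an appropriate threshold $\tau$ either a constant fraction of the cost of $F^*$ lies in ``small'' components (subforests spanning few terminals of total cost $\le \tau \cdot \mathrm{OPT}$-per-component), in which case autarkic structures give savings, or most of $F^*$'s cost lies in ``large'' components, in which case contracting a few carefully chosen cheap subtrees lets the primal-dual algorithm do strictly better than factor $2$.

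Next I would set up the primal-dual engine. Starting from the cut-LP, run moat-growing but continue past the usual stopping time so that one obtains a (scaled) dual $y$ whose value is essentially $\alpha \cdot \mathrm{OPT}$ for some $\alpha > 1/2$, together with the invariant that the algorithm's primal $F$ satisfies $c(F) \le 2\sum_S y_S - (\text{slack})$. The slack is the key quantity: I would track it as the total dual ``wasted'' on moats that finish separating pairs already made redundant by contractions or by autarkic certificates. Formally this is the excess/deactivation bookkeeping suggested by the macros $\excess$ and $\deac$, and gives a bound $c(F) \le 2\,\mathrm{OPT} - \Omega(\text{slack})$.

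To generate slack I would introduce a greedy phase on top of the primal-dual. Define, for each candidate ``component'' $K$ (a small connected subgraph spanning a subset of terminals that merges several demand pairs), a gain function $\gain(K)$ equal to the primal-dual cost saved by pre-contracting $K$ minus $c(K)$ itself. A standard calculation shows $\gain$ is a monotone submodular set function over disjoint candidate components, so a greedy pick attains a $(1-1/e)$-fraction of the best collection. I would then argue, following the relative-greedy philosophy, that the best collection recovers at least a constant fraction of the gap $2\mathrm{OPT} - \mathrm{OPT}$ that a full witness to optimum would reveal. Where the Steiner-tree version uses full-components, here I would use ``autarkic triples'': triples of terminals whose Steiner connector has cost comparable to the sum of their two cheapest demand distances, so that contracting them certifiably saves dual.

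The hardest step will be quantitatively combining these two sources of savings to hit the target ratio $\appx$. Two tensions fight each other: the extended moat growth inflates $\sum y_S$ above $\mathrm{OPT}$, which helps the gain accounting but hurts the base $2\sum y_S$ bound; and autarkic triples save more per unit cost than pairs but are rarer, so an averaging/LP-exchange argument over both types is needed. I would parametrize the scheme by $(\alpha,\tau)$, express the final cost as a maximum over a small number of linear bounds (one per case in the dichotomy, one per type of autarkic structure), and optimize; I expect $\appx$ to be what drops out of balancing these inequalities. Finally, a polynomial-time implementation requires running the greedy over an implicit but polynomial family of candidate components (bounded-size Steiner trees over terminals), which is standard given that we only use triples.
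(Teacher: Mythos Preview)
Your proposal identifies the right three ingredients---extended moat-growing, a submodular relative-greedy contraction phase, and autarkic structures---but several of the concrete mechanisms you sketch are off in ways that would block the argument.

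\textbf{Active connectivity is missing.} You treat the relative-greedy step as ``contract small subtrees, then run primal-dual,'' and you assert that restricting to bounded-size components loses nothing because ``we only use triples.'' This is the step that fails for Steiner forest. The Borchers--Du $k$-restriction preserves the gain of a set $S$ only when every vertex of $S$ remains in an active moat up to the time the moats containing $S$ merge (the paper calls such $S$ \emph{actively connected}). For a general vertex set of a component of $\OPT$ this is false: a $k$-restricted decomposition can split $S$ into pieces that no longer intersect the same active moats, and $\gain(\cS_k) < \gain(\{S\})$. The entire purpose of the $\eps$-extension is \emph{not} to ``inflate $\sum y_S$ above $\mathrm{OPT}$'' as you write, but to keep moats alive a little longer so that more of $\OPT$'s components become actively connected. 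Even then some components do not, and the paper must show that whenever active connectivity fails, the budget-active growth forces $\OPT$ to cross many non-separating moats twice, which contributes to $\excess(\OPT)$. Without this ``actively connected or large excess'' lemma your relative-greedy phase has no teeth.

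\textbf{Autarkic tuples are not small terminal sets.} You define autarkic triples as ``triples of terminals whose Steiner connector has cost comparable to the sum of their two cheapest demand distances.'' That is not the object that makes the argument work. An autarkic pair (resp.\ triple) is a pair (resp.\ triple) of \emph{moats} $A_1,A_2(,A_3)\in\supp(y)$, simultaneously active at some time, with the property that the demand pairs separated by each $A_i$ are exactly the demands crossing between the $A_i$'s and nothing leaves their union. The payoff is that if you contract a single representative demand per tuple, then for every moat $U$ with $\sep(U)=\sep(A_i)$ you can delete the unique $\OPT$-edge in $\delta(U)$ (when there is only one) and stay feasible in the contracted instance; this is what drives $c(\OPT(\cI/\cP))\le c(\OPT)-\coverage(\cP)+\Lambda$. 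A cost-comparison on terminal triples does not give you this deletion certificate. Relatedly, finding the best crossing-free collection $\cP$ is not a submodular-greedy step; it requires a laminarity lemma on autarkic tuples and a dynamic program.

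\textbf{The tradeoff is three-way, not two-way.} Your dichotomy ``small components $\Rightarrow$ autarkic savings; large components $\Rightarrow$ contraction savings'' does not match the analysis and will not close. The actual balance is among (i) $\excess(\OPT)$, (ii) $\gain(\cS)-\cost(\cS)$ for a specific canonical $\cS$ built from $\OPT$, and (iii) $\profit(\cP)=2\coverage(\cP)-\cost(\cP)$ for a specific $\cP$; one proves a single inequality of the form
\[
\tfrac{10}{3}\bigl(\gain(\cS)-\cost(\cS)\bigr)+\profit(\cP)+\bigl(\tfrac{1}{2\eps}+17\bigr)\excess(\OPT)\ \ge\ y(\Usep),
\]
and then takes the best of the three algorithmic outputs. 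The constants $\appx$ and $\eps\approx 0.0083$ come from a linear program over about seven explicit inequalities, not from a generic $(\alpha,\tau)$ balance. Your sketch does not contain the construction of the canonical $\cS$ from $\OPT$, the ``few bad times'' lemma tying double-crossing moats to excess, or the case split on $m(U,t)\in\{1,2,\ge 3\}$ that lets autarkic pairs/triples absorb exactly the part of $\val(\cS)$ that $\gain(\cS)$ cannot; these are the pieces that produce a concrete constant.
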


While building upon key ideas from~\cite{AGHJM25}, we achieve this
result by an arguably simpler algorithm and analysis. This
simplification, which unifies classical techniques with recent
breakthroughs, is a primary contribution of our work.  We show how to
adapt well-established tools from approximation algorithms for the
Steiner Tree problem (which is the special case of Steiner Forest
where all demand pairs share a common endpoint) to the setting of
Steiner Forest.  This leads not only to a significant simplification
and a better approximation ratio, but it also provides a more unified
perspective, revealing connections between the different algorithms we
have for these two problems.

\subsection{Our Techniques}

The first step of our new algorithm for Steiner Forest is to run the
$\epsilon$-extended moat growing algorithm introduced
in~\cite{AGHJM25}, which is a variant of the classical primal-dual
algorithm. In this variant sets stay active and grow for a bit longer
than they would in the classical primal-dual algorithm.  For a precise
description, we refer to \Cref{sec:eps-extended-moat}.

Next, we apply two different algorithms that each start with the solution resulting from the $\epsilon$-extended moat growing algorithm and aim at improving upon it.

\medskip\textbf{Improving Components:} The first of these two
algorithms aims at finding ``improving components'', which is a key
concept present in almost all known approximation algorithms for
Steiner Tree that have an approximation ratio below $2$, including
those algorithms achieving the currently best approximation guarantee
\cite{ByrkaGRS13,TraubZ25}.  Our algorithm builds on Zelikovsky's
relative greedy algorithm for Steiner Tree
\cite{zelikovsky_1996_better}, but with some key differences,
described next.

For the Steiner tree problem, there is a simple $2$-approximation
algorithm, which returns a spanning tree (in the metric closure) on
the set of \emph{terminals} (the set of vertices to be connected).
Zelikovsky's algorithm selects a set of ``components''---a set of
trees each connecting only a subset of the terminals---then contracts
each of the terminal sets connected by these components. It finally
builds a minimum spanning tree on the terminals in the contracted
instance to get a Steiner tree.

For the Steiner Forest problem, we do not have a simple
$2$-approximation via minimum spanning trees.  Thus, we instead complete the selected components via the primal-dual technique.  This
however creates a new challenge: In contrast to the cost of a minimum
spanning tree on the terminals, the cost of the Steiner forest
returned by the (classical or $\epsilon$-extended) primal-dual
algorithm might change in very non-intuitive ways when contracting
vertices. In fact, this cost is not even monotone and might
increase when contracting some vertices.  This makes it difficult to
analyze such an algorithm and also difficult to select a good set of
components.

For this reason, we work with a ``time-based'' version of the
primal-algorithm. To the best of our knowledge, such a timed version was first studied in~\cite{GuptaKPR03,KLS05}, and was rediscovered in
\cite{AGHJM25}.  Using this ``time-based'' version, the cost of a
solution returned by the primal-dual algorithm can only decrease when
contracting a component. Moreover, we show that the decrease can be
quantified by a submodular function. (See \Cref{def:gain} and
\Cref{lem:submod}.)  This allows us to apply techniques from
submodular optimization, much like the relative greedy algorithm used
by Zelikovsky \cite{zelikovsky_1996_better}, to select a good set of
components.

In order to be able to restrict ourselves to components connecting
only a constant-size set of vertices, which we need to obtain an
efficient algorithm, we use classical results by Borchers and Du on
$k$-restricted Steiner trees~\cite{BorchersD97}.  However, it turns
out that this restriction to constant-size sets is only possible when
our components are ``actively connected'' (a concept introduced in
\cite{AGHJM25}).  While this property is trivially satisfied for
Steiner Tree instances, this is not the case for Steiner Forest.  This
difficulty is the reason why we need the $\epsilon$-extended moat
growing algorithm, and cannot simply work with the classical
primal-dual algorithm.  We expand on this component-based algorithm in
\Cref{sec:activ-conn-comp}.

\medskip\textbf{Autarkic Pairs:} The second algorithm again starts
with the solution resulting from the $\epsilon$-extended moat-growing
algorithm.  It then aims at improvements of this solution via the idea
of ``autarkic pairs'' introduced in \cite{AGHJM25}, building on
similar ideas from \cite{AhmadiGHJM25}. The idea is to carefully
select a set of demand pairs to contract and then run any
$2$-approximation algorithm on the resulting instance. If the cost of
an optimum solution decreases by more than half of the cost needed to
connect the contracted demand pairs (by a shortest path each), then
this leads to a cheaper solution than applying a $2$-approximation
directly to the original instance.

We provide a new abstraction of the concept of autarkic pairs and show
how we can quantify the improvement in the solution cost for a given
collection of autarkic pairs (\Cref{lem:autarkic_pairs}).  Then we
prove that an optimal collection of autarkic pairs can be computed in
polynomial time (\Cref{thm:aut-algo}).  Moreover, our new abstraction
allows us to generalize our results to autarkic triples, which leads
to further improvements of the approximation guarantee.  For details
we refer to \Cref{sec:autarkic_pairs}.

In the end, we return the best of all the solutions we obtained from
these different algorithms.  For a description of our overall
algorithm and the key technical statement that implies the
approximation ratio of $\appx$, we refer to \Cref{sec:main-algo}.

\medskip\textbf{Plan for the Paper:} We discuss the $\eps$-extended
moat-growing algorithm in \Cref{sec:eps-extended-moat}, and explore
the ideas of improving components in \Cref{sec:activ-conn-comp}, and
autarkic pairs in \Cref{sec:autarkic_pairs}. We then give the
algorithm in \Cref{sec:main-algo}. The proof that one of these two
ideas gives an improvement appears
in \Cref{sec:satisfied_pairs_or_large_excess}, which we use to
complete the analysis in \Cref{sec:approximation-ratio}. We close with
some deferred proofs in the Appendices.

\subsection{Related Work}
\label{sec:related-work}

Apart from the recent work of Ahmadi, Gholami, Hajiaghayi, Jabbarzade,
and Mahdavi~\cite{AGHJM25} mentioned above, there is a large body of
work on Steiner network problems. The first approximation algorithms
for the Steiner forest problem was due to Agrawal, Klein, and
Ravi~\cite{AKR95}. The primal-dual approach they developed was
extended to a wide class of network design problems by Goemans and
Williamson~\cite{GW95}, who gave a network-formation plus
reverse-delete perspective for Steiner forest (as opposed to the
\cite{AKR95} algorithm which built edges irrevocably, and did not have
a reverse-delete step).  Another $2$-approximation algorithm for the
Steiner Forest problem is Jain's seminal iterative rounding
algorithm~\cite{Jain98}, which applies far beyond Steiner Forest to a
wide class of network design problems.

``Timed'' versions of the primal-dual algorithm had been proposed by
\cite{GuptaKPR03} in order to give an algorithm for a rent-or-buy
variant of the Steiner forest problem. K\"onemann, Leonardi, and
Sch\"afer proposed a timed algorithm to give another $2$-approximation
algorithm for Steiner forest which also admits a cross-monotonic
cost-sharing scheme~\cite{KLS05}. Their algorithm gave a new LP
relaxation, but this was shown to have an integrality gap no better
than a factor of two~\cite{KLSZ08}. Such a timed version of the
algorithm was rediscovered by Ahmadi et al.~\cite{AGHJM25}; one
component of their improvement was to perform a local search on the
time vectors. Other approaches for Steiner forest were explored,
including a greedy ``gluttonous'' algorithm~\cite{GK15-stoc} and a
local-search algorithm~\cite{GGKMSSV18}. While these approaches gave
constant-factor approximations, none of them were successful in
breaking past the factor of $2$.

In contrast, the Steiner \emph{Tree} problem, where all the demand
pairs share a common endpoint, has a simple $2$-approximation by just
outputting an MST on the terminals. Much better approximation
algorithms are known: a line of work culminating in the work of Robins
and Zelikovsky~\cite{RobinsZ05} gave \emph{relative greedy} algorithms
(based on finding small sets of terminals to contract, thereby
reducing the cost of the MST on the remaining instance); the last of
these algorithms gave an approximation ratio of $1.55$. 
Byrka, Grandoni,
Rothvo\ss, and Sanit\'a~\cite{ByrkaGRS13} gave a randomized version of
the relative greedy algorithm achieving an approximation of
$\ln 4 + \eps$, for any constant $\eps > 0$. (Goemans, Olver,
Rothvo\ss, and Zenklusen~\cite{GoemansORZ12} gave an LP-based analysis
of the same algorithm.) Recently, Traub and Zenklusen~\cite{TraubZ25}
gave a non-oblivious local-search algorithm with the same
approximation bound of $\ln 4 + \eps$.

The autarkic pairs technique from \cite{AGHJM25}, which we generalize
in \Cref{sec:autarkic_pairs}, builds on an idea introduced by the same
authors in their exploration of the prize-collecting generalization of
Steiner Forest (where we pay a penalty for every demand pair not
connected in the solution); they give the currently best-known
approximation ratio of $2$ for the problem \cite{AhmadiGHJM25}.  This
algorithm recursively applies the classical the primal-dual algorithm,
which has been extended to the prize-collecting version of Steiner
Forest in \cite{HajiaghayiJ06}.  A key step of the $2$-approximation
algorithm for prize-collecting Steiner Forest is the contraction of a
carefully selected set of demand pairs. The set of demand pairs to
contract is chosen such that the cost of an optimum solution decreases
a lot, unless the optimum solution is so expensive that the solution
returned by the primal-dual algorithm is already a $2$-approximation.
Similar techniques have been applied to the prize-collecting Steiner
tree problem, leading to an approximation factor of $1.79$
\cite{AhmadiGHJM24}.

The standard linear programming relaxation for Steiner Forest (see
\Cref{sec:notation}), which is used by the primal-dual algorithm, has
an integrality gap of exactly $2$ even for the MST/Steiner Tree
instances on the cycle.  A stronger linear programming relaxation
based on the Bidirected Cut Relaxation for Steiner Tree has been
investigated in \cite{ByrkaGT25}, but it is not known if this
relaxation has an integrality gap strictly smaller than $2$.  For the
special case of Steiner tree, Byrka, Grandoni and Traub recently
showed that the Bidirected Cut relaxation has an integrality gap below
$2$ \cite{Byrka0T24}.  This proof is also based on an extension of the
primal-dual algorithm where sets grow longer than in the classical
primal-dual algorithm, but it is nevertheless very different from the
$\epsilon$-extended moat-growing algorithm we use.  For Steiner Tree,
also other relaxations have been studied and it has been shown that
Hypergraphic Relaxation has an integrality gap of at most $\ln 4$
\cite{GoemansORZ12} (see also \cite{TraubZ25}).

\subsection{Notation and Basic Definitions}
\label{sec:notation}

In the \emph{Steiner Forest problem}, we are given a graph $G = (V,E)$
with non-negative edge costs $\{c_e\}_{e \in E}$, and a collection
$\cD = \{a_i,b_i\}_{i=1}^h$ of demand pairs, where each
$a_i, b_i \in V$. 
We say that $a_i$ and $b_i$ are partners for each
other. 
A vertex might have several partners.
The vertices belonging to pairs in $\cD$ are called
terminals. (We can assume without loss of generality that all vertices
in $V$ are terminals; indeed, any Steiner nodes can be replaced by a
demand pair at distance zero from each other.) 
A solution to the Steiner Forest problem is a collection $F\sse E$ of edges such that for each demand pair, both of its vertices lie
in the same connected component of the edge-induced graph $(V,F)$;
the goal is to find a solution of minimum total cost
$c(F) \coloneqq \sum_{e \in F} c_e$.

We use $\OPT$ to denote a fixed optimum solution.
If there are several optimum solutions, we choose an inclusionwise minimal one, 
that is an optimum solution that contains no edges of cost $0$ that could also be omitted.

A set $U \sse V$ \emph{separates} a demand $d = \{a,b\} \in \cD$ if
$d \cap U \neq \emptyset$ and $d \setminus U \neq \emptyset$. We
generalize this from demand pairs $d$ to a general set $S$.
\begin{definition}[Separation]
  A set $U \sse V$ \emph{separates} a set $S \sse V$ if
  $S\cap U\neq \emptyset$ and $S\setminus U\neq \emptyset$.
\end{definition}
Note that this definition is asymmetric: e.g., if $S = U \cup \{x\}$
for some $x \not\in U$, then $S$ is separated by $U$ (since there are
parts of $S$ both inside and outside $U$), but $U$ is not separated by
$S$. Define a \emph{separating cut} as a set $U \sse V$ such that
there exists some demand $d \in \cD$ separated by $U$.

The standard linear programming relaxation for Steiner forest has a
variable $x_e$ for each edge $e \in E$, with the following constraints:
\begin{align*}
  \min \sum_{e \in E} c_e & x_e \\
  \sum_{e \in \delta(U)} x_e &\geq 1 \qquad\qquad \forall \text{ separating
                              cuts } U \\
  x_e &\geq 0.
\end{align*}
Its LP dual has variables $y_U$ for each subset $U \sse V$ of
vertices:
\begin{align*}
  \max \sum_{\text{separating cuts } U} y_U & \\
  \sum_{U: e \in \delta(U)} y_U &\leq c_e \qquad \qquad \forall e \in E\\
  y_U &\geq 0.
\end{align*}
Note that we allow variables for all subsets $U$, while the objective
function only sums over the contribution for separating cuts; hence an
optimal solution has no incentive to have non-zero values for $y_U$'s
with non-separating sets $U$. However, the dual solutions constructed by
the algorithm in \Cref{sec:eps-extended-moat} have non-separating cuts in
their support.

\section{The $\eps$-Extended Moat-Growing Algorithm}
\label{sec:eps-extended-moat}

The following variant of the standard moat-growing algorithm of
\cite{AKR95, GW95} was considered by \cite{AGHJM25}. At each time $t$,
we have a collection of tight edges $F \sse E$, and let $\cC^t$ be the
connected components of the graph $(V,F)$. Moreover, we have
non-negative variables $\{y_S(t)\}_{S \sse V}$, such that tight edges
satisfy $\sum_{S \subseteq V: e \in \delta(S)} y_S(t) = c_e$ and all other edges satisfy $\sum_{S \subseteq V: e \in \delta(S)} y_S(t) < c_e$.

Each component of $\cC^t$  has a budget $b_S(t) \geq 0$. Initially
$y_S(t) = 0$ for all $S$ and $F = \emptyset$, which means the
components of $\cC^0$ are the individual vertices of the
graph. The budget of each terminal is initialized to zero.

A component $S \in \cC^t$ is called \emph{demand-active} if $S$
contains a terminal that has a partner to which it is not connected in $(V,F)$, and $S$ is called
\emph{budget-active} if it is not demand-active but it has a strictly
positive budget. Component $S \in \cC^t$ is called \emph{active} if it
is either demand-active or budget-active, and is called
\emph{inactive} if it is neither. Let $\cA^t \sse \cC^t$ be the
active components at time $t$.

Now we raise time $t$ continuously at unit rate and raise the dual
variables $y_S(t)$ of all active components $S \in \cA^t$ at the same
rate; the $y_S(t)$ values of all other sets $S$ remain unchanged. (In
other words, the derivative satisfies $\dot{y}_S(t) = \mathbf{1}(S \in \cA^t)$.) If the
component is demand-active, we increase the budget $b_S(t)$ at rate
$\eps$, else if it is budget-active, we decrease the budget $b_S(t)$
at unit rate. This process continues until some edge $e$ becomes
tight: i.e., $\sum_{S \in \cA^t: e \in \delta(S)} y_S(t) = c_e$. At
this moment, all newly-tight edges are added to $F$, which changes the
component structure. If components $S_1, \ldots, S_k$ are merged into
one component $S$, we merge their budgets, and hence define
$b_{S}(t) := \sum_{j = 1}^k b_{S_j}(t)$. This process can continue
indefinitely, but observe that the variables $y_S(t)$ stop growing
when there are no more active components.

Henceforth, we refer to this algorithm as the \emph{$\eps$-extended
  moat-growing algorithm}, or simply the \emph{extended algorithm} for
brevity. To avoid superfluous notation, we let $\{y_S\}_{S \sse V}$
denote the solution it generates when it terminates; i.e.,
$y_S = y_S(\infty)$. Note that
\begin{gather}
  \sum_{S\subseteq V} y_S = \int_{t=0}^\infty |\cA^t| \, dt.
\end{gather}

\subsection{Properties of the $\eps$-Extended Algorithm}
\label{sec:prop-eps-extend}

Since we grow duals based on either the demands or the budgets, we
partition the support of $y$ (denoted by $\cU$ or $\supp(y)$) into
\begin{align*}
  \Usep \ &\coloneqq\ \{ U \subseteq V : U \in
                            \supp(y) \text{ and separates some
                            demand pair}\} \\
  \mathcal{U}_{\rm unsep} \ &\coloneqq\ \{ U \subseteq V : U \in
                               \supp(y) \text{ and separates no demand pair}\}.
\end{align*}
Note that the sets in $\Usep$ contribute to the dual objective
function, whereas the others do not. By construction, the sets in the
support have a laminar structure.

Letting $y(\cU') = \sum_{U\in \cU'} y_U$ for any $\cU' \sse \cU$, we have 
\begin{gather}
  \yunsep = \eps \cdot \ysep = \frac{\eps}{1+\eps} \cdot y(\cU).
\end{gather}
Looking at which duals pack into $\OPT$, we get
\begin{equation}
  \label{eq:trivial_lower_bound}
  c(\OPT) \ \geq\ \sum_{U \sse V} y_U \,|\OPT \cap\ \delta U|
  \geq\ \ysep + \underbrace{\sum_{\substack{U\subseteq
        V:\\ |\OPT\cap \delta(U)|\geq 2}} y_U}_{=: \Lambda}.
\end{equation}

\begin{definition}
    Let $F\subseteq E$ be a solution to our Steiner forest instance. 
    We define its \emph{excess} (with respect to the dual solution
    $y$) to be
    \[
    \excess(F)\ \coloneqq\ c(F) - \ysep. \label{eq:excess}
    \]
\end{definition}
Now (\ref{eq:trivial_lower_bound}) says that $\excess(\OPT) \geq \Lambda$.
Moreover, the algorithm of \cite{AKR95,GW95} gives a
solution to Steiner forest with the following guarantee: 
\begin{theorem}
  \label{thm:AKR-GW}
  There is an efficient algorithm to find a Steiner forest of cost at most
  \[ 2\cdot \sum_{S \sse V} y_S \ \leq\ 2(1+\eps)\cdot \ysep \leq 2(1+\eps) \cdot \big( c(\OPT) - \excess(\OPT) \big). \]
\end{theorem}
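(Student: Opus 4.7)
The plan is to apply a reverse-delete post-processing to the tight edges produced by the $\eps$-extended moat-growing algorithm, and then carry through the standard Agrawal--Klein--Ravi / Goemans--Williamson degree-counting argument, now with ``active'' meaning either demand-active or budget-active. Let $F^\star$ be the set of all edges that become tight over the run of the algorithm, and produce $F$ by going through $F^\star$ in the reverse order in which edges became tight, greedily deleting any edge whose removal still leaves a feasible Steiner forest. Every edge of $F$ is tight with respect to $y$, so
\[
c(F) \;=\; \sum_{e \in F} \sum_{S: e \in \delta(S)} y_S \;=\; \sum_{S \sse V} y_S\, |F \cap \delta(S)| \;=\; \int_{0}^{\infty} \sum_{S \in \cA^t} |F \cap \delta(S)|\,dt,
\]
where the last equality uses that $y_S$ grows at unit rate exactly while $S \in \cA^t$.

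The heart of the proof is the pointwise bound $\sum_{S \in \cA^t} |F \cap \delta(S)| \leq 2|\cA^t|$ at every time $t$. Following the classical argument, I would contract each component of $\cC^t$ inside $F$, drop contracted nodes incident to no surviving $F$-edge, and observe that the resulting multigraph $H_t$ is a forest. In a forest the sum of degrees is at most twice the number of vertices, so the bound follows provided every leaf of $H_t$ lies in $\cA^t$. If a leaf $S$ were inactive, the single edge of $F \cap \delta(S)$ could be removed without disconnecting any demand pair, contradicting the minimality enforced by reverse-delete. For demand-active leaves this is the classical AKR/GW conclusion; for budget-active leaves the same conclusion passes through because budget-activity of $S$ at time $t$ arises from a history in which $S$ (or a component absorbed into it) was demand-active, and the laminar structure of the moats links the presence of the outgoing edge of $S$ in $F$ to that earlier separating role. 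This is the place where I would lean on the invariants of the $\eps$-extended algorithm as set up in~\cite{AGHJM25}.

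Integrating the pointwise inequality gives $c(F) \leq 2 \sum_{S \sse V} y_S$. Combining this with the identity $\sum_{S \sse V} y_S = \ysep + \yunsep = (1+\eps)\ysep$ noted in \Cref{sec:prop-eps-extend} and with the definition $\excess(\OPT) = c(\OPT) - \ysep$ yields the full chain
\[
c(F) \;\leq\; 2\sum_{S \sse V} y_S \;\leq\; 2(1+\eps)\ysep \;\leq\; 2(1+\eps)\bigl(c(\OPT) - \excess(\OPT)\bigr).
\]
The main obstacle is the degree step above: the classical leaf-activity argument was designed for the unextended algorithm where only demand-active sets grow, and one must verify that including budget-active sets in $\cA^t$ does not break the ``leaves are active'' property. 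This is the structural reason the $\eps$-extended variant was introduced, and it is what allows the same $2 \sum_S y_S$ primal-dual bound to absorb the extra $\eps \cdot \ysep$ of budget-driven dual mass.
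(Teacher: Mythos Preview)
The paper does not give its own proof of this theorem; it simply attributes the bound to \cite{AKR95,GW95} (with the $\eps$-extended variant having been introduced and analyzed in \cite{AGHJM25}). Your plan---reverse-delete on the tight edge set followed by the standard degree-counting bound $\sum_{S\in\cA^t}|F\cap\delta(S)|\le 2|\cA^t|$---is the right one, and the final chain of inequalities follows immediately from $y(\cU)=(1+\eps)\,\ysep$ and the definition of $\excess$.

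Your discussion of the leaf step, however, is tangled, and the argument is actually simpler than you fear. What you must rule out is an \emph{inactive} leaf of $H_t$; budget-active leaves are already members of $\cA^t$ and require no argument whatsoever. The only property used in the classical AKR/GW contradiction is that an inactive component separates no demand pair. In the $\eps$-extended algorithm, ``inactive'' means ``not demand-active and not budget-active''; in particular it still implies ``not demand-active,'' hence ``separates no demand pair.'' So the classical contradiction goes through verbatim: if $S$ is an inactive leaf with unique edge $e\in F\cap\delta(S)$, then the side of $F\setminus\{e\}$ containing the $S$-endpoint of $e$ lies entirely inside $S$, no demand pair straddles it, and $F\setminus\{e\}$ remains feasible, contradicting the minimality of $F$. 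There is nothing to check separately for budget-active leaves, and no additional structural invariants from \cite{AGHJM25} are needed. Enlarging the notion of ``active'' only makes the hypothesis ``$S$ is inactive'' stronger, which can only help the leaf lemma.
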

If the optimal solution has a large excess, then the solution of the
$\eps$-extended algorithm itself is a good approximation. If not, we
give ways to improve it in the following sections.

\section{Actively Connected Components}
\label{sec:activ-conn-comp}

For a vertex $v \in V$, we define its \emph{deactivation time}
$\deac_v$ in the extended algorithm to be the largest time $t$ such
that the vertex $v$ belongs to an active component at time $s$ for all
$s < t$.  In other words, $v \in S$ for some $S \in \cA^s$ for all
times $s < t$, but not at time $t$ itself.  Observe that for a vertex
$v$, the deactivation time is \emph{at least} the first time at which
it lies in the same component as all its partners.  (The component
containing $v$ may remain active when $v$ is connected to its
partner---either demand-active, or budget-active if $\eps > 0$---and
hence typically the deactivation time will be larger.)

\begin{definition}[Actively Connected Vertices]
\label{def:AC-vtxs}
  Vertices $u,v \in V$ are \emph{actively connected} if there is a
  time $t$ during the run of the extended algorithm such that 
  \begin{enumerate}[nosep]
  \item  $\{u,v\} \sse S$ for some $S \in \cC^t$, and
  \item the deactivation time for both $u$ and $v$ is at least $t$.
  \end{enumerate}
\end{definition}
In other words, actively connected vertices belong to the same
connected component (which may or may not be active) at some time $t$,
and both elements have always been part of some active component
before time $t$. By definition, the vertices $a_i, b_i$ of each demand pair are always
actively connected. A key property of actively connected vertices is
that they have the same deactivation time. 

\begin{observation}\label{obs:same_deactivation_time}
If $u,v$ are actively connected, then $\deac_u =\deac_v$. 
\end{observation}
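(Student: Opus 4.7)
The plan is to exploit the monotonicity of the component structure under the extended moat-growing algorithm: connected components can only merge over time, never split. So once $u$ and $v$ lie in a common component at some time $t_0$, they belong to the same component $S(s)\in \cC^s$ for every $s\ge t_0$. Consequently, at every such $s$, the component containing $u$ is literally the same set as the one containing $v$, and in particular one is active at time $s$ if and only if the other is.

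From the definition of actively connected, I get a witness time $t_0$ with $\{u,v\}\subseteq S$ for some $S\in\cC^{t_0}$ and $\deac_u,\deac_v\ge t_0$. For $s<t_0$ both $u$ and $v$ already sit in active components (this is exactly what $\deac_u,\deac_v\ge t_0$ says). For $s\ge t_0$, monotonicity of components gives that $u$'s component coincides with $v$'s component, so the two components become inactive at exactly the same moment. Combining the two regimes, the set of times at which $u$ lies in an active component agrees with the analogous set for $v$, hence their suprema $\deac_u$ and $\deac_v$ coincide.

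To make this airtight I would write it as a short contradiction: without loss of generality suppose $\deac_u<\deac_v$. Then $\deac_v>\deac_u\ge t_0$, so for some $s$ slightly greater than $\deac_u$ the component containing $v$ in $\cC^s$ is still active. But by the merging-only property applied to $s\ge t_0$, this component also contains $u$, contradicting the definition of $\deac_u$ as the supremum of times at which $u$ has always lain in an active component. The symmetric argument rules out $\deac_v<\deac_u$, giving equality.

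The only subtle point, which I would address with a sentence of care, is the behavior exactly at merge times: the definition of $\deac_v$ uses a strict inequality $s<t$, so the argument is cleanest if I phrase everything in terms of small open intervals $(\deac_u,\deac_u+\eta)$ and appeal to right-continuity of the component structure (once an edge becomes tight the new merged component persists). This is really the only thing that could go wrong, but it is handled automatically by the convention that $\cC^s$ reflects all tight edges up to time $s$, so I do not expect any genuine obstacle.
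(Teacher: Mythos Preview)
Your argument is correct and essentially the same as the paper's: both rely on the fact that components only merge, so once $u,v$ share a component at the witness time $t_0$ they do so for all later times, forcing identical activity status from $t_0$ onward (and both are active before $t_0$ by hypothesis). The paper packages this as a short two-case check at time $t_0$ (is the common component already inactive or not), whereas you phrase it first directly and then as a contradiction; the contradiction step becomes cleanest if you take $s=\deac_u$ itself rather than $s>\deac_u$---at that time $u$ is inactive by definition, $v$ is active since $\deac_u<\deac_v$, yet they share a component---which gives the contradiction immediately without any appeal to right-continuity.
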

\begin{proof}
  If $u$ and $v$ are actively connected, they belong to the same
  connected component $S$ at some time
  $t \leq \min(\deac_u, \deac_v)$.  Either this component is inactive,
  in which case $\deac_u = \deac_v = t$, else they will always remain
  part of the same connected component henceforth.  Thus, their
  deactivation time is the same.
\end{proof}

Being actively connected is an equivalence relation: indeed, it is a
symmetric relation; moreover, if $\{u,v\}$ and $\{v,w\}$ are actively
connected, all three of $\{u,v,w\}$ have the same deactivation time,
and lie in the same connected component at this time. We can now
extend \Cref{def:AC-vtxs} to sets:

\begin{definition}[Actively Connected Sets]
  \label{def:activ-conn-sets}
  A \emph{set} $A \sse V$ is \emph{actively connected} if $A$
  is a subset of some equivalence class of the relation.
\end{definition}

The maximal actively connected sets form a partition of
the vertex set $V$.

\subsection{Contracting Actively Connected Components}
\label{sec:contraction_instead_of_boosting}

The next idea is a natural
one: if we can find an actively connected set $S$, which intersects
many moats during the moat-growing process, and shrink it down to a
single point, we can reduce the cost of the resulting solution. (Of course, we
need to pay the cost to shrink the set!) The idea extends to
collections of actively connected sets---and since the ``gain'' we
achieve is submodular, we can use the machinery of submodular
maximization to find a good collection of such sets to contract. 

\begin{definition}[Gain]\label{def:gain}
  Let $\cS$ be a collection of vertex sets, each being
  actively connected. Define  
  \begin{gather}
    \gain(\cS)\ \coloneqq \ 2 \cdot \int_{t=0}^{\infty} \big( |\cA^t| -|\cA^t / \cS| \big)\, dt,
  \end{gather}
  where $\cA^t / \cS$ arises from $\cA^t$ by
  merging sets in $\cA^t$ into a single set whenever they are
  intersected by the same set $S\in \cS$.
\end{definition}

\begin{lemma}[Submodularity]
  \label{lem:submod}
  The function
  $\gain \colon 2^{\{ S \subseteq V \colon S\text{ actively
      connected}\}} \to \mathbb{R}_{\geq 0}$ is submodular.
\end{lemma}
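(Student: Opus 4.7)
The plan is to reduce submodularity of $\gain$ to a pointwise (per-time-$t$) statement about reduction in the number of components of a hypergraph, and then verify the diminishing-returns inequality directly.

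First I would note that
\[
  \gain(\cS) \;=\; 2\int_{t=0}^{\infty} f_t(\cS)\, dt, \qquad \text{where } f_t(\cS) := |\cA^t| - |\cA^t/\cS|.
\]
Since non-negative linear combinations (and hence non-negative integrals) of submodular functions are submodular, it suffices to prove that for every fixed time $t$ the set function $f_t$ on $\cS$ is submodular. I would verify integrability only briefly (the integrand is bounded by $|\cA^t|$ and vanishes once all components deactivate), since that is routine.

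Next I would fix $t$ and recast $f_t$ combinatorially. For each actively connected $S \subseteq V$, let $E_S := \{A \in \cA^t : A \cap S \neq \emptyset\}$ be the components of $\cA^t$ that $S$ touches. Then by definition, $\cA^t/\cS$ is exactly the partition of $\cA^t$ into the connected components of the hypergraph $H_\cS := (\cA^t, \{E_S\}_{S\in\cS})$, so
\[
  f_t(\cS) \;=\; |\cA^t| - (\text{number of connected components of } H_\cS).
\]
This is the familiar rank function of the graphic matroid associated to a hypergraph (equivalently, expand each hyperedge to a star), which is known to be submodular.

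To make the proof self-contained I would verify the diminishing-returns form of submodularity directly. Fix $\cS' \subseteq \cS$ and any actively connected $T$. Let $\sim_{\cS}$ denote the equivalence relation on $\cA^t$ induced by $\cS$ (i.e.\ the relation of being in the same connected component of $H_\cS$). Adding $T$ to $\cS$ merges all the $\sim_\cS$-classes that meet $E_T$ into a single class, so
\[
  f_t(\cS \cup \{T\}) - f_t(\cS) \;=\; \max\!\bigl(0,\; k_\cS(T) - 1\bigr),
\]
where $k_\cS(T)$ denotes the number of $\sim_\cS$-equivalence classes that intersect $E_T$. The key observation is that since $\cS'\subseteq\cS$, the equivalence $\sim_{\cS}$ refines no finer than $\sim_{\cS'}$ (it is coarser), so every $\sim_\cS$-class that meets $E_T$ is contained in some $\sim_{\cS'}$-class that meets $E_T$; therefore $k_\cS(T) \leq k_{\cS'}(T)$, and hence
\[
  f_t(\cS \cup \{T\}) - f_t(\cS) \;\leq\; f_t(\cS' \cup \{T\}) - f_t(\cS').
\]
This is the diminishing-returns condition, so $f_t$ is submodular, and integrating in $t$ yields submodularity of $\gain$.

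The only subtlety I anticipate is a sanity check that the merging operation in \Cref{def:gain} is well-defined even when two actively connected sets in $\cS$ have different deactivation times (so that $\cA^t/\cS$ is genuinely a partition of $\cA^t$ for every $t$); this follows because the definition simply merges any two elements of $\cA^t$ that are intersected by a common $S\in\cS$ and takes the transitive closure, with no requirement that $S$ itself be ``active'' at time $t$. Everything else is bookkeeping.
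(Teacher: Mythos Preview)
Your argument is essentially identical to the paper's: both reduce to showing that each per-time function $g^t(\cS)=|\cA^t|-|\cA^t/\cS|$ is submodular via diminishing returns, and both verify the marginal inequality by observing that the partition $\cA^t/\cS$ is coarser than $\cA^t/\cS'$ when $\cS'\subseteq\cS$. One small slip to fix: you wrote that every $\sim_\cS$-class meeting $E_T$ is \emph{contained in} some $\sim_{\cS'}$-class meeting $E_T$, but since $\sim_\cS$ is the coarser relation the containment goes the other way; the correct statement (which is what the paper uses and which still gives $k_\cS(T)\le k_{\cS'}(T)$) is that each $\sim_\cS$-class meeting $E_T$ \emph{contains} some $\sim_{\cS'}$-class meeting $E_T$, and distinct $\sim_\cS$-classes yield distinct such $\sim_{\cS'}$-classes.
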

\begin{proof}
  To prove that the $\gain$ function is submodular, we show that for
  each time $t\geq 0$ the function $g^t$ given by
  $g(\cS) \coloneqq |\cA^t| -|\cA^t / \cS| $ is submodular.  This
  implies that the function $\gain$ given by
  $\gain(\cS) = \int_{t=0}^{\infty} g^t(\cS) \ dt$ is submodular.  To
  prove that the function $g^t$ is submodular it suffices to prove
  that it satisfies the diminishing return property: for any two
  collections $\cS_1$ and $\cS_2$ of actively connected sets with
  $\cS_1 \subseteq \cS_2$ and any actively connected set $S$, we have
  \begin{equation}\label{eq:diminishing_returns}
    g^t(\cS_1 \cup \{S\}) - g^t (\cS_1) \geq g^t(\cS_2 \cup \{S\}) - g^t(\cS_2).
  \end{equation}
  For $i\in \{1,2\}$, let $m_i$ be the number of sets from
  $\cA^t / \cS_i$ intersected by $S$.  Then the marginal increase
  $g^t(\cS_i \cup \{S\}) - g^t (\cS_i)$ is $m_i -1$ unless $m_i =0$,
  in which case the marginal increase is $0$.  Because
  $\cS_1 \subseteq \cS_2$, every set in $\cA^t / \cS_2$ is a union of
  sets in $\cA^t / \cS_1$, implying $m_2 \leq m_1$.  This shows
  \eqref{eq:diminishing_returns}.
\end{proof}

\begin{lemma}
  \label{lem:gain-gain}
  Let $\cS$ be a collection of vertex sets, each being actively
  connected.  Let $\cI/\cS$ be the instance arising from the
  original instance $\cI$ by contracting each set $S\in \cS$.
  Then we can efficiently compute a solution for the instance
  $\cI/\cS$ of cost at most
  \begin{gather}
    2\cdot y(\cU) - \gain(\cS),  \label{eq:gain-gain}
  \end{gather}
  where $y$ is solution from the extended algorithm.
\end{lemma}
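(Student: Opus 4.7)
The plan is based on the following rewrite. From $y(\cU) = \int_0^\infty |\cA^t|\,dt$ and the definition of $\gain(\cS)$,
\[
  2\,y(\cU) - \gain(\cS) \;=\; 2\int_{0}^{\infty} |\cA^t/\cS|\,dt,
\]
so it suffices to produce a Steiner forest on $\cI/\cS$ of cost bounded by the right-hand side. I will do this by simulating, on the contracted instance, a time-based moat-growing procedure driven by the projected active components $\{\cA^t/\cS\}_{t\ge 0}$, and then invoking the standard AKR-GW degree-counting argument.

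To construct the dual on $\cI/\cS$, I set, for each $U \subseteq V/\cS$,
\[
  y'_U \;\coloneqq\; \int_0^\infty \ones\!\left[U \text{ is the projection to } V/\cS \text{ of some element of } \cA^t/\cS\right]\,dt,
\]
so that $\sum_U y'_U = \int_0^\infty |\cA^t/\cS|\,dt$. The support of $y'$ is laminar, since every $\cA^t/\cS$ is a partial partition that coarsens (or loses members) as $t$ grows. To verify feasibility, fix an edge $e$ of $\cI/\cS$ inherited from an edge $\tilde e$ of $\cI$, and observe that at every time $t$ a set of $\cA^t/\cS$ separates the endpoints of $e$ only if at least one of the $\cA^t$-sets it is built from separates the endpoints of $\tilde e$. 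Integrating and using feasibility of $y$ for the original LP,
\[
  \sum_{U \colon e \in \delta(U)} y'_U \;\le\; \sum_{S \colon \tilde e \in \delta(S)} y_S \;\le\; c_{\tilde e} \;=\; c_e.
\]

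Next I turn $y'$ into a primal solution. I run a time-based moat-growing / reverse-delete algorithm on $\cI/\cS$ in which the active sets at time $t$ are \emph{forced} to equal $\cA^t/\cS$ (projected to $V/\cS$): raise $y'$ on these sets at unit rate, add newly tight edges as they appear, and prune by reverse-delete to obtain a forest $F'$. Each demand pair of $\cI/\cS$ is satisfied: either its endpoints coincide after contraction, or, by the actively-connected property of the sets in $\cS$, its endpoints lie in a common member of $\cA^t/\cS$ no later than their common deactivation time, at which point they are joined by tight edges of $F'$. The classical AKR-GW leaf/degree-counting argument, which depends only on the laminarity of the dual and the reverse-delete minimality of $F'$, then gives $c(F') \le 2\sum_U y'_U = 2\,y(\cU) - \gain(\cS)$.

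The main obstacle is this last step. One has to verify that the AKR-GW degree-counting analysis applies when the active schedule on $\cI/\cS$ is inherited from the run on $\cI$ rather than produced by the native dynamics on the contracted instance. This amounts to confirming that, at every instant, the average degree of an active set $U \in \cA^t/\cS$ in the returned forest $F'$ is at most $2$, which should follow from the partial-partition / laminar structure of the schedule together with the reverse-delete minimality of $F'$; a secondary technical point is that the $\epsilon$-extended moat-growing contributes the non-separating budget-active sets, but these play the same role in the contracted analysis as in the original and do not affect the degree bound.
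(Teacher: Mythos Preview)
Your overall plan coincides with the paper's: rewrite the target as $2\int_0^\infty|\cA^t/\cS|\,dt$, realize this value as the dual of a moat-growing process on $\cI/\cS$ whose active sets at each instant are $\cA^t/\cS$, and then invoke the AKR--GW bound. Your feasibility check for $y'$ is fine. But the step you flag as ``the main obstacle'' is genuinely unfinished, and you have slightly misdiagnosed it. The AKR--GW degree-counting argument does not take an arbitrary laminar schedule as input; it needs that at every time $t$ the active sets are \emph{connected components of the tight-edge graph} (so that inactive components are nodes of the contracted forest, and reverse-delete makes every inactive node a non-leaf). Simply ``forcing'' the schedule to be $\cA^t/\cS$ and growing $y'$ on those sets does not by itself establish that each $U\in\cA^t/\cS$ is a tight-edge component in $\cI/\cS$ under the very dual $y'$ you are building. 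Until that is shown, the leaf argument does not apply, and your secondary remark about budget-active sets is beside the point.

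The paper closes this gap with a device you should adopt. Rather than forcing the schedule, it runs on $\cI/\cS$ the \emph{time-based} moat-growing algorithm driven by the deactivation-time vector $\deac$: a component is active at time $t$ iff it contains some vertex $v$ with $t<\deac_v$. Because each $S\in\cS$ is actively connected, all vertices contracted into the same vertex of $\cI/\cS$ share a common $\deac$, so this is well-defined on the contracted instance. The paper then proves, by induction over event times, two coupled invariants: (i) the active sets of this native run on $\cI/\cS$ at time $t$ are exactly $\cA^t/\cS$, and (ii) every edge whose endpoints lie in different sets of $\cA^t/\cS$ carries the same accumulated dual load in both runs. Invariant~(ii) forces the tight-edge structure on $\cI/\cS$ to track that on $\cI$ (modulo the contractions), so the sets in $\cA^t/\cS$ really are components; invariant~(i) then says the native dynamics produces precisely your intended schedule. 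With that, the standard AKR--GW bound gives $c(F')\le 2\int_0^\infty|\cA^t/\cS|\,dt$ directly, with no bespoke degree argument needed.
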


\begin{proof}
  Consider a time-based moat-growing algorithm that receives the
  vector $\deac \in \mathbb{R}^V$ of deactivation times of the vertices
  as input, and increases a variable $y_S$ corresponding to a
  connected component whenever the current time $t$ satisfies
  $t < \deac_v$ for at least one vertex $v \in S$.  More precisely,
  we again start with the vector $y(t)=0$ and let $\cC^t$ be the
  vertex sets of the connected components of the subgraph $(V,F)$,
  where $F$ is the set of tight edges at time $t$.  A component
  $S\in \cC^t$ is active if $t< \deac_v$ for at least one vertex
  $v \in S$.  Now we raise time $t$ continuously at unit rate and
  raise the dual variables $y_S(t)$ of all active components
  $S \in \cA^t$ at the same rate; the $y_S(t)$ values of all other
  sets $S$ remains unchanged.  Once $t \geq \max_{v \in V} \deac_v$,
  the vector $y(t)$ does not change anymore and this is what the
  algorithm returns.  This time-based moat-growing algorithm produces
  exactly the same vector $\{y_S\}_{S \sse V}$ as the extended
  algorithm.

  By \Cref{obs:same_deactivation_time}, all vertices contained in the
  same set $S\in \cS$ have the same deactivation time.  Thus all
  vertices $v$ in the instance $\cI$ that are contracted into the same
  vertex of the instance $\cI/\cS$ have the same deactivation time.
  We define the deactivation time $\deac'_w$ for a vertex $w$ in the
  contracted instance as the deactivation time $\deac_v$ of the
  original vertices $v$ contracted into $w$.  We now run the
  time-based moat-growing algorithm on the instance.
    
  Let $y(t)$ and $y'(t)$ denote the dual variables at time $t$ in this
  time-based run on the original instance $\cI$ and the contracted
  instance $\cI/\cS$, respectively, and $\cA^t$ denote the active sets
  at time $t$ in the time-based run on $\cI$.  We show the following
  invariants for any time $t\geq0$:
  \begin{enumerate}[ label=(\roman*), nosep]
  \item \label{item:same_active} The active sets in the time-based run
    on $\cI/\cS$ at time $t$ are exactly the sets in $\cA^t/\cS$.
  \item \label{item:tight_same}
   For every edge $e$ that does not have both endpoints in the same set from $\cA^t / \cS$, we have
  \[
    \sum_{S \subseteq V: e \in \delta(S)} y'_S(t)\  = \sum_{S \subseteq V: e \in \delta(S)} y_S(t).
  \]
  \end{enumerate}  
  These invariants are trivially satisfied for $t=0$. Note that the
  active sets only change a finite number of times.  If the invariants
  hold at some time $t$, then growing the dual variables corresponding
  to the active sets until some new edge becomes tight (in one of the
  runs of the algorithms) maintains property~\ref{item:tight_same},
  because of property~\ref{item:same_active}.  Then the active sets
  might change; however, by property~\ref{item:tight_same}, every edge
  that does not have both endpoints in the same set from $\cA^t / \cS$
  is tight in one of the two runs of the algorithm if and only if it
  is tight in the other.  Finally, \Cref{obs:same_deactivation_time}
  implies that all vertices contained in the same set $S\in \cS$ have
  the same deactivation time, hence \ref{item:same_active}
  is maintained.

Invariant \ref{item:same_active} implies that every vertex that is not yet connected
to all of its partners at some time $t$ is part of an active set at time $t$ when running
the time-based algorithm on the contracted instance $\cI/\cS$.
Thus, the primal-dual algorithm from  \cite{AKR95,GW95} yields a feasible solution $H$
of cost at most
\[
c(H)\ \leq \ 2 \cdot \int_{t=0}^{\infty} |\cA^t / \cS| \ dt \ =\ 2
\cdot \int_{t=0}^{\infty} |\cA^t| \ dt - \gain(\cS) \ =\  2\cdot
y(\cU) -\gain(\cS). \qedhere
\]
\end{proof}

\begin{definition}[Cost of a vertex set]
  For a set $S\subseteq V$, let $\cost(S)$ be the cost of a cheapest
  Steiner tree for $S$, that is the minimum cost of an edge set
  $F\subseteq E$ such that all vertices of $S$ belong to the same
  connected component of $(V,F)$.  For a collection $\cS$ of vertex
  sets, we define $\cost(\cS) \coloneqq \sum_{S\in \cS} \cost(S)$.
\end{definition}

\subsection{An Efficient Algorithm}
\label{sec:effic-AC}
  
These results suggest the following algorithm: find a collection $\cS$
of actively connected sets for which $\gain(\cS)- \cost(\cS)$ is
large, build Steiner trees on each of them, and then use
\Cref{lem:gain-gain} to find a solution on the contracted instance.
In order to make the search over actively connected sets tractable, we
can use known results on $k$-restricted Steiner trees to restrict
ourselves to collections $\cS$ of actively connected sets that contain
only a constant number of vertices, as follows:

\begin{lemma}\label{lem:restriction_to_constant_size}
  Let $k \in \mathbb{N}_{\geq 2}$.  For every actively connected set
  $S$, then there exists a collection $\cS_k$ of subsets of $S$, each
  containing at most $k$ elements, such that
  \begin{itemize}[nosep]
  \item $\cost(\cS_k) \leq \big(1+\frac{1}{\lfloor \log_2 k \rfloor}\big) \cdot \cost(S)$, and
  \item ${\rm gain}(\cS_k) = {\rm gain}(\{S\})$.
  \end{itemize}
\end{lemma}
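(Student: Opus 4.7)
The plan is to apply the classical Borchers--Du theorem on $k$-restricted Steiner trees to the set $S$, and let $\cS_k$ be the collection of terminal sets of the resulting full components. Invoking Borchers--Du on the Steiner tree subinstance with terminal set $S$ yields a Steiner tree $T_{BD}$ spanning $S$ of cost at most $(1+\tfrac{1}{\lfloor \log_2 k \rfloor})\cdot \cost(S)$ that decomposes into full components $T_1, \dots, T_m$, where each $T_i$ spans a set of terminals $S_i \subseteq S$ with $|S_i| \leq k$. Define $\cS_k := \{S_1, \dots, S_m\}$. Each $S_i$ is actively connected since $S_i \subseteq S$ and every subset of an actively connected set is actively connected. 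The cost bound is immediate: $\cost(\cS_k) = \sum_i \cost(S_i) \leq \sum_i c(T_i) = c(T_{BD}) \leq (1+\tfrac{1}{\lfloor \log_2 k \rfloor})\cdot \cost(S)$, using that $\cost(S_i)$ is the minimum cost of any Steiner tree for $S_i$ and hence at most $c(T_i)$.

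For the gain equality I would show $|\cA^t / \cS_k| = |\cA^t / \{S\}|$ for every time $t$; integrating then gives $\gain(\cS_k) = \gain(\{S\})$. By \Cref{obs:same_deactivation_time}, all vertices of $S$ share a common deactivation time $\deac$. For $t \geq \deac$ no vertex of $S$ lies in an active component, so both quotients equal $\cA^t$. For $t < \deac$, every vertex of $S$ lies in some active component. Since each $S_i \subseteq S$, any two active components connected by a chain of $S_i$'s are both intersected by $S$, so they are also merged by $\{S\}$; hence $\cA^t/\cS_k$ refines $\cA^t/\{S\}$.

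The reverse direction is the substantive step. Fix active components $C, C' \in \cA^t$ with $u \in C \cap S$ and $v \in C' \cap S$. The key structural property I would use is that in a full-component decomposition of $T_{BD}$, two adjacent full components share precisely a common terminal of $S$ (full components glue only at terminals, and the terminals of this auxiliary Steiner tree instance are precisely $S$). Walking along the unique $u$--$v$ path in $T_{BD}$ and recording the crossings between full components, one obtains a sequence of indices $i_1, \dots, i_r$ together with intermediate $w_1, \dots, w_{r-1} \in S$ such that $u \in S_{i_1}$, $v \in S_{i_r}$, and $w_j \in S_{i_j} \cap S_{i_{j+1}}$ for each $j$. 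Since $t < \deac$, each $w_j$ lies in some active component $C_j$. Then $C$ and $C_1$ are both intersected by $S_{i_1}$, each consecutive pair $C_j, C_{j+1}$ is intersected by $S_{i_{j+1}}$, and $C_{r-1}$ and $C'$ are both intersected by $S_{i_r}$. Hence $C$ and $C'$ lie in the same $\cS_k$-equivalence class, and the two quotients coincide.

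\textbf{Main obstacle.} The only point requiring care is the ``full components glue only at terminals of $S$'' property driving the chain argument. In the Steiner forest setting every vertex of $V$ is globally a terminal, so one has to be explicit that when invoking Borchers--Du on the Steiner tree subinstance with terminal set $S$, the remaining vertices of $V$ play the role of Steiner nodes for this subinstance; with this convention the standard full-component structure of Borchers--Du applies verbatim. A minor corner case is $|S| = 1$, where $\cost(S) = 0$ and one simply takes $\cS_k = \{S\}$.
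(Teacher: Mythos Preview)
Your approach is essentially the same as the paper's: apply Borchers--Du, take the terminal sets of the full components, use the cost bound directly, and argue the gain equality via hypergraph connectivity of the $S_i$'s on $S$. The chain argument you spell out is exactly what the paper compresses into the single sentence ``by (i), we have $\cA^t/\cS_k=\cA^t/\{S\}$.''

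There is one genuine slip in your case split. You claim that for $t \geq \deac$, ``no vertex of $S$ lies in an active component.'' This is not true in general: the deactivation time is only the \emph{first} time a vertex's component becomes inactive, and that inactive component may later be absorbed by an active one (for a concrete picture, take demand pairs $\{a,b\}$ and $\{c,d\}$ on a path $a$--$b$--$c$--$d$ with costs $2,4,100$; then $\deac_a=\deac_b=1$, yet at time $3$ the active component $\{c\}$ swallows $\{a,b\}$ and both $a,b$ sit inside an active set again). The conclusion you want for $t\geq\deac$ is still correct, but for a different reason: by the definition of actively connected there is a witness time $t^*\leq\deac$ at which all of $S$ already lies in a single component of $\cC^{t^*}$, and components only merge, so $S$ intersects at most one set of $\cA^t$ for every $t\geq t^*$ (in particular for $t\geq\deac$). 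With this fix your argument goes through; the paper avoids the issue by splitting at $t^*$ rather than at $\deac$.
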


\begin{proof}
  Let $T$ be a cheapest Steiner tree for the set $S$. Then
  $\cost(S)=c(T)$. By a result due to Borchers and
  Du~\cite{BorchersD97}, there exists a $k$-restricted Steiner tree
  $T_k$ for $S$ satisfying
  $c(T_k) \leq \big(1+\frac{1}{\lfloor \log_2 k \rfloor}\big) \cdot
  \cost(S)$.  In other words, there exists a collection $\cS_k$ of
  subsets of $S$, each containing at most $k$ elements, such that
  \begin{enumerate}[nosep,label=(\roman*)]
  \item\label{item:hypergraph_connectivity} the hypergraph with vertex set $S$ and hyperedges being the sets in $\cS_k$ is connected, and
  \item $\cost(\cS_k) \leq \big(1+\frac{1}{\lfloor \log_2 k \rfloor}\big) \cdot \cost(S)$.
  \end{enumerate}
  In order to prove $\gain(\cS_k) =\gain(\{S\})$, it suffices to show
  that for every time $t\geq 0$, we have
  $|\cA^t| -|\cA^t / \cS_k| \ =\ |\cA^t| -|\cA^t / \{S\}|$.  We recall
  that $\cA^t / \{S\}$ arises from $\cA^t$ by merging sets in $\cA^t$
  into a single set whenever they are intersected by the same set $S$.
 
  By the definition of $S$ being actively connected, there exists a
  time $t^*$ such that $S\subseteq C$ for some $C\in \cC^{t^*}$, and
  moreover, all vertices in $S$ must be in active components at each
  time $t< t^*$. The former fact means that $S$ is contained in the
  same connected component from $\cC^t$ at all times $t \geq t^*$,
  implying $|\cA^t| -|\cA^t / \cS_k| = 0 = |\cA^t| -|\cA^t / \{S\}|$.
  For times $t < t^*$, each vertex in $S$ is contained in some set
  from $\cA^t$.  Thus, by \ref{item:hypergraph_connectivity}, we have
  $\cA^t / \cS_k = \cA^t/\{S\}$, completing the proof. (See 
  \Cref{sec:crit-active-conn} for an example where this last
  equality fails if $S$ is not actively connected.)
\end{proof}

This means that modulo an arbitrarily small loss, it suffices to
consider actively connected sets connecting at most $k$ actively
connected vertices. Now, to approximately maximize the ``profit''
(i.e., $\gain(\cS) -\cost(\cS)$), we can use the submodularity of the
$\gain$ function to obtain the following theorem, which allows us to
improve upon the cost $2\cdot y(\cU)$ of the solution obtained from
the extended moat-growing algorithm whenever there exists a collection
$\cS$ of actively connected sets with $\gain(\cS) > \cost(\cS)$.

\begin{theorem}
  \label{thm:effic-contraction}
  Let $\cI$ be a given of the Steiner Forest problem and $\cS$ an (unknown) collection of vertex sets actively connected by
  the extended algorithm. 
  Then, for any $\alpha \geq 0$ and any $\delta > 0$, we can find a Steiner forest
  for $\cI$, having cost at most
  \begin{gather*}
    2 \cdot y(\cU) + (\alpha+\delta) \cdot \cost(\cS) - (1-e^{-\alpha})\cdot \gain(\cS)
  \end{gather*}
  in polynomial time.
\end{theorem}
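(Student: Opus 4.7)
The plan is to reduce to a constrained monotone submodular maximization over the family $\cF$ of constant-sized actively connected sets, and to run a cost-aware ratio-greedy algorithm, in the spirit of Zelikovsky's relative greedy combined with the standard analysis for submodular maximization under a knapsack constraint.

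First, I apply \Cref{lem:restriction_to_constant_size} with $k$ a constant power of two satisfying $1/\lfloor \log_2 k\rfloor \leq \delta/(1+\alpha)$. Applied to each element of an arbitrary competing collection $\cS$ of actively connected sets, this produces a refinement $\cS_k$ of actively connected sets of size at most $k$ with $\gain(\cS_k)=\gain(\cS)$ and $\cost(\cS_k)\leq(1+\delta/(1+\alpha))\cost(\cS)$, so it suffices to search within $\cF$. The family $\cF$ has size $O(n^k)$, which is polynomial; membership in $\cF$ can be checked directly from the run of the extended algorithm via the recorded deactivation times and component histories, $\cost(S)$ for $|S|\leq k$ is computable by the classical Dreyfus--Wagner dynamic program, and $\gain$-values on any given collection are computed by running the time-based moat-growing algorithm from the proof of \Cref{lem:gain-gain} on the appropriately contracted instance.

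Next, starting from $\cS' = \emptyset$, I iteratively add the set $S\in \cF$ maximizing the marginal ratio $(\gain(\cS'\cup\{S\})-\gain(\cS'))/\cost(S)$. By monotonicity and submodularity of $\gain$ (\Cref{lem:submod}), the usual exchange argument gives, for any fixed $\cS_k \subseteq \cF$ and any step $j$,
\[
\gain(\cS_k)-\gain(\cS'_j) \;\leq\; \gain(\cS_k)\cdot\exp\!\bigl(-\cost(\cS'_j)/\cost(\cS_k)\bigr),
\]
so at the first step with $\cost(\cS'_j)\geq \alpha\cdot \cost(\cS_k)$ we obtain $\gain(\cS'_j)\geq(1-e^{-\alpha})\gain(\cS_k)$. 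I record the candidate Steiner forest after every step---obtained by contracting $\cS'_j$, invoking \Cref{lem:gain-gain} on the contracted instance, and adding a cheapest Steiner tree on each $S \in \cS'_j$ at total cost $\cost(\cS'_j)$---and output the cheapest one, whose cost is at most $2y(\cU)+\cost(\cS'_j)-\gain(\cS'_j)$.

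The main obstacle is that the last greedy step may overshoot by a set whose cost is not a priori bounded by $\delta\cdot\cost(\cS_k)$. I handle this by the standard Sviridenko-style partial enumeration: I try every ``seed'' collection $\cS'_0\subseteq \cF$ of cardinality at most $\lceil 1/\delta\rceil$ (polynomially many guesses, since $k$ and $\delta$ are constants) as a warm start, and for each seed restrict the subsequent greedy to sets of cost strictly less than the minimum cost in the seed. The correct seed consists of the $\lceil 1/\delta\rceil$ most expensive members of $\cS_k$; every subsequent greedy addition then has cost less than $\cost(\cS_k)/\lceil 1/\delta\rceil\leq \delta\cdot\cost(\cS_k)$, bounding the overshoot. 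Combined with the $k$-restriction blowup from the first step, this produces a forest of cost at most $2y(\cU)+(\alpha+\delta)\cost(\cS)-(1-e^{-\alpha})\gain(\cS)$ for every actively connected $\cS$, as claimed.
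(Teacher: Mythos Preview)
Your proposal follows essentially the same route as the paper: restrict to actively connected sets of size at most a constant $k$ via \Cref{lem:restriction_to_constant_size}, run a cost-ratio greedy with Sviridenko-style partial enumeration over constant-size seeds, and combine with \Cref{lem:gain-gain}. The paper packages the submodular step as a black box in its appendix, but the content is the same.

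Two small points deserve tightening. First, your parameter bookkeeping does not quite close: the $k$-restriction inflates $\cost$ by a factor $1+\delta/(1+\alpha)$, and the enumeration bounds the overshoot by $\delta\cdot\cost(\cS_k)$, so the coefficient on $\cost(\cS)$ comes out to $(\alpha+\delta)\bigl(1+\delta/(1+\alpha)\bigr)>\alpha+\delta$. The paper avoids this by reserving $\delta/2$ for each of the two losses; you should do the same. Second, the displayed inequality $\gain(\cS_k)-\gain(\cS'_j)\leq\gain(\cS_k)\exp(-\cost(\cS'_j)/\cost(\cS_k))$ is established for the \emph{unseeded} greedy, but you then invoke it for the seeded run without re-deriving it. With a seed $H\subseteq\cS_k$, the recursion tracks $\gain(\cS_k)-\gain(H)$ and $\cost(\cS_k)-\cost(H)$, and translating back to a bound of the form $(1-e^{-\alpha})\gain(\cS_k)-(\alpha+\delta)\cost(\cS_k)$ requires an extra inequality relating $\gain(H)$ and $\cost(H)$ (the paper's appendix uses $f(H^*)\geq c(H^*)$, which it derives from optimality of the comparison set). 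Since $\cS_k$ here is arbitrary rather than the maximizer, that step needs a sentence of justification; neither proof spells it out, but it does not affect the overall approach.
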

\begin{proof}
  Let $k \in \mathbb{N}$ be large enough such that
  $(\alpha + \nf{\delta}2) \cdot \big(1+\frac{1}{\lfloor \log_2 k \rfloor}\big) \leq
  \alpha+ \delta$.  We say that a collection of actively
  connected vertex sets is $k$-restricted if each of its sets has
  cardinality at most $k$, i.e., it contains only sets from
  \[
    \cA_k \coloneqq \big\{ S\subseteq V \colon S\text{ actively connected and }|S|\leq k \big\}.
  \]
  By \Cref{lem:restriction_to_constant_size}, there exists a
  $k$-restricted collection $\cS_k \subseteq \cA_k$ of actively
  connected sets such that $\gain(\cS_k) =\gain(\cS)$ and
  $\cost(\cS_k) \leq \big(1+\frac{1}{\lfloor \log_2 k \rfloor}\big)
  \cdot \cost(\cS)$.  Because $|\cA_k| = O(n^k)$ and $k$ is a
  constant, this set has polynomial size.  Using that every set
  $A\in \cA_k$ has constant size, we can compute $\cost(A)$, that is
  the cost of a cheapest Steiner tree for $A$, for all $A\in \cA_k$ in
  polynomial time by \cite{DreyfusW71}.  The function
  $\cost : 2^{\cA_k} \to \mathbb{R}_{\geq 0}$ is modular (by
  definition) and by \Cref{lem:submod}, the function
  $\gain : 2^{\cA_k} \to \mathbb{R}_{\geq 0}$ is submodular. Thus,
  using ideas from submodular optimization subject to knapsack
  constraints (see \Cref{sec:submod}) we can compute a collection
  $\widetilde{\cS} \subseteq \cA_k$ satisfying
  \begin{gather*}
    \gain(\widetilde{\cS}) - \cost(\widetilde{\cS}) \ \geq\
    (1-e^{-\alpha})\cdot \gain(\cS_k) -  (\alpha + \nf\delta2) \cdot  \cost(\cS_k)
  \end{gather*}
  in time $O(n^{O(1/\delta)})$. By \Cref{lem:gain-gain}, we obtain a
  solution to the contracted instance $\cI/\cS$ of cost at most
  $2\cdot y(\cU) - {\rm gain}(\widetilde{\cS})$.  The union of this
  solution and cheapest Steiner trees for all sets
  $S\in \widetilde{\cS}$ is a feasible solution to the instance $\cI$
  of cost
  \begin{align*}
    2\cdot y(\cU) - {\rm gain}(\widetilde{\cS}) + \cost(\widetilde{\cS}) \
    \leq&\  2\cdot y(\cU) -  (1-e^{-\alpha})\cdot \gain(\cS_k) +
          (\alpha+\nf\delta2) \cdot  \cost(\cS_k)\\ 
    \leq&\  2\cdot y(\cU) + (\alpha+\delta)\cdot\cost(\cS) -
          (1-e^{-\alpha})\cdot \gain(\cS), 
 \end{align*}
 where we used $\gain(\cS_k) =\gain(\cS)$ and
 $(\alpha+\nf\delta2) \cdot \cost(\cS_k) \leq (\alpha+\nf\delta2)  \cdot \big(1+\frac{1}{\lfloor
   \log_2 k \rfloor}\big) \cdot \cost(\cS) \leq (\alpha +\delta) \cdot
 \cost(\cS)$.
\end{proof}

\subsection{The Path Forward}
\label{sec:path-forward}

To get a good approximation for Steiner forest, the natural next step
would be to show that \emph{there exists} some collection $\cS$ which
has a large gain, relative to the cost of connecting its sets.

This is easy for the special case of Steiner \emph{tree}: we can
choose $\cS$ as the single set containing all the terminals. This set
is actively connected, and has $\cost(\cS) \leq c(\OPT)$. The
corresponding gain is essentially all of $2\sum_{S\subseteq V} y_S$
(up to one set at a time, which is the slack we already have in the
primal-dual analysis). Optimizing the value of $\alpha$ in
\Cref{thm:effic-contraction} and choosing $\epsilon = 0$---i.e., the
classical primal-dual algorithm without $\epsilon$-extensions---yields
an approximation factor of $1+\ln 2 +\delta$, thus recovering the
result of Zelikovsky's relative greedy algorithm
\cite{zelikovsky_1996_better}.

In Steiner Forest, a natural strategy would be to choose $\cS$ as the
connected components of $\OPT$. However, there are two problems with
this.  Firstly, the vertex set of a connected component of $\OPT$ is
not necessarily actively connected. In fact, this is why we use the
$\epsilon$-extended moat-growing algorithm. Indeed, we would like sets
to stay active in the moat-growing process until they no longer
separate the vertex set of any component of $\OPT$. However, we do not
know $\OPT$'s components. The budget-active growth makes more vertices
actively connected.

Of course, this small additional growth may not suffice to actively
connect each component of $\OPT$. Nonetheless, the key observation is
this: for every set $U\in \Uunsep$, the number of edges of $\OPT$ in
the cut $\delta(U)$ is either $0$ or at least $2$. (Indeed, if
$\delta(U)$ had a single edge, it could be omitted and this contradicts the
minimality of $\OPT$.) In other words, if many grown budget-active
sets contain few edges of $\OPT$, this intuitively means that we have
managed to make a large fraction of the $\OPT$ components actively
connected. Otherwise, if the growth of budget-active sets is not
sufficient---and these cuts continue to contain many edges of $\OPT$
until the components becomes inactive---then we contribute to the
``excess'', and hence increase the lower bound on $c(\OPT)$ compared
to $y(\Usep)$.

\begin{figure}[t]
  \begin{center}
    \begin{tikzpicture}[scale=0.6]

\tikzset{vertex/.style={
fill=black,thick, circle,minimum size=5pt, inner sep=0pt, outer sep=1.5pt}
}

\begin{scope}[every node/.style={vertex}]
\node (a1) at (3,2) {};
\node (a2) at (5,2) {};
\node (a3) at (7,2) {};
\node (ak) at (10,2) {};

\node (b1) at (3,0) {};
\node (b2) at (5,0) {};
\node (b3) at (7,0) {};
\node (bk) at (10,0) {};

\node (s) at (1,0) {};
\node (t) at (12,0) {};
\end{scope}

\node[above=1pt] at (a1) {$a_1$};
\node[below=1pt] at (b1) {$b_1$};
\node[above=1pt] at (a2) {$a_2$};
\node[below=1pt] at (b2) {$b_2$};
\node[above=1pt] at (a3) {$a_3$};
\node[below=1pt] at (b3) {$b_3$};
\node[above=1pt] at (ak) {$a_k$};
\node[below=1pt] at (bk) {$b_k$};
\node[left=1pt] at (s) {$s$};
\node[right=1pt] at (t) {$t$};

\begin{scope}[very thick]
\begin{scope}[red!80!black]
\draw (a1) -- node[right] {1} (b1);
\draw (a2) -- node[right] {1}(b2);
\draw (a3) --node[right] {1} (b3);
\draw (ak) -- node[right] {1}(bk);
\end{scope}
\begin{scope}[blue!90!black]
\draw (s) -- node[below] {1} (b1);
\draw (b1) -- node[below] {1} (b2);
\draw (b2) -- node[below] {1} (b3);
\draw (b3) --  (8,0);
\draw (10,0) --  (bk);
\draw (bk) -- node[below] {1} (t);
\end{scope}
\draw[bend right, out=-80, in=-100, looseness=0.3, green!60!black] (s) to node[below] {2} (t);
\end{scope}

\node[blue] at (9,0) {$\dots$};

\begin{scope}[shift={(14,0)}]
\begin{scope}[every node/.style={vertex}]
\node (a1) at (3,2) {};
\node (a2) at (5,2) {};
\node (a3) at (7,2) {};
\node (ak) at (10,2) {};

\node (b1) at (3,0) {};
\node (b2) at (5,0) {};
\node (b3) at (7,0) {};
\node (bk) at (10,0) {};

\node (s) at (1,0) {};
\node (t) at (12,0) {};
\end{scope}

\node[above=1pt] at (a1) {$a_1$};
\node[below] at (b1) {$b_1$};
\node[above=1pt] at (a2) {$a_2$};
\node[below] at (b2) {$b_2$};
\node[above=1pt] at (a3) {$a_3$};
\node[below] at (b3) {$b_3$};
\node[above=1pt] at (ak) {$a_k$};
\node[below] at (bk) {$b_k$};
\node[left=1pt] at (s) {$s$};
\node[right=1pt] at (t) {$t$};

\begin{scope}[line width=2pt, orange]
\draw (a1) ellipse (0.9 and 0.9);
\draw (b1) ellipse (0.9 and 0.9);
\draw (a2) ellipse (0.9 and 0.9);
\draw (b2) ellipse (0.9 and 0.9);
\draw (a3) ellipse (0.9 and 0.9);
\draw (b3) ellipse (0.9 and 0.9);
\draw (ak) ellipse (0.9 and 0.9);
\draw (bk) ellipse (0.9 and 0.9);

\draw (s) ellipse (0.9 and 0.9);
\draw (t) ellipse (0.9 and 0.9);

\draw[dashed, cyan, opacity=0.6] (6.5,0.7) ellipse (7 and 3.6);
\end{scope}

\begin{scope}[orange]
\node at (3,3.3) {$A_1$};
\node at (5,3.3) {$A_2$};
\node at (7,3.3) {$A_3$};
\node at (10,3.3) {$A_k$};
\node at (3,-1.5) {$B_1$};
\node at (5,-1.5) {$B_2$};
\node at (7,-1.5) {$B_3$};
\node at (10,-1.5) {$B_k$};
\end{scope}

\end{scope}

\end{tikzpicture}
  \end{center}
  \caption{\label{fig:matching} An instance where the moat-growing
    algorithm actively connects all vertices, but no collection $\cS$
    of vertex sets exists with $\gain(\cS) > \cost(\cS)$.  The left
    figure shows the graph $G$ and the edge costs. The demand pairs
    are $\{ s,t \}, \{a_1, b_1\}, \dots, \{ a_k,b_k\}$. $\OPT$
    consists of the red matching and the green edge, and has cost
    $k+2$.  The primal-dual algorithm (classical or
    $\epsilon$-extended) returns a solution of cost $2k+1$, consisting
    of all red and blue edges. The right figure shows $\supp(y)$.  In
    the dual solution obtained from classical moat-growing, we have 
    $y_U=\frac{1}{2}$ for each orange set $U$; the $\epsilon$-extended
    algorithm also has a value of $\epsilon \cdot (k+1)$ for the dual
    variable corresponding to the light-blue set $V$.  }
\end{figure}
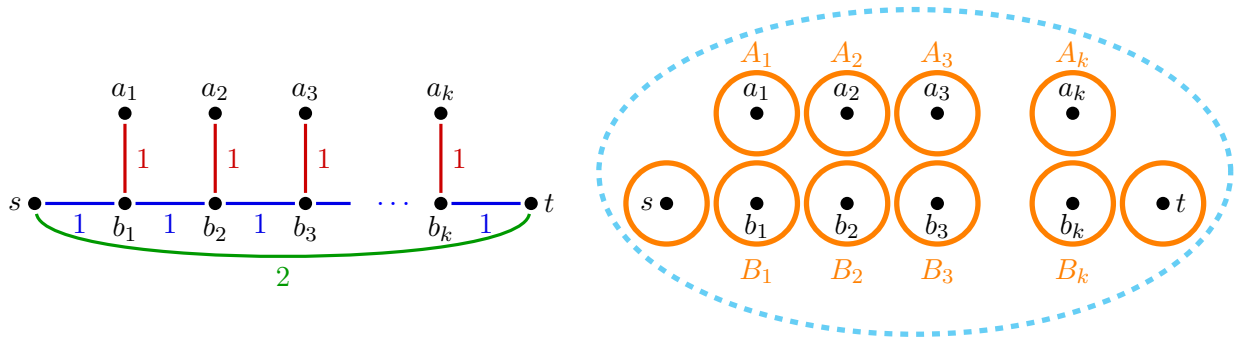

The second hurdle is that, even if the vertex set of every connected
component of $\OPT$ is actively connected, the gain we obtain from it
might be too small. More precisely, the gain might be only half of
$2\sum_{S\subseteq V} y_S$ if the components of $\OPT$ form a matching
on the elements of $\cA^t$.  See \Cref{fig:matching} for an example.
To gain from connected components in $\OPT$ that connect only two sets
from $\cA^t$, we use the second algorithm that profits from autarkic
collections, which we define next.

\section{Autarkic Collections}
\label{sec:autarkic_pairs}

We start off with defining the notion of autarkic pairs from \cite{AGHJM25},
which we then extend to triples to obtain our improved bounds. The
concept of an autarkic pair is motivated by the example in
\Cref{fig:matching}. To handle such instances, we would like to
identify a suitable subset of the demand pairs---in this case the
demand pairs $\{a_1, b_1\}, \dots, \{a_k, b_k\}$---contract them, and
apply a $2$-approximation algorithm to the resulting instance. Then
we obtain a solution to the original instance by adding a shortest
path between any two contracted vertices.

Now consider a variant of the same example containing multiple copies
of each demand pair $\{a_i,b_i\}$, where all copies of $a_i$ are at
distance $\delta \approx 0$ from each other, and the same for copies
of $b_i$. We cannot afford paying for a separate shortest path for
each copy, so we would want to contract a single one of these copies.
Thus, autarkic pairs are defined not as demand pairs, but as pairs of
sets (e.g., the pairs $\{A_i, B_i\}$ in the right figure); and we
later choose one demand pair as the designated representative for the
autarkic pair.  Only these designated representatives are contracted
to obtain the residual instance.

We now formalize this, and generalize it to triples. Given a set
$S \in \supp(y)$, define $\sep(S)$ to be the set of demand pairs
separated by $S$---i.e.,
\[ \sep(S) := \{ \{a_i, b_i\} \in \cD \mid |\{a_i,
b_i\} \cap S| = 1\}. \]

\begin{definition}[Autarkic Pairs]
  For sets $A, B \in \supp(y)$, define $P = \{A,B\}$ to be an
  \emph{autarkic pair} if $A \cap B = \emptyset$, both $A,B \in \cA^t$
  for some time $t$, and $\sep(A) = \sep(B) \neq \emptyset$.  We overload notation
  and define $\sep(P) := \sep(A) \cup \sep(B) = \sep(A)$.
\end{definition}

For our work, we generalize this notion to triples:
\begin{definition}[Autarkic Triples]
  Define $P = \{A_1,A_2,A_3\}$ to be an \emph{autarkic triple} if
  \begin{enumerate}[nosep]
  \item the sets $A_i$ are disjoint,
  \item there exists a time $t$ such that $A_i \in \cA^t$ for all $i$
    (and hence each $A_i \in \supp(y)$),
  \item $\sep(A_i) \neq \emptyset$, but
  \item $\sep(A_1 \cup A_2 \cup A_3) = \emptyset$.
  \end{enumerate}
  Define the separated pairs for $P$ to be
  $\sep(P) := \cup_i \sep(A_i)$.
\end{definition}
In words, the demands individually separated by these sets $A_i$ cross
between these sets, but there are no demands going outside their
union. We use the term \emph{autarkic tuples} to denote both pairs and
triples, and the term \emph{autarkic collection} for a collection of
such tuples.
  
\begin{definition}[Crossing-Free Collection]
  An autarkic collection $\cP$ is
  \emph{crossing-free} if %
  $\sep(P) \cap \sep(Q) = \emptyset$ for all $P\neq Q$ with $P,Q\in \mathcal{P}$.
\end{definition}

\begin{definition}[Coverage of $\cP$]
  We say that a set $U \in \supp(y)$ is \emph{covered by} a
  crossing-free autarkic collection $\cP$ if there exists a
  $P \in \cP$ and some set $A_i \in P$ such that
  $\sep(A_i) = \sep(U)$.  Given $y$, define the coverage of $\cP$ as
  \begin{gather}
    \coverage(\cP) := \sum_{U \sse V: \cP \text{ covers } U} y_U.
  \end{gather}
\end{definition}

\subsection{Connecting Autarkic Tuples and Residual Instances}
\label{sec:conn-autark}

For each autarkic tuple $P$, we define the \emph{connector} of $P$ as follows:
\begin{enumerate}
\item For each pair $P \in \cP$, pick a demand pair
  $\{a, b\} \in \sep(P)$ as its \emph{designated representative}.  
  Define the connector for $P$ as the direct edge $\{a,b\}$.
\item For each triple $P = (A_1, A_2, A_3) \in \cP$, for each
  $i \neq j \in \{1,2,3\}$ such that
  $\sep(A_i)\cap \sep(A_j) \neq \emptyset$, choose a demand pair
  $\{a_{ij}, b_{ij}\}$ from $\sep(A_i)\cap \sep(A_j)$ as the
  \emph{designated representative}. Define the connector for $P$ as
  the minimum Steiner tree on these (at least three and at most six)
  vertices which are designated representatives. %
\end{enumerate}

\begin{definition}[Cost and Profit for $\cP$]
  For an autarkic tuple $P$, let $\cost(P)$ be the total length of the
  connector for $P$; for a collection $\cP$, define
  $\cost(\cP) = \sum_{P \in \cP} \cost(P)$. Define
  \begin{gather}
    \profit(\cP) = 2\cdot\coverage(\cP) - \cost(\cP).
  \end{gather}
\end{definition}

\begin{restatable}[Efficient Algorithm]{theorem}{EffAut}
  \label{thm:aut-algo}
  There is an efficient algorithm to find a crossing-free autarkic collection
  $\cP$ that maximizes $\profit(\cP)$. %
\end{restatable}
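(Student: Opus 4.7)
My plan is to enumerate all candidate autarkic tuples, compute their individual profits, and then find a maximum-profit crossing-free subcollection via a combinatorial algorithm that exploits the laminar structure of $\supp(y)$.

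\emph{Enumeration.} Since the $\epsilon$-extended moat-growing algorithm produces a laminar support $\supp(y)$ of size $O(|V|)$, there are only $O(|V|^3)$ candidate autarkic pairs and triples. For each candidate I can check the defining conditions (pairwise disjointness of the constituent sets, a common active time read off from the moat-growing algorithm, nonempty $\sep(A_i)$, and for triples also $\sep(A_1\cup A_2\cup A_3)=\emptyset$) in polynomial time. I then compute $\coverage(P)$ by grouping the sets in $\supp(y)$ according to their $\sep$-signature and summing $y_U$ within each matching class, and I compute $\cost(P)$ as either a shortest path (for pairs) or a minimum Steiner tree on the at most six designated representatives (for triples), the latter computable via Dreyfus--Wagner~\cite{DreyfusW71} in polynomial time.

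\emph{Reduction to set packing.} If $\cP$ is crossing-free, no two distinct tuples of $\cP$ share a demand in their $\sep$-sets, so no set $U\subseteq V$ can be covered by two distinct tuples of $\cP$: a common cover would require $\sep(A_i)=\sep(B_j)$ for some $A_i\in P$ and $B_j\in Q$, contradicting $\sep(P)\cap\sep(Q)=\emptyset$ since $\sep(A_i)$ is nonempty by the autarkic conditions. Consequently $\coverage(\cP)=\sum_{P\in\cP}\coverage(P)$ and $\profit(\cP)=\sum_{P\in\cP}\profit(P)$. Maximizing $\profit(\cP)$ therefore reduces to a weighted set-packing problem over the polynomially many candidates: pick a subset of maximum total profit such that every demand in $\cD$ is used by at most one selected tuple.

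\emph{Exploiting laminarity.} I would solve this set-packing by dynamic programming on the laminar tree of $\supp(y)\cup\{V\}$. The key structural fact is that for every autarkic tuple $P$, both endpoints of each demand in $\sep(P)$ lie in $\bigcup_i A_i$: this follows for triples from $\sep(A_1\cup A_2\cup A_3)=\emptyset$ and for pairs from $\sep(A)=\sep(B)$ combined with $A\cap B=\emptyset$. Hence each tuple is naturally anchored at the unique minimal $U_P\in\supp(y)\cup\{V\}$ containing all of its constituent sets, and any two conflicting tuples must have comparable anchors in the laminar order. A bottom-up sweep then computes, for each laminar node $U$, the best profit from crossing-free subcollections whose tuples are anchored in the subtree rooted at $U$, combining children's optima with the best choice of either skipping $U$ or selecting a single tuple anchored at $U$.

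\emph{Main obstacle.} The delicate step is that a single tuple anchored at a high node can consume many demands that would otherwise be available to descendant tuples, so the DP at $U$ must coordinate demand usage with its subtrees. I would address this by processing demands according to the laminar node that is the least common ancestor of their endpoints; this localizes each demand to a unique level and permits a polynomial DP that decides, per demand, whether it is consumed by a tuple anchored at its LCA level or reserved for an ancestor tuple. If a direct DP proves unwieldy, a natural alternative is to write the natural set-packing LP and prove integrality via a laminar-tree primal-dual argument, or to reduce the problem to a min-cost flow on an auxiliary network built from the laminar tree; the strong laminar structure makes me optimistic that one of these routes goes through.
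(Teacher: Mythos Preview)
Your setup---polynomial enumeration of candidate tuples, additivity of $\profit$ over crossing-free collections, and the observation that any two crossing tuples have comparable anchors in the laminar tree of $\supp(y)\cup\{V\}$---is correct and matches the paper. The gap is precisely the dynamic program. Your recurrence (``either skip $U$ or select a single tuple anchored at $U$, then combine children's optima'') does not go through: when you select a tuple $Q$ anchored at $U$, some tuples anchored strictly below $U$ may cross $Q$ and others may not, so you cannot simply reuse the children's precomputed optima. You are candid that this is the ``Main obstacle,'' but none of your proposed fixes (LCA-level demand assignment, LP integrality, min-cost flow) is worked out, and this is exactly the crux of the argument; as stated, the proof is incomplete.

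The paper resolves this by putting a partial order \emph{directly on the autarkic tuples}: write $P\subseteq Q$ if every constituent set of $P$ lies inside some constituent set of $Q$. Two structural lemmas then do all the work. First, any two autarkic tuples satisfy $P\subseteq Q$, $Q\subseteq P$, or $P\cap Q=\emptyset$---so the tuples themselves form a laminar family, a strictly finer statement than the laminarity of $\supp(y)$ you use. Second, crossing is monotone along this order: if $P_1\subseteq P_2\subseteq P_3$ and $P_1$ crosses $P_3$, then $P_2$ also crosses $P_3$. With these in hand the DP is clean: for each tuple $Q$ (processed from small to large) compute the best crossing-free subcollection of $\{P:P\subseteq Q\}$; in the case $Q$ is not chosen, combine the optima over the maximal proper sub-tuples (which are pairwise disjoint by laminarity); in the case $Q$ is chosen, combine $\{Q\}$ with the optima over the maximal sub-tuples that do \emph{not} cross $Q$---crossing-monotonicity guarantees every tuple below them is also non-crossing with $Q$, so the union is automatically crossing-free. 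This is the idea your DP is reaching for, but placed at the right level of abstraction so that the ``main obstacle'' disappears.
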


The main idea of this algorithm is to show a ``laminar'' structure on
the autarkic tuples, which allows us to then find the best
crossing-free collection using dynamic programming; we defer the proof
to \Cref{sec:aut-algo}.  Given a crossing-free autarkic collection
$\cP$, we define the contracted instance $\cI/\cP$ as the instance
obtained by contracting the connectors for each $P \in \cP$, and the
cost $\cost(\cP) := \sum_{P \in \cP} \cost(P)$ as the sum of costs of
the connectors.  The following key lemma quantifies the gain
from~$\cP$:

\begin{restatable}{lemma}{UsingAut}
  \label{lem:autarkic_pairs}
  Let $\cP$ be a crossing-free autarkic collection.  Then
  there exists a solution for the contracted instance $\cI/\cP$ of
  cost at most
  \begin{gather}
    c(\OPT(\cI)) - {\rm coverage}(\cP) + \sum_{\substack{S\subseteq
        V:\\ |\OPT\cap \delta(S)|\geq 2}} y_S. \label{eq:1}
  \end{gather}
\end{restatable}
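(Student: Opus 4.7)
The plan is to exhibit a feasible solution $F$ for $\cI/\cP$ by modifying the optimum $\OPT := \OPT(\cI)$. Let $C$ denote the union of the connectors of the tuples in $\cP$, and set $F := \OPT \setminus (E' \cup C)$ for a carefully chosen $E' \subseteq \OPT \setminus C$. Then $F \subseteq E \setminus C$, its feasibility for $\cI/\cP$ is equivalent to $(\OPT \setminus E') \cup C$ being feasible for $\cI$, and its cost satisfies $c(F) \leq c(\OPT) - c(E')$. Writing $\Lambda := \sum_{S : |\OPT \cap \delta(S)| \geq 2} y_S$, it therefore suffices to produce $E'$ satisfying (i) $c(E') \geq \coverage(\cP) - \Lambda$ and (ii) $(\OPT \setminus E') \cup C$ is feasible for $\cI$.

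For (i), I would put into $E'$, for every $U \in \supp(y)$ covered by $\cP$ with $|\OPT \cap \delta(U)| = 1$, the unique edge of $\OPT \cap \delta(U)$. Summing the dual-feasibility constraints $c_e \geq \sum_{U' : e \in \delta(U')} y_{U'}$ over $e \in E'$ gives
\begin{align*}
c(E') \;\geq\; \sum_{U'} y_{U'}\,|\delta(U') \cap E'| \;\geq\; \sum_{\substack{U \text{ covered}\\ |\OPT \cap \delta(U)|=1}} y_U \;=\; \coverage(\cP) - \sum_{\substack{U \text{ covered}\\ |\OPT \cap \delta(U)|\geq 2}} y_U \;\geq\; \coverage(\cP) - \Lambda,
\end{align*}
where the equality uses that crossing-freeness forces each $U \in \supp(y)$ to be covered by at most one tuple of $\cP$ (otherwise one would have $\emptyset \neq \sep(U) \subseteq \sep(P_1) \cap \sep(P_2)$).

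For (ii), fix any cut $W$ with $\sep(W) \neq \emptyset$; it suffices to show either $\delta(W) \cap C \neq \emptyset$ or $\OPT \cap \delta(W) \not\subseteq E'$. If some demand $d \in \sep(W)$ is a designated representative of some tuple $P \in \cP$, then the connector of $P$ (a single edge for a pair, a Steiner tree for a triple) contains a path between the endpoints of $d$ and therefore must cross $W$, giving $\delta(W) \cap C \neq \emptyset$. Otherwise no designated representative is separated by $W$, and I would argue $\OPT \cap \delta(W) \not\subseteq E'$ by contradiction: assuming each $e \in \OPT \cap \delta(W)$ is the unique $\OPT$-edge of a covered cut $U_e$, one should combine the laminarity of $\supp(y)$, the crossing-free property (which makes $\sep(U_e) \cap \sep(U_{e'}) = \emptyset$ across distinct covering tuples), the minimality of $\OPT$, and the autarkic-tuple structure to align the $U_e$'s with $W$ in such a way that some $U^* \in \supp(y)$ covered by $\cP$ has $\sep(U^*) = \sep(W)$; this would produce a designated representative in $\sep(W)$, contradicting our case assumption.

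The main obstacle is this last subcase when $|\OPT \cap \delta(W)| \geq 2$: removing $\OPT \cap \delta(W)$ splits $\OPT$ into several subtrees, and one needs a careful uncrossing argument aligning the covered cuts $U_e$ with these subtrees. The crossing-free property is essential here, since it rules out configurations in which two $U_e$'s ``overshoot'' $W$ on opposite sides while being covered by different tuples; once such overshoots are excluded by the disjointness of separated-demand sets across tuples, the laminar structure of $\supp(y)$ pins down an actual covered $U^*$ matching $\sep(W)$.
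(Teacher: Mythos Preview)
Your setup matches the paper exactly: the paper defines $D = E'$ in the same way and proves your cost bound~(i) identically. The divergence is in~(ii): the paper argues feasibility \emph{path by path} (for each demand $\{v,w\}$, it explicitly reroutes the $\OPT$-path through a connector), whereas you attempt a cut-based argument. The two viewpoints are logically equivalent, but your Case~B has a genuine gap.

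Your intended contradiction---producing a covered $U^* \in \supp(y)$ with $\sep(U^*) = \sep(W)$---cannot work as stated: $W$ is an arbitrary separating cut, and there is no reason $\sep(W)$ should coincide with $\sep(U)$ for \emph{any} $U$ in the laminar support. (This does hold when $|\OPT \cap \delta(W)| = 1$, by minimality of $\OPT$, but not otherwise.) What you actually need is the weaker conclusion that some designated representative lies in $\sep(W)$, and the route to that does not go through matching $\sep(W)$ exactly. The paper obtains this by fixing a demand $\{v,w\} \in \sep(W)$, locating the unique tuple $P^*$ with $\{v,w\} \in \sep(P^*)$ (uniqueness by crossing-freeness), and then observing that \emph{all} edges of $D$ on the $\OPT$-paths $R$ of $\{v,w\}$ and $R^*$ of its designated counterpart $\{v^*,w^*\}$ come from a single chain $\cC = \{U : \sep(U) \in \{\sep(A_1),\sep(A_2)\}\}$; crossing-freeness is precisely what forces $R \cap (D \setminus D^*) = \emptyset$. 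Peeling this chain from the side of $v,v^*$ shows that $(R \cup R^*) \setminus D$ connects $v$ to $v^*$ and $w$ to $w^*$. From this it follows at once that $v^*$ and $w^*$ lie on the same sides of $W$ as $v$ and $w$ (since $(\OPT \setminus D) \cap \delta(W) = \emptyset$ under your hypothesis), hence $\{v^*,w^*\} \in \sep(W)$. Your sketch gestures at laminarity and crossing-freeness but misses this chain-reduction step, which is the real content of the argument; the ``uncrossing of several $U_e$'s'' you propose for $|\OPT \cap \delta(W)| \geq 2$ never materializes because the paper's proof avoids it entirely by working one demand pair at a time.
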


\begin{proof}
  Consider any set $U\subseteq V$ covered by $\cP$.  Then OPT contains
  at least one edge in $\delta(U)$.  Let $D$ be the set of edges
  $e \in \OPT$ such that for some set $U\subseteq V$ covered by $\cP$,
  we have $\OPT\cap \delta(U)=\{e\}$.  We claim that $\OPT\setminus D$
  is a feasible solution to the contracted instance $\cI / \cP$.
  Because $y$ is a feasible dual LP solution, we have
  \[
    c(D) \ \geq\ {\rm coverage}(\cP) - \sum_{\substack{U\subseteq V:\\
        |\OPT\cap \delta(U)|\geq 2}} y_U. 
  \]
  Thus, it remains to prove that $\OPT\setminus D$ is indeed feasible
  for the contracted instance $\cI / \cP$.

  We consider some demand pair $\{v,w\}$ and show that
  $\OPT\setminus D$ contains a $v$-$w$ path in the contracted
  instance.  Because $\OPT$ is a feasible solution for the instance
  $\cI$, it contains some $v$-$w$ path $R$.  If there is some edge
  $e\in R\cap D$, then there must be some set $U\in \supp(y)$ that is
  covered by $\cP$ and satisfies $\OPT \cap \delta(U) =\{e\}$.  The
  $v$-$w$ path $R$ in $\OPT$ contains exactly one edge in $\delta(U)$,
  implying $\{v,w\}\in \sep(U)$.  Therefore, $U$ must be covered by
  some autarkic tuple $P^*\in \cP$ with $\{v,w\}\in \sep(P^*)$; this
  tuple is unique, due to $\cP$ being crossing-free. Moreover, by
  definition of autarkic tuples, we have sets $A_1,A_2\in P^*$ with
  $\{v,w\}\in \sep(A_1)\cap \sep(A_2)$, where without loss of
  generality $v\in A_1$ and $w\in A_2$.  Consider the collection $\cC$
  of sets $U\in \supp(y)$ with $\sep(U)=\sep(A_1)$ or
  $\sep(U)=\sep(A_2)$.  We write $D^*\subseteq D$ to denote the set of
  edges $e \in \OPT$ such that for some set $U$ in this subcollection
  $\cC$, we have $\OPT\cap \delta(U)=\{e\}$. Observe that
  $R\cap (D\setminus D^*)=\emptyset$.

  Recall from \Cref{sec:conn-autark} that the instance $\cI/\cP$ is
  based on contracting the connectors built on the designated
  representatives. Let $\{v^*,w^*\}\in \sep(A_1)\cap \sep(A_2)$ be the
  designated representative with $v^*\in A_1$ and $w^*\in A_2$.  Then
  the $v^*$-$w^*$ path $R^*$ satisfies
  $R^*\cap (D\setminus D^*)=\emptyset$ by the same argument as for the
  $v$-$w$ path $R$.  We claim that $(R\cup R^*)\setminus D^*$ contains
  a $v$-$v^*$ path---and by symmetry, $(R\cup R^*)\setminus D^*$ also
  contains a $w$-$w^*$ path. This claim will complete the proof
  because $v^*$ and $w^*$ are contracted in $\cI/\cP$.

  To prove the claim, observe that because $\supp(y)$ is laminar, the
  cuts induced by the collection $\cC$ correspond to a chain (up to
  taking complements of sets).  Recall that $A_1,A_2\in \cC$ with
  $v,v^*\in A_1$ and $w,w^*\in A_2$ and moreover, both $\{v,w\}$ and
  $\{v^*,w^*\}$ are separated by each set $U\in \cC$.  Thus, every set
  $U\in \cC$ separates $v$ and $v^*$ from $w$ and $w^*$.  Consider the
  cut $C$ induced by $\mathcal{C}$ with $|\OPT\cap C|=1$ that is
  closest to $v$ and $v^*$.  Let $u$ be the endpoint of the unique
  edge $e \in\OPT \cap C$ that is on the same side of the cut as $v$
  and $v^*$.  Then because $e$ is the only edge in $\OPT\cap C$ and
  because $C$ separates $v$ and $v^*$ from $w$ and $w^*$, both paths
  $R$ and $R^*$ must contain the edge $e$.  Moreover, the $v$-$u$
  subpath of $R$ and the $v^*$-$u$ subpath of $R^*$ do not contain any
  edge of the cut $C$.  Because $C$ was chosen as the closest cut to
  $v$ and $v^*$ among all cuts induced by $\mathcal{C}$ that contain
  only one edge from $\OPT$, we conclude that neither of these two
  subpaths contains an edge of $D^*$.  Hence, both $v$ and $v^*$ are
  connected to $u$ in $(R\cup R^*)\setminus D^*$, which implies that
  $(R\cup R^*)\setminus D^*$ contains a $v$-$v^*$ path. This completes
  the proof of the claim, and hence of \Cref{lem:autarkic_pairs}.
\end{proof}

Now combining \Cref{lem:autarkic_pairs} with a $2$-approximation
algorithm for the contracted instance, and recalling from
\eqref{eq:trivial_lower_bound} that the rightmost sum in~(\ref{eq:1})
is denoted by $\Lambda$, which is at most $\excess(\OPT)$, we get:
\begin{corollary}[Profit from Autarkic Collections]
  \label{cor:autarkic_pairs}
  For a crossing-free autarkic collection $\cP$, there is an
  efficient algorithm to find a Steiner forest of cost 
  \begin{gather}
 2\, c(\OPT) ~~-~~\underbrace{(2\,\coverage(\cP) - \cost(\cP))}_{=
      \profit(\cP)} ~~+~~ 2\excess(\OPT). \label{eq:2}
  \end{gather}
\end{corollary}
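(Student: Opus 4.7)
The corollary essentially says that combining \Cref{lem:autarkic_pairs} with any $2$-approximation for Steiner forest gives a feasible solution to the original instance with the claimed cost. The proof is a short chain of inequalities; I outline the four ingredients and how to assemble them.

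First, I would apply \Cref{lem:autarkic_pairs} to the crossing-free autarkic collection $\cP$. This produces a feasible solution $F'$ to the contracted instance $\cI/\cP$ of cost at most $c(\OPT(\cI)) - \coverage(\cP) + \Lambda$, where $\Lambda = \sum_{S : |\OPT\cap \delta(S)|\geq 2} y_S$ is the notation introduced in~\eqref{eq:trivial_lower_bound}. In particular, this bound shows that $c(\OPT(\cI/\cP)) \leq c(\OPT(\cI)) - \coverage(\cP) + \Lambda$, so any $2$-approximation applied to $\cI/\cP$ will return a solution of cost at most $2\big(c(\OPT(\cI)) - \coverage(\cP) + \Lambda\big)$.

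Second, I would invoke the classical AKR/GW $2$-approximation (a special case of \Cref{thm:AKR-GW} applied to the contracted instance with $\eps = 0$, or simply the algorithm of \cite{AKR95,GW95}) to obtain, in polynomial time, such a solution $\widehat{F}$ to $\cI/\cP$. Third, I would lift $\widehat{F}$ to a feasible solution of the original instance $\cI$ by taking $\widehat{F}$ together with the connectors of all tuples $P\in \cP$; this lifted edge set connects every demand pair of $\cI$ (since connectors were contracted in $\cI/\cP$) and has cost at most
\[
2\big(c(\OPT(\cI)) - \coverage(\cP) + \Lambda\big) + \cost(\cP).
\]
Finally, I would use the bound $\Lambda \leq \excess(\OPT)$ from~\eqref{eq:trivial_lower_bound} and rearrange to obtain
\[
2\, c(\OPT) - \big(2\coverage(\cP) - \cost(\cP)\big) + 2\excess(\OPT),
\]
which, by definition of $\profit(\cP)$, is precisely the claimed bound~\eqref{eq:2}.

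There is no real obstacle here: all the work has already been done in \Cref{lem:autarkic_pairs} (which produces a solution for the contracted instance whose cost is controlled by $\coverage(\cP)$ and $\Lambda$) and in~\eqref{eq:trivial_lower_bound} (which replaces $\Lambda$ by $\excess(\OPT)$). The only mild sanity check to carry out is that lifting $\widehat{F}$ via the connectors indeed yields a feasible Steiner forest for $\cI$; this follows immediately from the definition of $\cI/\cP$, since contracting the connectors is exactly the operation that was inverted.
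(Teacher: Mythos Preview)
Your proposal is correct and follows essentially the same approach as the paper: the paper simply states that the corollary follows by combining \Cref{lem:autarkic_pairs} with a $2$-approximation on the contracted instance and the bound $\Lambda \leq \excess(\OPT)$ from~\eqref{eq:trivial_lower_bound}, and your write-up spells out exactly this chain of inequalities with the lifting step made explicit.
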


If $\excess(\OPT)$ is large, we get an improved lower bound on $c(\OPT)$, and hence a better-than-two
approximation ratio even for the $\eps$-extended algorithm. 
Otherwise, the last summand is negligible; in this case, we gain if
$\profit(\cP)$ is large. 
We now prove that we can efficiently find a crossing-free autarkic collection with maximum profit,
before we explicitly state the final algorithm for Steiner Forest.

\subsection{Finding a Maximum-Profit Autarkic Collection}
\label{sec:aut-algo}

In this section we prove \Cref{thm:aut-algo}. 
To this end, we first show that the set of autarkic pairs has a ``laminar"
structure.  To formalize this, we introduce the following partial
order on the autarkic tuples:

\begin{definition}
  Let $P$ and $Q$ be autarkic tuples with $P=\{A_1,A_2\}$ or
  $P=\{A_1,A_2, A_3\}$, and $Q=\{B_1,B_2\}$ or $Q=\{B_1,B_2,B_3\}$.
  Then we write $P\subseteq Q$ if every set $A_i$ is contained in some
  set $B_j$.  Moreover, we write $P\cap Q = \emptyset$ if 
  $A_i \cap B_j = \emptyset$ for all $i,j$.
\end{definition}

Note that the $\subseteq$ relation is transitive: if
$P_1 \subseteq P_2$ and $P_2 \subseteq P_3$ for three autarkic tuples
$P_1, P_2, P_3$, then we also have $P_1 \subseteq P_3$.  Moreover, if
$P_1 \subseteq P_2$ and $P_2 \cap P_3 =\emptyset$, then
$P_1 \cap P_3 = \emptyset$.  Finally, $P_1 \subseteq P_2$ and
$P_2 \subseteq P_1$ implies $P_1 = P_2$.

Now we can observe that the set of autarkic pairs indeed has a laminar structure.

\begin{lemma}\label{lem:aut-laminar}
  Let $P$ and $Q$ be autarkic tuples.  Then we have $P\subseteq Q$, or
  $Q\subseteq P$, or $P\cap Q = \emptyset$.
\end{lemma}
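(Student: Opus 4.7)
The plan is to assume $P \cap Q \neq \emptyset$ and deduce that either $P \subseteq Q$ or $Q \subseteq P$. Since $\supp(y)$ is laminar, a nonempty intersection $A_i \cap B_j \neq \emptyset$ between some $A_i \in P$ and $B_j \in Q$ forces $A_i \subseteq B_j$ or $B_j \subseteq A_i$. By the symmetry between $P$ and $Q$, it suffices to treat the case $A_i \subseteq B_j$ and conclude $P \subseteq Q$.

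The first main step is a time argument. Pick times $t_P$ and $t_Q$ at which every set in $P$, respectively $Q$, is simultaneously an active component; these exist by the definition of an autarkic tuple. Since connected components only grow (by merging) during the moat-growing process, and $A_i$ is a component at time $t_P$ while $B_j \supseteq A_i$ is a component at time $t_Q$, we must have $t_P \leq t_Q$. Consequently, for each $A_k \in P$ the vertices of $A_k$, which form the single component $A_k$ at time $t_P$, lie at time $t_Q$ in a single component $C_k \supseteq A_k$. Because at time $t_Q$ the $B_\ell$'s are pairwise disjoint maximal components, $C_k$ is either equal to some $B_\ell$ (in which case $A_k \subseteq B_\ell$; call $A_k$ \emph{inside}) or disjoint from $B_1 \cup \cdots \cup B_q$ (call $A_k$ \emph{outside}). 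By construction $A_i$ is inside.

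The second main step is to show that every $A_k$ is inside, which is exactly $P \subseteq Q$. To this end I consider the \emph{partner graph} on the sets in $P$, with an edge between $A_r$ and $A_s$ whenever some demand pair has one endpoint in $A_r$ and the other in $A_s$. Since $\sep(A_k) \neq \emptyset$ and, by the autarkic property---$\sep(A_1) = \sep(A_2)$ for a pair, or $\sep(A_1 \cup A_2 \cup A_3) = \emptyset$ for a triple---the partner of any vertex in $A_k$ separated by $A_k$ must lie in some other $A_m$, the partner graph has minimum degree at least $1$ on $p \in \{2,3\}$ vertices and is therefore connected. If some $A_k$ were outside while $A_i$ is inside, connectedness would yield an edge between an inside set $A_r \subseteq B_\ell$ and an outside set $A_s$. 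A witnessing demand $\{a, b\}$ then satisfies $a \in A_r \subseteq B_\ell$ while $b \in A_s$ and $b \notin B_1 \cup \cdots \cup B_q$.

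This contradicts the autarkic structure of $Q$: if $Q$ is a pair $\{B_\ell, B_{\ell'}\}$, then $\{a,b\} \in \sep(B_\ell) = \sep(B_{\ell'})$ forces an endpoint in $B_{\ell'}$, which is impossible since $a \in B_\ell$ and $b \notin B_1 \cup B_2$; if $Q$ is a triple, $\{a,b\}$ separates $B_1 \cup B_2 \cup B_3$, contradicting $\sep(B_1 \cup B_2 \cup B_3) = \emptyset$. The conceptual heart is the time argument combined with the connectivity of the partner graph; the main obstacle I anticipate is keeping the pair/triple bookkeeping clean across the four possible combinations of $P$ and $Q$, and checking that the degenerate boundary case $A_i = B_j$ causes no trouble when propagating ``inside-ness'' along partner-graph edges.
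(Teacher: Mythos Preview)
Your approach is essentially the paper's: both use the time comparison (when $P$ and $Q$ are each simultaneously active) to establish that every $A_k$ is either contained in some $B_\ell$ or disjoint from all of them, then locate a demand pair with one endpoint inside some $B_\ell$ and the other outside all of them, contradicting the autarkic property of $Q$. Your partner-graph abstraction is a tidy way to unify the pair/triple cases that the paper handles by explicit case analysis.

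There is one ordering issue. You first fix $A_i \subseteq B_j$ and then \emph{deduce} $t_P \leq t_Q$; this inference fails when $A_i = B_j$, since the same set can be an active component over an interval of times, so $t_P$ and $t_Q$ may be ordered either way. And if in fact $t_P > t_Q$, your inside/outside dichotomy for the $A_k$'s is false (some $A_k$ may strictly contain a $B_\ell$), so the argument as written need not yield $P \subseteq Q$. The fix, which is exactly what the paper does, is to reverse your first two steps: first assume without loss of generality $t_P \leq t_Q$ (this is where the symmetry between $P$ and $Q$ should be invoked), and then every $A_k$ is automatically contained in some $B_\ell$ or disjoint from all of them. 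Your closing remark anticipates trouble at $A_i = B_j$, but the delicacy is in the time step, not in the partner-graph propagation.
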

\begin{proof}
  Let $P=\{A_1,A_2\}$ or $P=\{A_1,A_2, A_3\}$, and $Q=\{B_1,B_2\}$ or
  $Q=\{B_1,B_2,B_3\}$.  By the definition of an autarkic tuple there
  exist times $t(P)$ and $t(Q)$, such that all sets $A_i$ are active
  at time $t(P)$ and all sets $B_i$ are active at time $t(Q)$.
  Without loss of generality, we have $t(P) \leq t(Q)$.  Then for
  every set $A_i$ and every set $B_j$, we have either
  $A_i \cap B_j = \emptyset$ or $A_i \subseteq B_j$.

  If all sets $A_i$ are disjoint from all sets $B_j$, we have
  $P\cap Q=\emptyset$.  Moreover, if every set $A_i$ is contained in
  some set $B_j$, we have $P\subseteq Q$.  So assume that neither of
  these is the case.  Then, without loss of generality,
  $A_1 \subseteq B_1$ and $A_2$ is disjoint from all sets $B_j$.

  We claim that there is some demand pair $\{v,w\}$ such that $v$ is
  contained in some set $B_j$ and $w$ is contained in none of the sets
  $B_j$.  From this claim, we derive a contradiction as follows.  If
  $Q$ is an autarkic pair, the claim implies that $\{v,w\}$ belongs to
  exactly one of $\sep(B_1)$ or $\sep(B_2)$ and not to both of them,
  contradicting $\sep(B_1)=\sep(B_2)$.  Otherwise, $Q$ is an autarkic
  triple and $\{v,w\}$ belongs to $\sep(B_1\cup B_2 \cup B_3)$,
  contradicting $\sep(B_1\cup B_2 \cup B_3)=\emptyset$.

  Now it remains to show the existence of the claimed demand pair
  $\{v,w\}$.  First, consider the case that $P$ is an autarkic
  \emph{pair}.  Since $\sep(A_1) =\sep(A_2) \neq \emptyset$, there is
  a demand pair $\{v,w\}$ with $v\in A_1 \subseteq B_1$ and
  $w\in A_2$.  Because $A_2$ is disjoint from all sets $B_j$, the
  vertex $w$ is indeed not contained in any set $B_j$.

  Next, we consider the remaining case that $P$ is an autarkic
  \emph{triple}.  Suppose $A_3$ is also disjoint from all sets $B_j$:
  then let $\{v,w\}$ be some demand pair in the non-empty set
  $\sep(A_1)$ with $v\in A_1$.  Then because
  $\sep(A_1\cup A_2\cup A_3) =\emptyset$, we have $w\in A_2\cup A_3$
  and thus $w$ is indeed not contained in any set $B_j$.  Otherwise,
  $A_3$ is contained in some set $B_j$ and we let $\{v,w\}$ be some
  demand pair in the non-empty set $\sep(A_2)$ with $w\in A_2$.  In
  particular, $w$ is not contained in any set $B_j$.  Moreover,
  because $\sep(A_1\cup A_2\cup A_3) =\emptyset$, we have
  $v\in A_1\cup A_3$ and thus $v$ is contained in some set $B_j$.
\end{proof}

\begin{lemma}\label{lem:crossing-monotone}
  Let $P_1,P_2,P_3$ be autarkic tuples, with 
  $P_1\subseteq P_2 \subseteq P_3$. Suppose $\sep(P_1) \cap \sep(P_3)
  \neq \emptyset$. Then $\sep(P_2) \cap \sep(P_3) \neq \emptyset$.
\end{lemma}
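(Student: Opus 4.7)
The plan is to fix a witness demand pair $d=\{v,w\} \in \sep(P_1) \cap \sep(P_3)$ and prove $d \in \sep(P_2)$; this immediately yields $d \in \sep(P_2) \cap \sep(P_3)$, which is what we want. The argument will follow $v$ and $w$ down the chain $P_1 \subseteq P_2 \subseteq P_3$, exploiting pairwise disjointness of the component sets of each individual autarkic tuple.

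First I would show that both endpoints of $d$ lie in $\bigcup_{A \in P_1} A$. If $P_1=\{A_1,A_2\}$ is an autarkic pair, the equality $\sep(A_1)=\sep(A_2)$ directly gives, WLOG, $v \in A_1$ and $w \in A_2$. If $P_1=\{A_1,A_2,A_3\}$ is an autarkic triple, then some $A_i$ separates $d$, giving WLOG $v \in A_i$, and the defining condition $\sep(A_1 \cup A_2 \cup A_3) = \emptyset$ then forces $w \in A_1 \cup A_2 \cup A_3$ as well. Using $P_1 \subseteq P_2$, every component set of $P_1$ sits inside some component set of $P_2$, so there exist $B_v, B_w \in P_2$ with $v \in B_v$ and $w \in B_w$.

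Next I would rule out $B_v = B_w$: if they coincided in a common set $B$, then $P_2 \subseteq P_3$ would put $B \subseteq C$ for some $C \in P_3$, giving $\{v,w\} \subseteq C$. But $d \in \sep(P_3)$ means some $C' \in P_3$ separates $d$, and pairwise disjointness of the component sets of $P_3$ together with $v \in C$ would force $C' = C$, contradicting $w \in C$. Hence $B_v \neq B_w$, and disjointness within $P_2$ now makes $B_v$ separate $d$, giving $d \in \sep(B_v) \subseteq \sep(P_2)$, as needed.

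The only real nuisance I foresee is the case split between pair and triple variants at each of $P_1, P_2, P_3$; but the two structural ingredients actually used---namely disjointness of the component sets of any single tuple, and the fact that every demand separated by some component set has both endpoints in the union of those sets---hold trivially for pairs and by definition for triples. I therefore do not anticipate any deeper obstacle beyond keeping the bookkeeping of which tuple is a pair versus a triple tidy.
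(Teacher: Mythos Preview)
Your argument is correct and actually establishes the slightly stronger inclusion $\sep(P_1)\cap\sep(P_3)\subseteq\sep(P_2)$, by tracking the fixed witness $d$ through the chain $P_1\subseteq P_2\subseteq P_3$. The paper takes a different route: it does \emph{not} follow a specific witness. Instead, it uses $\sep(P_1)\cap\sep(P_3)\neq\emptyset$ only to deduce the structural fact that the component sets of $P_1$ (and hence of $P_2$, via $P_1\subseteq P_2$) cannot all lie inside a single set of $P_3$; it then picks some $A_i\in P_2$ that sits in a different set of $P_3$ than the remaining sets of $P_2$, and exhibits a \emph{fresh} demand pair from the nonempty $\sep(A_i)$ as the element of $\sep(P_2)\cap\sep(P_3)$. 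Your witness-tracking approach is arguably cleaner and avoids the need to single out a special $A_i$, while the paper's structural argument highlights that the only feature of $P_1$ being used is that its sets spread across at least two sets of $P_3$. Both proofs rely on exactly the two ingredients you identified---pairwise disjointness within each tuple, and that both endpoints of any separated demand lie in the union of the tuple's sets---so your anticipated case split between pairs and triples is indeed the only bookkeeping required.
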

\begin{proof}
  Let $P_3 = \{B_1, B_2\}$ or $P_3=\{B_1, B_2, B_3\}$.  Because
  $P_1 \subseteq P_3$, each set of the autarkic tuple $P_1$ is a
  subset of some $B_j$.  Moreover, since
  $\sep(P_1) \cap \sep(P_3) \neq \emptyset$, not all of the sets in
  $P_1$ can be a subset of the same $B_j$. Thus, using
  $P_1 \subseteq P_2$, we conclude that not all sets from the autarkic
  tuple $P_2$ can be subsets of the same $B_j$.

  Let $P_2=\{A_1, A_2\}$ or $P_2=\{A_1,A_2,A_3\}$.  Then, in
  particular, there is one set $A_i$ of the tuple $P_2$ that must be
  contained in a different set $B_j$ than the other sets from $P_2$.
  Consider a demand pair $\{v,w\}$ in the non-empty set $\sep(A_i)$,
  where $v$ is contained in this set $A_i$.  Then $v$ and $w$ belong
  to different sets $B_j$ and thus $\{v,w\} \in \sep(P_3)$.  Because
  $\{v,w\}$ is also contained in $\sep(A_i) \subseteq \sep(P_2)$, this
  implies that $\sep(P_2) \cap \sep(P_3) \neq \emptyset$.
\end{proof}

We can now give the algorithm to find the profit-maximizing autarkic
collection. We restate \Cref{thm:aut-algo} here for convenience:

\EffAut*

\begin{proof}[Proof of \Cref{thm:aut-algo}]
  If $\mathcal{P}$ is a crossing-free autarkic collection, then for
  every set $U\in \supp(y)$, there is at most one tuple $P\in \cP$
  covering $U$.  Thus,
  $\coverage(\cP) = \sum_{P\in \cP} \coverage(\{P\})$.  Because we
  also have $\cost(\cP) = \sum_{P\in \cP} \cost(\{P\})$, this implies
  $\profit(\cP) = \sum_{P\in \cP} \profit(\{P\})$.  We conclude that it
  suffices to find a crossing-free autarkic collection $\cP$
  maximizing the sum of the profits $\profit(\{P\})$ with $P\in \cP$.

  Because there is only a polynomial number of autarkic tuples, we can
  list all of them in polynomial time.  Moreover, we can compute
  $\profit(\{P\})$ for each autarkic tuple.  Let $\cQ$ be the set of
  all autarkic tuples.  We show how to use the laminarity to
  efficiently compute a crossing-free collection $\cP \subseteq \cQ$
  with maximum profit using a dynamic program.  In our dynamic
  program, for each autarkic tuple $Q\in \cQ$, we compute a
  crossing-free autarkic collection
  \[
    \cP(Q) \subseteq \big\{ P\in \cQ : P \subseteq Q\big\}
  \]
  with maximum profit.  We compute these collections $\cP(Q)$ in an
  order such that for $Q_1 \subseteq Q_2$, we compute $\cP(Q_1)$
  before $\cP(Q_2)$.  Thus, to complete the proof, it suffices to show
  that we can compute $\cP(Q)$ in polynomial time, assuming that we
  are given the collections $\cP(P)$ for all $P\in \cQ$ with
  $P \subset Q$.

  To compute $\cP(Q)$, we compute a maximum-profit solution for the
  two cases ($Q\in \cP(Q)$ or not) and take a solution with maximum
  profit among these two.  If $Q\notin \cP(Q)$, we let
  $P_1, \dots, P_k$ be the autarkic tuples with $P_i\subseteq Q$ that
  are maximal with respect to the partial order $\subseteq$.  By
  \Cref{lem:aut-laminar}, we have $P_i \cap P_j =\emptyset$ for all
  $i\neq j$.  Thus, the sets $\cP(P_i)$ are disjoint and their union
  $\cP(P_1) \cup \cP(P_2) \cup \dots \cup \cP(P_k)$ is crossing-free.
  Therefore, by the profit-maximality of the sets $\cP(P_i)$, this
  union is a maximum-profit subset of
  $\{ P\in \cQ : P \subseteq Q \} \setminus \{Q\}$.

  Now it remains to consider the case $Q\in \cP(Q)$.  Among all
  autarkic tuples $P$ that are not crossing $Q$ (i.e., tuples $P$ such
  that $\sep(P) \cap \sep(Q) = \emptyset$) and satisfy $P\subseteq Q$,
  let $P_1, \dots, P_k$ be those autarkic tuples that are maximal with
  respect to the partial order $\subseteq$.  We claim that
  \[ \cP \coloneqq \{Q\} \cup \cP(P_1) \cup \cP(P_2) \cup \dots \cup
  \cP(P_k) \] is crossing-free and has maximum profit among all
  crossing-free subsets of
  $\{ P\in \cQ : P \subseteq Q \}$ %
  containing $Q$.
  Note that by \Cref{lem:aut-laminar}, we have
  $P_i \cap P_j =\emptyset$ for all $i\neq j$ and thus in particular,
  the sets $\cP(P_i)$ are disjoint.  Thus, the maximality of the
  profit follows directly from the profit-maximality of the sets
  $\cP(P_i)$.  To show that $\cP$ is crossing-free, observe that
  because $P_i \cap P_j =\emptyset$ for all $i\neq j$, the set
  $\cP(P_1) \cup \cP(P_2) \cup \dots \cup \cP(P_k)$ is crossing-free.
  Suppose for the sake of a contradiction that $Q$
  crosses some $P \in \cP(P_i)$.  Then $P \subseteq P_i \subseteq Q$
  and thus by \Cref{lem:crossing-monotone}, the autarkic tuple $P_i$
  must cross $Q$, which contradicts the choice of the tuples $P_i$.
  This completes the proof of correctness for the dynamic program, and
  hence the proof of~\Cref{thm:aut-algo}.
\end{proof}

\section{The Algorithm for Steiner Forest}
\label{sec:main-algo}

Given the machinery developed in the preceding sections, the algorithm
for Steiner forest can be compactly described:

\lineshere
\begin{enumerate}[label=(\arabic*)]
\item \label{step:1} Run the $\eps$-extended moat-growing algorithm to
  define the active components $\cA^t$ and dual values
  $\{y_S\}_{S \sse V}$. Use \Cref{thm:AKR-GW} to get a solution $F_1$.

\item Start with the duals from Step~\ref{step:1} and use
  \Cref{thm:effic-contraction} %
  get a
  solution $F_2$.

\item Again, start with the duals from Step~\ref{step:1}, and use
  \Cref{thm:aut-algo} to find a maximum-profit autarkic collection
  $\cP$. Buy the edges in the connectors of tuples in $\cP$, and run a
  $2$-approximation algorithm on the contracted instance
  $\cI/\cP$. This gives the solution $F_3$.

\item Among these three solutions, return the solution with least cost. 
\end{enumerate}

\lineshere

We show that, given the dual solution $y$:
\begin{theorem}\label{thm:win-win-win}
  There exists a collection $\cS$ of actively connected vertex sets,
  and a crossing-free autarkic collection $\cP$ such that
  \[
    \frac{10}{3}\cdot (\gain(\cS) - \cost(\cS)) + (2\coverage(\cP) -
    \cost(\cP)) + \Big(\frac{1}{2\epsilon}+17\Big)
    \cdot \excess(\OPT) \ \geq\ y(\Usep).
  \]
\end{theorem}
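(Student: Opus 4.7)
The plan is to construct the collections $\cS$ and $\cP$ directly from the structure of $\OPT$ itself, and then verify the inequality by an integrated-over-time accounting. I would start by decomposing $y(\Usep)=\int_0^\infty|\cA^t_{\mathrm{dem}}|\,dt$, where $\cA^t_{\mathrm{dem}}$ denotes the demand-active components at time $t$, and using $y(\Uunsep)=\eps\cdot y(\Usep)$. By the minimality of $\OPT$, every budget-active $S$ has $|\OPT\cap\delta(S)|\in\{0\}\cup\{2,3,\ldots\}$, while every demand-active $S$ has $|\OPT\cap\delta(S)|\geq 1$. The integral over $t$ of the active sets with $|\OPT\cap\delta(S)|\geq 2$ equals $\Lambda\leq\excess(\OPT)$ and is absorbed into the last term of the target inequality. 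Thus the whole game is to charge the remaining ``degree-one'' contribution $\int_0^\infty|\{S\in\cA^t_{\mathrm{dem}}:|\OPT\cap\delta(S)|=1\}|\,dt$ to the gain and coverage terms.

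I would then build $\cS$ and $\cP$ from the connected components $C_1,\ldots,C_K$ of $\OPT$. For $\cS$ I take $\{V(C_k):V(C_k)\text{ is actively connected}\}$; since $\cost(V(C_k))\leq c(C_k)$, \Cref{lem:gain-gain} combined with the definition of $\gain$ gives
\[
\gain(\cS)-\cost(\cS)\ \geq\ 2\int_0^\infty\sum_{k:V(C_k)\text{ actively connected}}\max\!\bigl(P_k(t)-1,\,0\bigr)\,dt\ -\ c(\OPT),
\]
where $P_k(t):=|\{S\in\cA^t:S\cap V(C_k)\neq\emptyset\}|$. For $\cP$ I extract autarkic pairs from those $\OPT$-components whose two endpoint-active-moats separate a common demand set (shrinking each endpoint down to its innermost active moat so that $\sep(A)=\sep(B)$), and autarkic triples from three-vertex $\OPT$-arrangements where the three surrounding active moats jointly close up (\emph{i.e.}, $\sep(A_1\cup A_2\cup A_3)=\emptyset$); the laminarity of $\supp(y)$ (\Cref{lem:aut-laminar}) lets me pull out a crossing-free subcollection, and each connector has cost at most that of the originating $\OPT$-component, so $\cost(\cP)\leq c(\OPT)$.

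The heart of the argument is a time-by-time charging. For any actively-connected $V(C_k)$ with $P_k(t)\geq 2$, the gain rate $2(P_k(t)-1)$ dominates the $y(\Usep)$-rate contribution $P_k(t)$ by a factor of at least $\tfrac{3}{10}$: the worst case $P_k(t)=2$ still gives $\tfrac{10}{3}\cdot 2(P_k(t)-1)=\tfrac{20}{3}\geq 2=P_k(t)$. When $V(C_k)$ is \emph{not} actively connected, or when $P_k(t)=2$ and the two moats form a true matching, the corresponding demand-active moats are instead covered by a pair or triple of $\cP$, whose $2\coverage$-rate equals the $y(\Usep)$-rate exactly. The one-shot subtractions $\cost(\cS)+\cost(\cP)\leq 2c(\OPT)$ and the residual budget-active time are folded into the $(\tfrac{1}{2\eps}+17)\cdot\excess(\OPT)$ term via the identities $c(\OPT)\geq y(\Usep)+\excess(\OPT)$ and $y(\Uunsep)=\eps\cdot y(\Usep)$, the latter producing the $\tfrac{1}{2\eps}$ factor on any budget-active time that needed to ``survive'' without fully closing an $\OPT$-component.

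The main obstacle will be the three intertwined constants. In particular: (a) ensuring the degree-one contributors not captured by $\cS$ can be packed into a \emph{crossing-free} $\cP$ without double-counting---this is exactly where autarkic triples outperform pairs, bypassing the $\sep(A)=\sep(B)$ obstruction that fails on a bare three-way matching in $\OPT$; (b) controlling the gap between gain-rate and $y(\Usep)$-rate whenever $P_k(t)$ briefly drops to $1$ (a moat momentarily swallowing an entire $V(C_k)$), which is what consumes slack in the constant $17$; and (c) converting the budget-active slack that prevented some $V(C_k)$ from becoming actively connected into the $\tfrac{1}{2\eps}$ multiplier on $\excess(\OPT)$ via a careful correspondence between those phases and the $\OPT$-edges crossing the last blocking moat.
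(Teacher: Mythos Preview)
Your high-level plan---build $\cS$ and $\cP$ from the components of $\OPT$ and do a time-integrated accounting---matches the paper's in spirit, but the cost accounting has a real gap that the paper's argument is specifically designed to avoid.

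The concrete problem is your step ``the one-shot subtractions $\cost(\cS)+\cost(\cP)\leq 2c(\OPT)$ \ldots\ are folded into the $(\tfrac{1}{2\eps}+17)\cdot\excess(\OPT)$ term.'' In the target inequality it is $\tfrac{10}{3}\cost(\cS)+\cost(\cP)$ that gets subtracted, and bounding each crudely by $c(\OPT)=y(\Usep)+\excess(\OPT)$ produces a negative term of order $\tfrac{13}{3}\,y(\Usep)$, which cannot be absorbed into any multiple of $\excess(\OPT)$ when the excess is small. Relatedly, your time-charging ``$\tfrac{10}{3}\cdot 2(P_k(t)-1)\geq P_k(t)$'' controls only $\tfrac{10}{3}\gain(\cS)$, not $\tfrac{10}{3}(\gain(\cS)-\cost(\cS))$; once you subtract cost, the $P_k(t)=2$ case gives zero net contribution, which is precisely why autarkic pairs are needed and why the paper separates the $m(U,t)=1$ contribution $x_1$ from $x_2,x_3$.

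The paper avoids this by two structural choices you are missing. First, $\cS$ is \emph{not} the set of fully actively-connected $\OPT$ components; instead, for each component $T$ (after grouping trees sharing a separating $U\in\Usep$ into forests), it takes the subset of $V(T)$ actively connected to a reference vertex $r_F$ of maximum deactivation time. This yields an ``agreeable'' collection for which one can prove the crucial \emph{local} cost bound $\cost(\cS)\leq \val(\cS)+2\,\excess(\OPT)$ (\Cref{lem:constructing_trees_from_opt_part_2}), tying cost to the portion $\val(\cS)$ of $y(\Usep)$ actually accounted for by $\cS$ rather than to all of $c(\OPT)$. Second, the proof is split via this intermediate quantity: \Cref{lem:constructing_trees_from_opt_part_3} gives $y(\Usep)-\val(\cS)\leq \tfrac{1}{2\eps}\,\excess(\OPT)$ (this is where the $\tfrac{1}{2\eps}$ appears), and then \Cref{lem:win-win} bounds $\val(\cS)$ by $\tfrac{10}{3}(\gain(\cS)-\cost(\cS))+\profit(\cP)+17\,\excess(\OPT)$ through a fine-grained decomposition $\val(\cS)=x_1+\overline{x}_1+x_2+\overline{x}_2+x_3$ according to merge-multiplicity and $S$-good/$S$-bad times. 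The autarkic collection is then built per $S\in\cS$ at the first $S$-good time with at most two (or three) $S$-relevant moats, and its cost is bounded against $\val(\cS)$ (and refinements thereof), not against $c(\OPT)$. Without the local cost bound and the $\val(\cS)$ split, the constants simply do not close.
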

In other words, (i)~either the $\excess(\OPT)$ is large, or (ii)~there
exists a collection $\cS^*$ of actively connected vertex sets with
large gain, or (iii)~there exists an autarkic collection $\cP^*$ with
large profit. These three events are precisely what allow us to show
that the best of $F_1$, $F_2$, and $F_3$ is a $\appx$-approximation to
Steiner Forest.

Two asides: to get the best approximation guarantee, we don't just use
the tradeoff from \Cref{thm:win-win-win}: we use the helper theorems
involved in its proof. Moreover, we can show that just taking the
better among forests $F_2$ and $F_3$ suffices (since the forest $F_1$
is essentially what we get as $F_2$ with $\cS$ being empty); however, we retain
that step because it emphasizes the role of $\excess(\OPT)$.

We prove \Cref{thm:win-win-win} in \Cref{sec:satisfied_pairs_or_large_excess}, and then
combine the various bounds together in \Cref{sec:approximation-ratio}
to derive the final approximation factor.

\section{Obtaining Large Gain or Large Profit}
\label{sec:satisfied_pairs_or_large_excess}

In this section, we prove the ``win-win-win'' guarantee of
\Cref{thm:win-win-win}. In \Cref{sec:canon-collection}, we give a
construction of a ``canonical'' collection $\cS$ of actively connected
sets, and prove some properties for it. Then, in
\Cref{sec:coverage_or_gain}, we show how to construct an autarkic
collection $\cP$ with large profit (i.e., its coverage minus its
cost) when the ``profit'' of $\cS$ (i.e., its gain minus its own cost)
is not very large.

\subsection{Constructing the Canonical Collection $\cS$ from $\OPT$}
\label{sec:canon-collection}

In order to construct the ``canonical'' collection $\cS$ of actively
connected vertex sets, we map every set $U\in \Usep$ to a demand pair
$\varphi(U)$ separated by $U$, and refer to it as the \emph{demand
  pair responsible for $U$}. In case $U$ separates several demand
pairs, we assign an arbitrary one. This mapping allows us to partition
the total value of $\Usep$ among the demand pairs.

\begin{definition}[Value and Satisfaction]
  For a demand pair $d \in \cD$, we define its \emph{value} as the
  total dual value it is responsible for; i.e., %
  \[
    \val(d)\ \coloneqq \ \sum_{U: \varphi(U)=d} y_U.
  \]
  
  A demand pair $d=\{a,b\}$ is \emph{satisfied} by a vertex set $S$ if
  $\{a,b\}\sse S$; $d$ is satisfied by $\cS$ if it is satisfied by
  some set $S \in \cS$. For a collection $\cS$ of vertex sets, we
  define its value as
  \[
    \val(\cS) \ \coloneqq\ \sum_{\substack{d\text{ satisfied}\\\text{by some }S\in \cS}} \val(d).
  \]
\end{definition}

\begin{definition}[Agreeable Collection]
  A collection of vertex sets $\cS$ is called
  \emph{agreeable} (with respect to $y$) if
  \begin{enumerate}[label=(A\arabic*)]
  \item\label{item:property_construction_from_opt} every set
    $S \in \cS$ is a subset of the vertex set of some connected
    component of OPT, and there is at most one such set $S \in \cS$
    for each connected component of OPT, and 
  \item\label{item:property_all_or_nothing} if demand pairs $\{a,b\}$
    and $\{a', b'\}$ are both separated by some set $U$ in the support
    of $y$, then either none or both of these demands are satisfied by
    $\cS$.
  \end{enumerate}
\end{definition}

\subsubsection{Defining the Canonical Collection}
\label{sec:definition-of-cS}

Consider the connected components of $\OPT$. We partition it to obtain
a collection $\mathcal{F}$ of forests as follows. Start with each tree
of $\OPT$ being its own (single-treed) forest in $\cF$. Now, for each
set $U\in \Usep$, merge the forests from $\mathcal{F}$ connecting some
demand pair separated by $U$ into a single forest.  (Because the
forests in $\mathcal{F}$ are always vertex disjoint, merging different
forests in $\cF$ simply involves taking the union of their vertex sets
and the union of their edge sets.)

Next, we describe our construction of the collection $\cS$.  We start
with $\cS = \emptyset$.  For each forest $F\in \mathcal{F}$, let $r_F$
be a vertex of $F$ with the maximum deactivation time.  For each
connected component $T$ of the forest $F$, let $S$ be the set of all
vertices in $T$ which are actively connected to $r_F$. If
$S \neq \emptyset$, add $S$ to $\cS$.  Finally, define $T_S$ to be the
minimal subtree of the optimal tree $T$ connecting $S$.

\subsubsection{Properties of the Canonical Collection}
\label{sec:properties-of-cS}

\begin{lemma}\label{lem:constructing_trees_from_opt_part_1}
  The collection $\cS$ of actively connected vertex sets defined above satisfies:
 \begin{enumerate}[nosep,label = {(\roman*)}]
 \item \label{item:property_tree_connects_s} every tree $T_S$ connects the vertices in $S$,
 \item \label{item:property_S_agreeable} $\cS$ is agreeable with
   respect to $y$.
 \end{enumerate}
\end{lemma}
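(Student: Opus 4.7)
My plan is to dispatch part~(i) and the first half of part~(ii) by direct inspection of the construction, and reserve the careful work for the nontrivial clause~(A2). For~(i), the set $S$ is by construction a subset of $V(T)$ for a tree $T$ (a connected component of $\OPT$), so I simply take $T_S$ to be the minimal subtree of $T$ spanning $S$. For the first clause of~(A1), I plan to argue that the construction of $\mathcal{F}$ only unites whole components of $\OPT$---merging two forests takes the union of their vertex and edge sets without introducing new edges---so every connected component of every $F\in\mathcal{F}$ is a connected component of $\OPT$, and hence every $S\in\cS$ lies in such a component. The second clause follows because each $\OPT$-component belongs to a unique forest $F$, and the inner loop over connected components of $F$ adds at most one set $S$ per $T$.

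For~(A2), let $U\in\supp(y)$ separate both $d=\{a,b\}$ and $d'=\{a',b'\}$. Since $d,d'\in\sep(U)$ and $U\in\Usep$, the merging rule places the $\OPT$-components $T_1\ni a,b$ and $T_2\ni a',b'$ into the same forest $F$, so they share a common representative $r_F$. The engine of the argument will be the observation that \emph{every set $W\in\supp(y)$ lies entirely in a single equivalence class of active connectedness}: $W$ must have been an active component of $\cC^t$ throughout the interval in which $y_W$ was grown, so any two vertices of $W$ share an active component at some time and both have deactivation time at least the end of that interval. I will assume $d$ is satisfied, derive that $d'$ is satisfied, and invoke symmetry for the converse. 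After relabelling we may assume $a\in U$ and $b\notin U$, and I will treat the sub-case $a'\in U$, $b'\notin U$ in detail---the other sub-case follows by swapping the roles of $a'$ and $b'$ throughout.

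Applying the key observation to $W=U$ immediately shows that $a$ and $a'$ are actively connected, so by \Cref{obs:same_deactivation_time} and the transitivity of the equivalence relation (noted after \Cref{def:activ-conn-sets}) we obtain $\deac_{a'}=\deac_{r_F}$ with $a'$ in the equivalence class of $r_F$. For $b'$, let $\tau$ be the first time at which $a'$ and $b'$ share a component. For every $s<\tau$ the component of $b'$ harbours the terminal $b'$ with its unsatisfied partner $a'$, so it is demand-active and $\deac_{b'}\geq\tau$; from time $\tau$ onward $a'$ and $b'$ remain in a common component, and a direct comparison of their active/inactive histories forces $\deac_{b'}=\deac_{a'}=\deac_{r_F}$. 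Moreover $\tau\leq\deac_{a'}$, because at time $\deac_{a'}$ the component of $a'$ is inactive (hence not demand-active), so it must already contain the partner $b'$. Therefore $a'$ and $b'$ share a component at some time $\tau\leq\deac_{r_F}$ with matching deactivation times, making them actively connected, and chaining with the previous step places $b'$ in the equivalence class of $r_F$. Both $a'$ and $b'$ then belong to $S_{T_2}$ (or to $S_{T_1}$ when $T_1=T_2$), witnessing that $d'$ is satisfied. I expect the main obstacle to lie in this last bookkeeping step---specifically in justifying $\tau\leq\deac_{a'}$ when a vertex may have multiple partners---though it can be handled by using only that $b'$ is \emph{one of} $a'$'s partners and that demand-activity is triggered by a single unsatisfied pair, both of which the definitions directly provide.
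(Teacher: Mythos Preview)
Your overall structure is sound and follows the paper's approach, but the ``key observation'' you rely on for (A2)---that every set $W\in\supp(y)$ lies entirely in a single equivalence class of actively connected vertices---is false as stated. You argue that every vertex of $W$ has deactivation time at least the end of the interval during which $y_W$ grew; this fails because a vertex $v\in W$ may have been absorbed into (a predecessor of) $W$ from a previously \emph{inactive} component. An inactive component can merge into an active one when an edge between them goes tight, driven solely by the active side's growth; the absorbed vertex $v$ then has $\deac_v$ strictly smaller than the time at which $W$ formed. The example in \Cref{sec:crit-active-conn} exhibits precisely this: the set $\{a_1,a_2,b_2,a_3\}$ lies in $\Usep$ (it separates both $\{a_1,b_1\}$ and $\{a_3,b_3\}$), yet $a_2$ and $b_2$ have deactivation time $1+2\eps$ while $a_1$ and $a_3$ remain active strictly longer.

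The fix is narrow and is exactly what the paper does. You only ever apply the observation to $a$ and $a'$, and these are not arbitrary elements of $U$: they are terminals whose partners $b,b'$ lie \emph{outside} $U$. Hence at every time $s$ at which $U$ is a component, $a$ is not yet connected to its partner $b$, so the component containing $a$ has been demand-active at all earlier times, giving $\deac_a\geq s$; likewise for $a'$. This is the content of the paper's phrase ``they both lie in $U$ while both are still active.'' With this targeted argument in place of the general (false) observation, the remainder of your proof goes through unchanged; your treatment of $b'$ is simply a re-derivation of the fact, which the paper cites directly, that the two endpoints of any demand pair are actively connected.
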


\begin{proof}
  Observe that every set $S\in \cS$ is actively connected and
  satisfies property~\ref{item:property_tree_connects_s} by
  construction. Property~\ref{item:property_construction_from_opt} of
  agreeable collections also follows by construction, and it remains
  to show property~\ref{item:property_all_or_nothing}. Consider a set
  $U\in \Usep$ and demand pairs $d = \{a,b\}$ and $d' = \{a', b'\}$
  that are separated by $U$ (with say, $a,a' \in U$). Observe that the optimal
  trees connecting $d,d'$ must belong to the same forest $F$ of $\cF$,
  by construction. Moreover, the set $\{a,b,a',b'\}$ belongs to the
  same equivalence class of actively connected vertices: $\{a,a'\}$
  because they both lie in $U$ while both are still active,
  $\{a,b\}, \{a',b'\}$ because any demand pair must be actively
  connected. Now if $r_F$ belongs to the same equivalence class, then
  both demands will be satisfied by $\cS$, else neither will be,
  proving property~\ref{item:property_all_or_nothing}.
\end{proof}

\begin{lemma}[Cheap Trees]
  \label{lem:constructing_trees_from_opt_part_2}
  The collection $\cS$ satisfies:
   \begin{equation}\label{eq:cheap_trees}
     \sum_{S\in \cS} c(T_S) \ \leq \ \val(\cS) + 2 \cdot \excess(\OPT).
   \end{equation}
\end{lemma}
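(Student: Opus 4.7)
The plan is to combine dual feasibility with property~(A2) of the agreeable collection, via a key structural claim about how $T_\cS := \bigcup_{S \in \cS} T_S$ can cross ``unsatisfied'' moats. Denote the slack by $\tau_e := c_e - \sum_{U: e \in \delta(U)} y_U \geq 0$ and set $k_U := |\OPT \cap \delta(U)|$, $k'_U := |T_\cS \cap \delta(U)|$, $E_1 := \sum_{U \in \Usep} y_U(k_U - 1)$, and $E_2 := \sum_{U \in \Uunsep} y_U k_U$. Property~(A1) guarantees that the trees $T_S$ sit in distinct OPT components, so they are edge-disjoint and $\sum_{S \in \cS} c(T_S) = c(T_\cS)$ with $T_\cS \subseteq \OPT$. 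A routine accounting then gives
\[
c(T_\cS) \;=\; \sum_U y_U k'_U + \sum_{e \in T_\cS} \tau_e \;\leq\; \sum_U y_U k'_U + \bigl(\excess(\OPT) - E_1 - E_2\bigr),
\]
where the inequality uses $\sum_{e \in T_\cS} \tau_e \leq \sum_{e \in \OPT} \tau_e = c(\OPT) - \sum_U y_U k_U = \excess(\OPT) - E_1 - E_2$. It therefore suffices to prove $\sum_U y_U k'_U \leq \val(\cS) + 2 E_1 + E_2$, which combined with $E_1 \leq \excess(\OPT)$ will yield the claim.

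The heart of the argument---and the main obstacle---is the following structural claim: \emph{if $U \in \Usep$ is unsatisfied and $k_U = 1$, then $k'_U = 0$.} Here ``unsatisfied'' means $\varphi(U) \not\subseteq S$ for every $S \in \cS$, equivalently by~(A2) that no demand pair separated by $U$ is contained in any $S$. I would argue by contradiction: if the unique OPT-edge $e^*$ crossing $U$ lies in some $T_S \subseteq T$, then since $k_U = 1$ removing $e^*$ from $T$ partitions $V(T)$ into exactly $V(T) \cap U$ and $V(T) \setminus U$, forcing $S$ to have a vertex $s_1 \in U$ and a vertex $s_2 \notin U$. Pick any demand pair $\{a, b\} \in \sep(U)$ with $a \in U$; since $a,b$ are connected in $\OPT$ and $\OPT$ crosses $U$ only at $e^*$, both lie in $V(T)$. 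At any time when $U$ is an active moat-component, $a$ and $s_1$ both sit in the active component $U$ with deactivation times exceeding that time, so $a$ and $s_1$ are actively connected. Transitivity with $s_1$ being actively connected to $r_F$ (since $s_1 \in S$) and $a$ being actively connected to $b$ then yields $\{a, b\} \subseteq V(T) \cap (\text{equivalence class of } r_F) = S$, contradicting unsatisfiedness via~(A2).

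Given this claim, the proof concludes by arithmetic. Using $\sum_U y_U k_U = y(\Usep) + E_1 + E_2$, $\sum_{U\ \text{unsat}} y_U = y(\Usep) - \val(\cS)$, and $k'_U = 0$ for unsatisfied $U$ with $k_U = 1$, I would write
\[
\sum_U y_U k'_U \;\leq\; \sum_U y_U k_U - \sum_{U\ \text{unsat},\, k_U = 1} y_U \;=\; \val(\cS) + E_1 + E_2 + \sum_{U\ \text{unsat},\, k_U \geq 2} y_U.
\]
The final sum is at most $E_1$, since $y_U \leq y_U (k_U - 1)$ whenever $k_U \geq 2$ and such terms are part of $E_1$. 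This gives $\sum_U y_U k'_U \leq \val(\cS) + 2 E_1 + E_2$, and substituting back yields $c(T_\cS) \leq \val(\cS) + E_1 + \excess(\OPT) \leq \val(\cS) + 2\excess(\OPT)$ as required.
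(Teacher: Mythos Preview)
Your argument is correct and tracks the paper's proof closely: both reduce to the same structural claim (for an unsatisfied $U\in\Usep$ with $k_U=1$, the corresponding $T_S$ avoids $\delta(U)$, equivalently $S\cap U=\emptyset$), and your global accounting via $E_1,E_2$ is just a repackaging of the paper's per-component bound $c(T_S)\le\val(\{S\})+2\,\excess(T)$.

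There is, however, one step you assert without justification: that $\deac_{s_1}$ exceeds any time $t$ at which $U$ is an active moat. Merely having $s_1\in U\in\cA^t$ does not give this---a vertex can sit inside an active component after having earlier been deactivated. The missing link is precisely the defining property of $r_F$: since $s_1\in S$, the vertex $s_1$ is actively connected to $r_F$, so $\deac_{s_1}=\deac_{r_F}$ by \Cref{obs:same_deactivation_time}; and since $a\in V(T)\subseteq V(F)$ while $r_F$ was chosen with maximum deactivation time in $F$, we get $\deac_{s_1}=\deac_{r_F}\ge\deac_a>t$. This is exactly how the paper exploits the choice of $r_F$ (in the forward direction, to prove $S\cap U=\emptyset$); once you fill this in, the two proofs coincide.
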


\begin{proof}
  For a connected component $T$ of OPT, define its excess as
  \begin{gather}
    \excess(T) \coloneqq c(T) - \val(\{V(T)\}) \ \geq\ c(T) - \sum_{U:
      T \cap \delta(U) \neq \emptyset} y_U \ \geq\ 0,
  \end{gather}
  where the last inequality holds because $y$ satisfies the dual LP
  constraints.  Using that every demand pair is satisfied by exactly
  one connected component of OPT, and that the value function assigns
  each set $U \in \Usep$ to exactly one demand pair, we get
  \begin{align*}
    \excess(\OPT)\ 
    =&\  c(\OPT) - \sum_{U\in \Usep} y_U  \\
    =&\ c(\OPT) - \val\big(\{ V(T) : T\text{ connected comp. of OPT}\}\big) \\
    =&\ \sum_{\substack{T:\text{ conn. comp.}\\\text{of OPT}}} \Big( c(T) - \val(\{V(T)\}) \Big) \\
    =&\ \sum_{\substack{T:\text{ conn. comp.}\\\text{of OPT}}} \excess(T).
  \end{align*}
  Because the sets in $\cS$ are disjoint, we have
  $\val(\cS) = \sum_{S\in \cS} \val(\{S\})$.  Thus, it suffices to
  show that for any single connected component $T$ of $\OPT$ and the
  unique set $S\in \cS$ corresponding to $T$ (if such a set $S$
  exists), we have
  \begin{equation}\label{eq:cost_bound_for_single_tree}
    c(T_S) \ \leq\ \val(\{S\}) + 2\cdot \excess(T).
  \end{equation}

  To prove~\eqref{eq:cost_bound_for_single_tree}, we relate the
  difference $c(T) - c(T_S)$ with $\val(\{V(T)\}) -
  \val(\{S\})$. Indeed, let $\mathcal{U}_{\rm loss} \subseteq \Usep$
  be the set of all $U \in \Usep$ for which the demand pair
  $\varphi(U)$ is satisfied by $V(T)$ but not by $S$.  Then we have
  \begin{equation}\label{eq:value_loss}
    y(\mathcal{U}_{\rm loss}) = \val(\{V(T)\}) - \val(\{S\}). 
  \end{equation}

  Next, we establish a lower bound on $c(T) - c(T_S)$.  Consider a set
  $U\in \mathcal{U}_{\rm loss}$. We show that $S\cap U=
  \emptyset$. Say $\varphi(U) = \{a,b\}$ with $a \in U$. By definition
  of $\mathcal{U}_{\rm loss}$, the demand pair $\varphi(U)$ is not
  satisfied by $S$.  Because the two vertices of a demand pair are
  always actively connected to each other, by construction of $S$,
  neither of $\{a,b\}$ is actively connected to the vertex $r_F$ (for
  the forest $F\in \mathcal{F}$ that has $T$ as a connected
  component).  Because $U \in \supp(y)$, every vertex $w\in U$ is
  either actively connected to $a$ or has an deactivation time
  strictly smaller than $a$'s.  In the former case, $w$ lies in the
  same equivalence class of actively connected components as $a$, and
  hence cannot be actively connected to $r_F$. In the latter case, $w$
  has strictly smaller deactivation time than $a$, and thus strictly
  than $r_F$ (since $r_F$ was chosen to have the largest deactivation
  time among all vertices in $F$).  In both cases, we conclude that
  $w \in U$ is not actively connected to $r_F$ and hence not contained
  in $S$.  This shows $S\cap U = \emptyset$.

  This claim means that if $T\cap \delta(U) = 1$, then the unique edge
  in $T\cap \delta(U)$ cannot be connecting two vertices in $S$, and
  is not part of the tree $T_S$.  Using that $y$ satisfies the dual LP
  constraints, we get
  \begin{equation}\label{eq:cost_decrease}
    \sum_{\substack{U \in \mathcal{U}_{\rm loss}:\\ |T \cap
        \delta(U)|= 1 }} y_U\ \leq\ \sum_{U: (T\setminus T_S) \cap
      \delta(U)\neq\emptyset} y_U \ \leq\ c(T\setminus T_S) = 
    c(T) - c(T_S).
  \end{equation}
  Combining
  \eqref{eq:value_loss} and \eqref{eq:cost_decrease}, we obtain
  \begin{align}
    c(T_S) - \val(\{S\})\ 
    \leq &\ c(T) - \sum_{\substack{U \in \mathcal{U}_{\rm loss}:\\ |T
    \cap \delta(U)|= 1 }} y_U - \val(\{V(T)\}) + y(\mathcal{U}_{\rm
    loss})  \notag \\
    =&\ \excess(T) + \sum_{\substack{U \in \mathcal{U}_{\rm
       loss}:\\ |T \cap \delta(U)|> 1 }} y_U \label{eq:4} \\
    \leq&\ 2\cdot \excess(T); \notag
  \end{align}
  in equality~(\ref{eq:4}) we used that $|T \cap \delta(U)| \geq 1$ for all
  $U \in \mathcal{U}_{\rm loss}$, because $T$ satisfies the demand
  pair $\varphi(U)$ separated by $U$. This allowed to partition
  $y(\cU_{\rm loss})$ into sums over those $U$'s which are crossed by
  $T$ exactly once, and more than once.  This shows
  \eqref{eq:cost_bound_for_single_tree} and completes the proof of
  \Cref{lem:constructing_trees_from_opt_part_2}.
\end{proof}

\begin{lemma}[Non-Trivial Excess]
  \label{lem:constructing_trees_from_opt_part_3}
  The collection $\cS$ satisfies:
   \begin{equation}\label{eq:large_excess}
     \excess(\OPT) \ \geq \ 2\epsilon \cdot \Big( y(\Usep) - \val(\cS) \Big) .
   \end{equation}
\end{lemma}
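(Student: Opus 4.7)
My plan is to combine the inclusionwise minimality of $\OPT$ with the budget-conservation property of the $\epsilon$-extended algorithm. Starting from the LP-duality lower bound $c(\OPT) \geq \sum_U y_U\, |\OPT \cap \delta(U)|$ and subtracting $\ysep$,
\[
  \excess(\OPT) \;\geq\; \sum_{U \in \Usep}y_U\bigl(|\OPT \cap \delta(U)| - 1\bigr) \;+\; \sum_{U \in \Uunsep}y_U\,|\OPT \cap \delta(U)|.
\]
Because every vertex is a terminal, every $U \in \Uunsep$ is closed under demand pairs and is therefore a disjoint union of complete connected components of $\OPT$; hence $|\OPT\cap\delta(U)| = 0$ for each $U \in \Uunsep$. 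The excess is thus driven entirely by the multi-crossed sets in $\Usep$.

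Next, I would decompose the ``loss'' mass $y(\Usep) - \val(\cS) = \sum_T y(\mathcal{U}_{\rm loss})$ tree by tree, where $\mathcal{U}_{\rm loss}$ (as in the proof of \Cref{lem:constructing_trees_from_opt_part_2}) collects the $U \in \Usep$ whose responsible demand lies in $T$ and is \emph{not} satisfied by $\cS$. Rearranging the chain of inequalities from that proof gives
\[
  \sum_{U \in \mathcal{U}_{\rm loss},\,|T\cap\delta(U)|\geq 2}y_U \;\leq\; \excess(T),
\]
so the multi-crossing portion of the loss is already absorbed by $\excess(\OPT)$, even with the stronger constant $1$ rather than $2\epsilon$. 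What remains is to control the single-crossing portion of $y(\mathcal{U}_{\rm loss})$, and it is here that the $\epsilon$-extension must be invoked.

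The main obstacle will be extracting $2\epsilon \cdot y_U$ of excess per single-crossing $U \in \mathcal{U}_{\rm loss}$. The hook is that if $\varphi(U) = \{a,b\}$ is unsatisfied by $\cS$, then by the construction of $\cS$ we have $\deac_a = \deac_b < \deac_{r_F}$; so after $U$'s final component becomes inactive, the component of $r_F$ must remain active for a strictly positive amount of time. Along each branch of the component-merge tree of the $\epsilon$-extended algorithm, budget conservation reads: the total $y$-value of purely budget-active components equals $\epsilon$ times the total $y$-value of demand-active components in that branch. Tracing this identity across the branch rooted at $r_F$'s final inactive component ties the ``extra'' active time after $\deac_a$ to a demand-active ancestor moat $W$ whose image under $\OPT$ must multi-cross at least twice, and hence contributes $2\epsilon\cdot y_U$ to $\excess(\OPT)$. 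The delicate step is to set up an injective charging from each single-crossing $U \in \mathcal{U}_{\rm loss}$ to a witnessing multi-crossed $W$, using the laminarity of $\supp(y)$ and the merge tree to avoid any over-charging. Once this pairing is in place, summing the shares across all trees and forests produces the inequality $\excess(\OPT) \geq 2\epsilon \bigl(y(\Usep) - \val(\cS)\bigr)$.
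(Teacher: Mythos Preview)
Your proposal contains a genuine and fundamental error in the very first step. You assert that ``every $U \in \Uunsep$ is closed under demand pairs and is therefore a disjoint union of complete connected components of $\OPT$; hence $|\OPT\cap\delta(U)| = 0$.'' The first clause is correct, but the implication that follows is false. A set $U$ being closed under demand pairs means only that each demand pair lies entirely inside or entirely outside $U$; it says nothing about how the \emph{trees} of $\OPT$ sit relative to $U$. A single tree of $\OPT$ may connect several demand pairs, some inside $U$ and some outside $U$, and then it must cross $\delta(U)$. Concretely, take the instance from \Cref{fig:actively_connected}: after $a_2$ and $b_2$ merge, the set $\{a_2,b_2\}$ is budget-active and hence lies in $\Uunsep$, yet the unique (path-shaped) optimal solution crosses $\delta(\{a_2,b_2\})$ exactly twice. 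All you can conclude from minimality of $\OPT$ is that $|\OPT\cap\delta(U)| \neq 1$ for $U \in \Uunsep$; it can certainly be $\geq 2$.

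This is not a repairable technicality, because the contribution of $\Uunsep$ to the excess is precisely the mechanism that makes the lemma work. The paper's argument goes in the opposite direction from yours: each separating set $U \in \Usep$ generates budget $\epsilon\, y_U$, which is later spent on budget-active (hence non-separating) sets $W \in \Uunsep$. One then shows that whenever $\varphi(U)$ is \emph{not} satisfied by $\cS$, every such $W$ that $U$'s budget pays for must have $\OPT\cap\delta(W)\neq\emptyset$ (and hence $|\OPT\cap\delta(W)|\geq 2$ by minimality). Summing, the total budget coming from unsatisfied $U$'s---which is exactly $\epsilon\,(y(\Usep)-\val(\cS))$---lands on non-separating sets that each contribute at least $2y_W$ to the excess, giving the factor $2\epsilon$. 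By zeroing out the $\Uunsep$ term you have thrown away the entire source of the bound, and your subsequent plan to charge single-crossing loss sets to ``demand-active ancestor moats'' via an injective pairing has no clear witness: there is no reason such an ancestor in $\Usep$ should be multi-crossed, and the budget-conservation identity you invoke routes money to $\Uunsep$, not back into $\Usep$.
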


\begin{proof}
  It suffices to show that
  \begin{equation}\label{eq:lower_bound_nonsep}
    \sum_{\substack{W\in \mathcal{U}_{\rm unsep}:\\ \OPT\cap \delta(W) \neq \emptyset }} y_W \ \geq\  \epsilon \cdot \big( y(\Usep) - \val(\cS)\big).
  \end{equation}
  Indeed, if $\OPT\cap \delta(W)$ contains only a single edge, then
  because $W$ does not separate any demand pair, we could remove this
  edge and maintain a feasible solution, contradicting the minimality
  of OPT.  Therefore, for every $W\in \mathcal{U}_{\rm unsep}$, we
  have $\OPT\cap \delta(W) = \emptyset$ or
  $|\OPT\cap \delta(W)| \geq 2$.  The definition of excess combined
  with \eqref{eq:lower_bound_nonsep} implies
  \[
    \excess(\OPT) \geq \sum_{W\in \mathcal{U}_{\rm unsep}} |\OPT\cap \delta(W)| \cdot y_W 
    \ \geq\ 2 \cdot \sum_{\substack{W\in \mathcal{U}_{\rm unsep}:\\ \OPT\cap \delta(W) \neq \emptyset }} y_W
    \ \geq\ 2\epsilon \cdot \big( y(\Usep) - \val(\cS)\big).
  \]
  In the rest of the proof, we show~\eqref{eq:lower_bound_nonsep}. 

  We consider a set $U$ from $\Usep$.  Because $U\in \Usep$, the
  moat-growing process generates $\epsilon \cdot y_U$ money when
  growing $U$. Consider some set $W\in \mathcal{U}_{\rm unsep}$ whose
  growth was (partially) paid by some money collected by $U$.  We show
  that either $\varphi(U)$ is satisfied by $\cS$, i.e., $U$
  contributes to $\val(\cS)$, or $\OPT \cap \delta(W) \neq \emptyset$.
  Note that this implies \eqref{eq:lower_bound_nonsep}.

  Let $F\in \mathcal{F}$ be the forest satisfying the demand pair
  $\varphi(U)$; such a forest exists because $\mathcal{F}$ partitions
  $\OPT$. If $\OPT \cap \delta(W) = \emptyset$, every connected
  component of $F$ is either completely inside $W$, or completely
  outside of $W$. Because $U$ pays for $W$, we have $U\subseteq W$,
  and hence at least one connected component of $F$ is inside $W$.  We
  now distinguish two cases.

  The first case is where some connected component of $F$ is outside
  $W$.  Then, by construction of $\mathcal{F}$, there must be some set
  $\widehat{U}\in \Usep$ and two connected components $T_1$ and $T_2$
  of $F$ such that
  \begin{itemize}
  \item $T_1$ is inside $W$ and $T_2$ is outside $W$, and
  \item each of the trees $T_1$ and $T_2$ satisfies some demand pair separated by $\widehat{U}$.
  \end{itemize}
  This implies $\widehat{U}\cap W \neq \emptyset$ and
  $\widehat{U}\cap (V\setminus W) \neq \emptyset$. Now since both $W$
  and $\widehat{U}$ belong to the support of $y$, this contradicts the
  laminarity of the support of $y$.

  The second case is where the forest $F$ is completely inside $W$.
  We now show that $\varphi(U)$ is satisfied by $\cS$, i.e., $U$
  contributes to $\val(\cS)$.  Let $u\in U$ be the vertex of the
  demand pair $\varphi(U)$ that is contained in $U$.  Because $U$
  separates $\varphi(U)$, the vertex $u$ was still active at any time
  when $y_U$ was growing.  Because $U$ paid for $W$, the vertex $u$
  remained active at least until the time when $y_W$ was growing,
  using money generated by $U$.  Thus, by the choice of the vertex
  $r_F$, also $r_F$ was still active when $y_W$ was growing using
  money generated by $U$.  Using $u\in U \subseteq W$ and
  $r_F \in W$, we conclude that $r_F$ and $u$ are actively connected,
  i.e., $u\in S$ for the set $S$ corresponding to the connected
  component $T$ of $F$ that contains $u$.  Because the elements of
  every demand pair are actively connected, this implies that
  $\varphi(U)$ is satisfied by $S$.
\end{proof}

\subsection{Constructing the autarkic collection $\cP$ from $\OPT$}
\label{sec:coverage_or_gain}

Having constructed the canonical collection $\cS$ of actively
connected sets in the preceding section, we now show that if $\cS$ has
small gain relative to its cost, and the excess is small, there is an
autarkic collection $\cP$ with large profit.

\begin{lemma}\label{lem:win-win}
  Let $\cS$ be the agreeable collection of actively connected vertex
  sets constructed in \Cref{sec:canon-collection}.  Then there exists
  a crossing-free autarkic collection $\cP$ such that
  \begin{equation}\label{eq:win-win}
    \frac{10}{3}\cdot (\gain(\cS) - \cost(\cS)) + \profit(\cP) +
    17 \cdot \excess(\OPT) \geq \val(\cS).
  \end{equation}
\end{lemma}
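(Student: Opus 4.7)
My plan is a time-integrated case analysis of how $\cS$ partitions the active collection $\cA^t$ at each moment of the extended moat-growing run. Call two sets of $\cA^t$ equivalent if some $S \in \cS$ intersects both, and write $\cA^t/\cS$ for the resulting classes $E_1, \dots, E_r$. The instantaneous gain rate equals $2 \sum_i (|E_i| - 1)$, while the instantaneous contribution to $\val(\cS)$ counts those $A \in \cA^t$ whose responsible pair $\varphi(A)$ sits inside some $S \in \cS$; in particular, only active sets in non-singleton classes can contribute to $\val$.

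For classes $E$ with $|E| \geq 3$, the gain-to-val ratio is at least $2(|E|-1)/|E| \geq 4/3$, with strict improvement for $|E|\geq 4$. Combined with the cost bound $\cost(\cS) \leq \val(\cS) + 2\excess(\OPT)$ from Lemma~\ref{lem:constructing_trees_from_opt_part_2}, a short arithmetic check shows that the coefficient $\tfrac{10}{3}$ in front of $(\gain(\cS) - \cost(\cS))$ is just enough to cover the share of $\val(\cS)$ coming from these classes, with the accompanying $\tfrac{20}{3}\excess(\OPT)$ loss folded into the $17\cdot\excess(\OPT)$ slack on the right-hand side. This is the regime in which the submodular contraction mechanism alone suffices.

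The hard case is equivalence classes $E = \{A,B\}$ of size exactly two---the matching obstacle of Figure~\ref{fig:matching}---where the gain rate barely matches the val rate and leaves no slack. Here I would extract an autarkic tuple containing $A$ and $B$ and charge the deficit to its $\profit$. If $\sep(A) = \sep(B)$, then $\{A,B\}$ is already an autarkic pair: its coverage includes the full dual mass of every set with the same separator (in particular $y_A + y_B$), and its connector is a single edge of the OPT tree $T_S$ witnessing the equivalence of $A$ and $B$, so the resulting profit pays the deficit. When instead $\sep(A) \neq \sep(B)$, one either combines with a third active component in a neighbouring class to form an autarkic \emph{triple} with $\sep(A\cup B \cup C) = \emptyset$---the ``triangle'' configuration where three demands rotate pairwise rather than matching identically, and precisely where the triples extension pays off---or else some OPT-edge lies in a cut of multiplicity at least two, feeding the $\Lambda$-portion of $\excess(\OPT)$ that the $17\cdot\excess$ slack absorbs.

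To globalize, I would assemble the per-class tuples into a single crossing-free collection $\cP$ by a sweep over the laminar family $\supp(y)$, greedily adopting a locally best autarkic tuple whose support lies within the current level and appealing to Lemma~\ref{lem:aut-laminar} to keep the chosen tuples laminar, and hence crossing-free. The agreeability property (A2) from Lemma~\ref{lem:constructing_trees_from_opt_part_1} is what makes this book-keeping coherent: since every dual-support set $U$ has all of $\sep(U)$ either inside $\cS$ or entirely outside, $\val$ attributes cleanly to equivalence classes without fractional splits. I anticipate the principal obstacle to be the exact accounting that yields the constants $\tfrac{10}{3}$ and $17$---specifically, showing that the coverage harvested at an $\{A,B\}$ level really does include not just $y_A + y_B$ but the full mass of ambient dual-support sets with identical separator, and controlling the triple case carefully enough that its richer connector cost (up to three representative-to-representative edges) is still dominated by its coverage. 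The per-class inequalities can then be summed via a short LP-style charging argument to give the claimed bound.
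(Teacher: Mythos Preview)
Your high-level decomposition is right—the paper also splits $\val(\cS)$ time-integrally by how many active sets each $S$ merges (its quantities $x_1,x_2,x_3$ correspond to merges into groups of size $2,3,\geq 4$), and $\tfrac{10}{3}$ is indeed tuned to the size-$\geq 3$ regime—but your autarkic-tuple extraction has genuine gaps. A size-$2$ equivalence class $\{A,B\}$ in $\cA^t/\cS$ does \emph{not} imply $\sep(A)=\sep(B)$: agreeability~(A2) only guarantees that every demand in $\sep(A)$ is satisfied by \emph{some} $S'\in\cS$, not necessarily by the particular $S$ linking $A$ and $B$. The paper resolves this with a distinction you are missing: a time $t$ is \emph{$S$-good} if no active set separating $S$ also separates any other $S'\in\cS$. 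At $S$-good times one can prove $\sep(A)=\sep(B)$; $S$-bad times are charged to $\excess(\OPT)$, because (A1) forces the offending active cut to be crossed by $\OPT$ at least twice. Without this good/bad split you cannot certify that your pairs are autarkic, nor account for the cases where they are not.

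Your use of triples is also inverted. The paper extracts autarkic \emph{triples} when there are exactly \emph{three} $S$-relevant active sets (the size-$3$ regime, contributing to $x_2$), not when a size-$2$ class has $\sep(A)\neq\sep(B)$; the latter is an $S$-bad time and goes to excess. Finally, you cannot pick a tuple per equivalence class per instant and assemble them crossing-free afterward by a laminar sweep: the same $S$ produces different $\{A,B\}$ at different times, and these typically all cross one another. The paper instead selects at most \emph{one} tuple $P_S$ per $S\in\cS$, taken at the smallest $S$-good time with at most two (resp.\ three) $S$-relevant active sets, and then separately proves via laminarity of $\supp(y)$ that this single $P_S$ covers \emph{every} later $U\in\cU_S$ with $m(U,\cdot)=1$ (resp.\ $\leq 2$). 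Crossing-freeness then follows directly from disjointness of the sets in $\cS$ via (A1), not from any sweep over $\supp(y)$.
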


Let $\cU_{\cS}$ be the collection of sets contributing to $\val(\cS)$,
that is
\[
\cU_{\cS} \ \coloneqq\ \{ U \in \supp(y) \cap \Usep : \cS\text{ satisfies }\varphi(U)\}.
\]
and for $S\in \cS$, let $\cU_S \coloneqq \cU_{\{S\}}$ be the
collection of sets contributing to $\val(\{S\})$. For any time $t$,
let $\cU_{\cS}^t := \cU_{\cS} \cap \cA^t$ and $\cU_{S}^t := \cU_{S}
\cap \cA^t$ be the corresponding sets that are active at time $t$.
Then
\begin{equation}\label{eq:ww:simple_expression_value}
\val(\cS) \ =\ \int_{t=0}^{\infty} |\cU_{\cS}^t | \ dt \ =\  \sum_{S
  \in \cS} \int_{t=0}^{\infty} |\cU_{S}^t | \ dt.
\end{equation}

For a time $t$ and a set $U\in \cU_{\cS}^t$, let
$m(U,t) \geq 0$ be the number of sets from $ \cU_{\cS}^t$
which with the set $U$ is merged when going from $\cA^t$ to
$\cA^t/\cS$.  Because the demand pair $\varphi(U)$ is separated by
$U$, there is some set in $\cA^t\setminus \{U\}$ that is also
separated by $\varphi(U)$.  Since $\cS$ satisfies the demand pair
$\varphi(U)$, the set $U$ is merged with at least one other set when
going from $\cA^t$ to $\cA^t/\cS$, and so $m(U,t) \geq 1$.  We observe
that
\begin{equation}\label{eq:gain_simple_bound}
  {\rm gain}(\cS)\ = \ 2 \cdot \int_{t=0}^{\infty} \big( |\cA^t|
  -|\cA^t / \cS| \big)\  dt 
  \ \geq \ 2\cdot \int_{t=0}^{\infty}\sum_{U\in \cU_{\cS}^t} \frac{m(U,t)}{m(U,t)+1} \  dt.
\end{equation}

Recall from \Cref{lem:constructing_trees_from_opt_part_2} that the
cost term is approximately equal to $\val(\cS)$. Consequently, if
$m(U,t) \geq 2$---i.e., when each set is being merged with at least
two others---most of the time, we get a contribution from $\gain(\cS)$
that is much more than $\cost(\cS)$. In fact, the larger the value of
$m(U,t)$, the easier the proof. The tricky cases are when
$m(U,t) = 1$, but then it seems natural to extract out autarkic pairs,
which is exactly what the rest of the construction does.

\begin{definition}[$S$-Good/Bad Times]
  For a set $S\in \cS$, a time $t\geq 0$ is called \emph{$S$-bad} if
  some set $U\in \cA^t$ that separates $S$ also separates some other
  set $S'\in \cS\setminus\{S\}$; otherwise, the time $t$ is
  \emph{$S$-good}.
\end{definition}

The following lemma formalizes the intuition that a small excess means
that bad times are rare; in fact, for the first read, one could
imagine all times being good. (Their presence just adds complexity to
the proof, and increases the constant in front of the $\excess(\OPT)$
term.)

\begin{lemma}[Few Bad Times]
  \label{lem:bad_times_yield_excess}
  We have 
  \[
    B(\cS) \ \coloneqq\ \int_{t=0}^{\infty}  |\{ S \in \cS : t \text{ is $S$-bad}\}| \ dt \ \leq\ 2 \cdot \excess(\OPT).
  \]
\end{lemma}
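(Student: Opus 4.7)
\textbf{Proof plan for \Cref{lem:bad_times_yield_excess}.}

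The plan is to charge each unit of $B(\cS)$ to excess coming from sets $U \in \supp(y)$ for which $|\OPT \cap \delta(U)|\geq 2$. Define, for each $U\in \supp(y)$,
\[
  n(U) \coloneqq |\{S\in \cS : U \text{ separates } S\}|.
\]
Note that $n(U)$ does not depend on $t$, only on the fixed set $U$ and the fixed collection $\cS$.

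\textbf{Step 1 (bounding $B(\cS)$ by $n(U)$).} If $t$ is $S$-bad, then by definition some $U\in \cA^t$ separates $S$ and also separates some other $S'\in \cS\setminus\{S\}$; in particular $n(U)\geq 2$ and $S$ contributes to $n(U)$. Using this, I would bound (with possible overcounting)
\[
  |\{S\in \cS : t \text{ is $S$-bad}\}| \ \leq\ \sum_{U\in \cA^t :\, n(U)\geq 2} n(U).
\]
Integrating over time and using $\int_0^\infty \mathbf{1}[U\in \cA^t]\, dt = y_U$, I obtain
\[
  B(\cS) \ \leq\ \sum_{U \in \supp(y) :\, n(U)\geq 2} n(U)\cdot y_U.
\]

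\textbf{Step 2 (linking $n(U)$ to $\OPT$-crossings).} The key observation is the inequality $n(U) \leq |\OPT \cap \delta(U)|$. Indeed, by agreeability property \ref{item:property_construction_from_opt}, each $S\in \cS$ is contained in the vertex set of a distinct connected component $T_S$ of $\OPT$. If $U$ separates $S$, then $T_S$ has vertices on both sides of $U$, so being connected it contributes at least one edge to $\OPT\cap \delta(U)$, and distinct $S$'s contribute distinct edges. Combining with Step~1,
\[
  B(\cS) \ \leq\ \sum_{U :\, |\OPT\cap \delta(U)|\geq 2} y_U\cdot |\OPT\cap \delta(U)|.
\]

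\textbf{Step 3 (bounding by excess).} Using LP feasibility, $c(\OPT)\geq \sum_U y_U\,|\OPT\cap \delta(U)|$, so
\[
  \excess(\OPT)\ =\ c(\OPT)-y(\Usep)\ \geq\ \sum_{U\in \Usep} y_U(|\OPT\cap \delta(U)|-1) + \sum_{U\in \Uunsep} y_U\, |\OPT\cap \delta(U)|.
\]
For $U\in \Usep$ with $|\OPT\cap \delta(U)|\geq 2$, we have $|\OPT\cap \delta(U)|-1 \geq \tfrac{1}{2}|\OPT\cap \delta(U)|$. For $U\in \Uunsep$, minimality of $\OPT$ forces $|\OPT\cap \delta(U)|\in\{0\}\cup[2,\infty)$, so the contribution is $y_U\,|\OPT\cap \delta(U)|\geq \tfrac{1}{2}y_U\,|\OPT\cap \delta(U)|$. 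Hence
\[
  2\cdot \excess(\OPT)\ \geq\ \sum_{U :\, |\OPT\cap \delta(U)|\geq 2} y_U\cdot |\OPT\cap \delta(U)|\ \geq\ B(\cS),
\]
which finishes the proof.

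The main delicate point is Step~1, where the inequality for $|\{S : t\text{ is $S$-bad}\}|$ may overcount an $S$ that is separated by several disjoint active components $U$; since the final bound only uses that $n(U)\geq 2$ implies $|\OPT\cap \delta(U)|\geq 2$, together with the factor~$2$ we already lose in Step~3, this overcounting is absorbed. The agreeability property \ref{item:property_construction_from_opt} (that distinct sets of $\cS$ sit in distinct $\OPT$-components) is the essential ingredient of Step~2.
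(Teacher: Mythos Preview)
Your proof is correct and follows essentially the same approach as the paper. Both proofs hinge on the observation that if $U\in\cA^t$ separates $k$ sets from $\cS$ then $|\OPT\cap\delta(U)|\geq k$ (via agreeability property~\ref{item:property_construction_from_opt}), together with the elementary inequality $k\leq 2(k-1)$ for $k\geq 2$; the paper phrases the counting as a token argument while you do it directly with the quantity $n(U)$, but the content is identical.
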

\begin{proof}
  Using the dual feasibility of $y$ and $|\OPT \cap U|\geq 1$ for all
  $U\in \Usep$, we have
  \begin{align*}
    \excess(\OPT) 
    \ = &\ c(\OPT) - \sum_{U\in \Usep} y_U 
          \ \geq \ \sum_{U\in \supp(y)} y_U \cdot |\OPT\cap \delta(U)|  - \sum_{U\in \Usep} y_U \\
    \geq&\  \sum_{U\in \supp(y)} y_U \cdot \max\big\{ |\OPT\cap \delta(U)| - 1, 0\big\} \\
    =&\ \int_{t=0}^{\infty} \sum_{U\in \cA^t} \max\big\{ |\OPT\cap \delta(U)| - 1, 0\big\}.
  \end{align*}
  Thus, it suffices to show that for every time $t\geq 0$, we have
  \begin{equation}\label{eq:bad_events}
    |\{ S \in \cS : t \text{ is $S$-bad}\}| \ \leq\ 2
    \cdot \sum_{U\in \cA^t} \max\big\{ |\OPT\cap \delta(U)| - 1,
    0\big\}. 
  \end{equation}
  Call a set $U\in \cA^t$ \emph{double-crossing} if
  $|\OPT\cap \delta(U)| > 1$. The collection $\cS$ is agreeable, so
  property~\ref{item:property_construction_from_opt} means that each
  $S \in \cS$ is a subset of some tree in $\OPT$.  Thus, if some set
  $U\in \supp(y)$ separates some $k$ sets from $\cS$, then we have
  $|\OPT\cap \delta(U)| \geq k$. If $U$ is double-crossing, it can
  give out $k \geq 2$ tokens, and each set $S \in \cS$ separated by it
  can collect one such token. By the definition of $S$-bad times,
  every set $S$ for which the time $t$ is $S$-bad then receives at
  least one token. Moreover, since $k \geq 2$,
  $k \leq 2\, \max(k- 1, 0)$, which implies~\eqref{eq:bad_events}, and
  completes the proof.
\end{proof}

\subsubsection{Lower-Bounding the Benefit from $\cS$}
\label{sec:lower-bound-benef}

In order to prove \Cref{lem:win-win}, we separate out $\val(\cS)$
depending on how many active sets are being merged by the sets
$S \in \cS$ (whether the number is $2, 3$, or higher)---since these
correspond to the gain we accrue---and whether the time is good or bad
for these sets---since this corresponds to the amount of
double-counting we have to account for. 
We prove bounds on $\gain(\cS)$ and $\cost(\cS)$, as well as on
$\coverage(\cP)$ and $\cost(\cP)$ in terms of these fine-grained
quantities.  Summing up a suitable linear combination of these bounds
then yields~\eqref{eq:win-win}.  We define
\begin{align*}
  x_1 \ \coloneqq&\ \int_{t=0}^\infty \sum_{\substack{S\in \cS: \\ t\text{ is $S$-good}}}   |\{ U\in \cU_{S}^t : m(U,t) = 1\}|   \ dt, \\
  \overline{x}_1 \ \coloneqq&\ \int_{t=0}^\infty \sum_{\substack{S\in \cS: \\ t\text{ is $S$-bad}}}   |\{ U\in \cU_{S}^t: m(U,t) = 1\}|   \ dt, \\
  x_2 \ \coloneqq&\ \int_{t=0}^\infty \sum_{\substack{S\in \cS: \\ t\text{ is $S$-good}}}   |\{ U\in \cU_{S}^t: m(U,t) = 2\}|   \ dt, \\
  \overline{x}_2 \ \coloneqq&\ \int_{t=0}^\infty \sum_{\substack{S\in
                              \cS: \\ t\text{ is $S$-bad}}}   |\{ U\in
  \cU_{S}^t : m(U,t) = 2\}|   \ dt, \quad\text{and}  \\
  x_3 \ \coloneqq&\ \int_{t=0}^\infty   |\{ U\in \cU_{\cS}^t: m(U,t) \geq 3 \}|   \ dt .
\end{align*}
Using \eqref{eq:ww:simple_expression_value} and the fact that the sets $\cU_S$ with $S\in \cS$ form a partition of $\cU_{\cS}$, we then get
\begin{equation}\label{eq:bound_value}
\val(\cS)\ =\ x_1 + \overline{x}_1 + x_2 + \overline{x}_2 + x_3.
\end{equation}

\begin{lemma}\label{lem:bound_gain_cost}
  ${\rm gain}(\cS) - \cost(\cS) \ \geq\   \tfrac{1}{3} \cdot (x_2 + \overline{x}_2) + \tfrac{1}{2} \cdot x_3 - 2\cdot \excess(\OPT).$
\end{lemma}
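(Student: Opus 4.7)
The plan is to combine the two key quantitative estimates already established: the integral lower bound on $\gain(\cS)$ from~\eqref{eq:gain_simple_bound}, and the upper bound on $\cost(\cS)$ coming from \Cref{lem:constructing_trees_from_opt_part_2}. After subtracting the latter from the former, the leftover integrand will turn out to vanish on the $m=1$ contributions (this is the whole reason the separate $x_1,\bar{x}_1$ terms were introduced) and to be bounded below by $\tfrac{1}{3}$ on the $m=2$ contributions and by $\tfrac{1}{2}$ on the $m\geq 3$ contributions, giving exactly the right-hand side.

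More precisely, the first step is to note that $\cost(S)\leq c(T_S)$ for each $S\in\cS$ since $T_S$ is, by construction, a feasible Steiner tree on $S$. Summing and applying \Cref{lem:constructing_trees_from_opt_part_2} yields
\[
\cost(\cS)\ \leq\ \sum_{S\in\cS} c(T_S)\ \leq\ \val(\cS) + 2\cdot \excess(\OPT).
\]
Combining with \eqref{eq:gain_simple_bound} and expressing $\val(\cS)=\int_{t=0}^\infty |\cU_{\cS}^t|\,dt$ as an integral, I obtain
\[
\gain(\cS)-\cost(\cS)\ \geq\ \int_{t=0}^\infty \sum_{U\in \cU_{\cS}^t}\left(\frac{2\,m(U,t)}{m(U,t)+1}-1\right) dt - 2\cdot\excess(\OPT)\ =\ \int_{t=0}^\infty \sum_{U\in \cU_{\cS}^t}\frac{m(U,t)-1}{m(U,t)+1}\,dt - 2\cdot\excess(\OPT).
\]

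The second step is to evaluate the integrand by splitting the contribution of each $U\in \cU_{\cS}^t$ according to the value of $m(U,t)$. For $m=1$ the factor $(m-1)/(m+1)$ is $0$, so the $x_1+\bar{x}_1$ contributions drop out entirely. For $m=2$ the factor equals $\tfrac{1}{3}$, accounting for $\tfrac{1}{3}(x_2+\bar{x}_2)$. For $m\geq 3$ the factor is at least $\tfrac{1}{2}$, contributing at least $\tfrac{1}{2}x_3$. Using the definitions of $x_1,\bar{x}_1,x_2,\bar{x}_2,x_3$ to evaluate the integrals then yields exactly the claimed bound.

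There is no real obstacle here beyond bookkeeping: all the heavy lifting was done in \eqref{eq:gain_simple_bound} and \Cref{lem:constructing_trees_from_opt_part_2}. The only subtlety is observing that $m(U,t)\geq 1$ for every $U\in\cU_{\cS}^t$ (guaranteed because $\cS$ satisfies $\varphi(U)$, as already noted right before~\eqref{eq:gain_simple_bound}), so that the partition by $m\in\{1\}\cup\{2\}\cup\{\geq 3\}$ really does cover all of $\cU_{\cS}^t$ and the telescoping of integrals matches $x_1+\bar{x}_1+x_2+\bar{x}_2+x_3=\val(\cS)$ from~\eqref{eq:bound_value}.
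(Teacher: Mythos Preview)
Your proof is correct and follows essentially the same approach as the paper: bound $\gain(\cS)$ below via~\eqref{eq:gain_simple_bound}, bound $\cost(\cS)$ above via \Cref{lem:constructing_trees_from_opt_part_2} combined with~\eqref{eq:bound_value}, and subtract. The only cosmetic difference is that the paper evaluates $\tfrac{2m}{m+1}$ and $1$ separately for each case before subtracting, whereas you first form the difference $\tfrac{m-1}{m+1}$ and then evaluate it case by case; the arithmetic is identical.
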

\begin{proof}
  Using~\eqref{eq:gain_simple_bound} and the definition of $x_1,\overline{x}_1, x_2, \overline{x}_2,x_3$, we have
  \begin{align*}
    {\rm gain}(\cS)\ \geq&\ 2\cdot \int_{t=0}^{\infty}\sum_{U\in \cU_{\cS}^t} \frac{m(U,t)}{m(U,t)+1} \  dt 
                           \ \geq\  2\cdot \big( \tfrac{1}{2} \cdot x_1 + \tfrac{1}{2} \cdot
                           \overline{x}_1 + \tfrac{2}{3} \cdot x_2 +
                           \tfrac{2}{3} \cdot \overline{x}_2 +
                           \tfrac{3}{4} \cdot x_3 \big). 
  \end{align*}
  Moreover, combining \Cref{lem:constructing_trees_from_opt_part_2} and
  (\ref{eq:bound_value}), we get
  \begin{equation}\label{eq:bound_cost_S}
    \cost(\cS) \ \leq\  x_1 + \overline{x}_1 + x_2 + \overline{x}_2 + x_3 + 2 \cdot \excess(\OPT).
  \end{equation}
  Combining these two inequalities completes the proof.
\end{proof}

Of course, this is not enough if most of the value in $\val(\cS)$
comes from $x_1 + \overline{x}_1$. In fact, the case of
$\overline{x}_1$ being large is also easy, since we can use
\Cref{lem:bad_times_yield_excess} to infer that we would have large
excess. Hence, it comes down the case where $x_1$ is large: it is
precisely in this case that we show how to distill out an autarkic
collection with large profit.

\subsubsection{Extracting the Autarkic Collection $\cP$}
\label{sec:extr-autark-coll}

\begin{definition}[$S$-relevant]
  For a set $S\in \cS$, we say that a set $U\in \supp(y)$ is
  \emph{$S$-relevant} if $\sep(U)$ contains some demand pair satisfied
  by $S$. 
\end{definition}
Observe that every $S$-relevant set $U$ separates the set $S$.

\begin{definition}[Connection Time]
  For a demand pair, let the \emph{connection time} of the pair be the
  first time $t$ where the two vertices no longer belong to different
  sets in $\cA^t$. 
\end{definition}
Note that the connection time of the pair is never more than the
deactivation time of its vertices.

\begin{lemma}\label{lem:simple_aut_pairs}
  There is a crossing-free autarkic collection $\cP$ such that
  \begin{align*}
    \coverage(\cP) \ \geq&\  x_1  \\
    \cost(\cP) \ \leq&\ x_1 + \overline{x}_1 + x_2 + \overline{x}_2 + x_3 + 2\cdot \excess(\OPT).
  \end{align*}
\end{lemma}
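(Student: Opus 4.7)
The plan is to build $\cP$ as a disjoint union $\cP = \bigsqcup_{S\in \cS}\cP_S$, where for each $S\in \cS$ the sub-collection $\cP_S$ consists of autarkic pairs whose separated demand pairs are all satisfied by $S$. Fix $S\in \cS$ and consider any $U\in \cU_S$ with
\[
I_S(U)\ \coloneqq\ \{t: U\in \cA^t,\ t \text{ is }S\text{-good},\ m(U,t)=1\} \ \neq\ \emptyset.
\]
At any $t\in I_S(U)$, the merged component $M\sse \cA^t$ of $U$ in $\cA^t/\cS$ contains exactly two elements of $\cU_{\cS}^t$: namely $U$ and some partner $U'$. We take $\{U,U'\}$ as a candidate autarkic pair and place it in $\cP_S$ (modulo the pruning described below).

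To verify that $\{U,U'\}$ is autarkic and that $\sep(U)=\sep(U')\sse \{d\in \cD: d\sse S\}$, note first that by $S$-goodness of $t$, the set $U$ separates no element of $\cS$ other than $S$; combined with the agreeable property of $\cS$, this forces every demand $d=\{a,b\}\in \sep(U)$ to have both endpoints in $S$. For such $d$ with $a\in U$ and $b\notin U$, the active component $U_b\in \cA^t$ containing $b$ lies in $M$ (since both $U$ and $U_b$ meet $S$, so $S$ merges them in $\cA^t/\cS$), and $U_b\in \cU_{\cS}^t$ by the agreeable property applied to $U_b$. Since $m(U,t)=1$, we must have $U_b=U'$, so $b\in U'$ and $d\in \sep(U')$. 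This gives $\sep(U)\sse \sep(U')$; the reverse inclusion follows from the symmetric argument applied to $U'$, which also satisfies $m(U',t)=1$ because $|M\cap \cU_{\cS}^t|=2$.

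For crossing-freeness: within each $\cP_S$, we include at most one pair per $\sep$-equivalence class; if two candidate pairs $\{U_1,U_1'\},\{U_2,U_2'\}$ share a demand $d$, the partner-identification argument applied to $d$ identifies the same $b$-side partner and the same $a$-side host, forcing the pairs to coincide. Across different $\cP_S,\cP_{S'}$, the separated demands lie in disjoint subsets of $\cD$ (demands satisfied by $S$ vs. by $S'$, and $S\cap S'=\emptyset$), so no crossings occur. For coverage, each $U\in \cU_{\cS}$ with $I_{S_U}(U)\neq \emptyset$ (where $S_U$ is the unique $S\in \cS$ with $U\in \cU_S$) is covered by its pair and contributes $y_U\geq |I_{S_U}(U)|$ to $\coverage(\cP)$, summing to at least $x_1$. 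For the cost bound, each pair has connector length $d_G(a,b)\leq d_{T_S}(a,b)$ for a designated representative $\{a,b\}\in \sep(U)$ with $a\in U\cap S$, $b\in U'\cap S$; by choosing representatives compatibly with the laminar structure of $\supp(y)\cap \cU_S$, we arrange that the $T_S$-paths of distinct pairs in $\cP_S$ are edge-disjoint, giving $\cost(\cP_S)\leq c(T_S)$. Summing over $S$ and invoking \Cref{lem:constructing_trees_from_opt_part_2} yields
\[
\cost(\cP)\ \leq\ \sum_{S\in \cS}c(T_S)\ \leq\ \val(\cS) + 2\cdot \excess(\OPT)\ =\ x_1+\overline{x}_1+x_2+\overline{x}_2+x_3+2\cdot \excess(\OPT).
\]

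The main obstacle is the cost-charging step: ensuring that designated representatives can be chosen so that the resulting $T_S$-paths are edge-disjoint within each $\cP_S$. This requires a careful recursive (or dynamic-programming) traversal of the laminar family $\supp(y)\cap \cU_S$ and its alignment with $T_S$; a natural route is to process the laminar tree top-down, assigning each pair a representative whose $T_S$-path lies inside a subtree that no other pair's representative uses. The partner-identification and crossing-freeness arguments are comparatively direct, following from a clean combination of the agreeable property, $S$-goodness, and the $m(U,t)=1$ constraint.
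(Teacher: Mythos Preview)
Your construction is on the right track, but you are making it much harder than necessary, and the ``main obstacle'' you identify is illusory. The key fact you are missing is that \emph{all} the candidate pairs you build for a given $S$ share the \emph{same} $\sep$-value. Concretely: at any $S$-good time $t^*$ with $m(U,t^*)=1$, the merged class of $U$ in $\cA^{t^*}/\cS$ consists precisely of the $S$-relevant sets (your own argument shows every demand in $\sep(U)$ lies inside $S$, and every $S$-relevant active set is merged with $U$ via $S$); hence there are exactly two $S$-relevant sets. Since the number of $S$-relevant active sets is non-increasing in $t$, once it drops to two it stays two until it drops further, and a short laminarity argument shows that the $\sep$ of the two $S$-relevant sets is invariant across this entire interval. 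Thus after your pruning ``one pair per $\sep$-class'', each $\cP_S$ has at most one element.

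With $|\cP_S|\le 1$, the cost bound is immediate: the single connector is a shortest path between two vertices of $S$, hence has length at most $c(T_S)$, and summing over $S$ gives $\sum_S c(T_S)\le \val(\cS)+2\,\excess(\OPT)$ by \Cref{lem:constructing_trees_from_opt_part_2}. No edge-disjoint charging is needed, and your hand-wave about a recursive traversal of the laminar family can be deleted. Similarly, your within-$\cP_S$ crossing-freeness argument (``the pairs coincide'') is not correct as stated---the partner sets at different times are genuinely different---but becomes vacuous once you know $|\cP_S|\le 1$.

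This is exactly the paper's route: for each $S$ it takes the single pair $\{A_1,A_2\}$ of $S$-relevant sets at the \emph{earliest} $S$-good time $t'$ with at most two such sets, and then proves that every $U$ contributing to $x_1(S)$ satisfies $\sep(U)=\sep(A_1)$, so this one pair covers them all. Your proposal reaches the same endpoint once you add the ``single $\sep$-class'' observation; without it, the edge-disjointness step is both unproved and (were it actually needed) likely false in general.
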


\begin{proof}
  To construct $\cP$, we proceed as follows, starting with
  $\cP=\emptyset$.  For each set $S\in \cS$, we consider the smallest
  $S$-good time $t'$, where $\cA^{t'}$ contains at most two
  $S$-relevant sets.  If $\cA^{t'}$ contains some $S$-relevant set,
  then there are exactly two $S$-relevant sets $A_1,A_2\in \cA^{t'}$.
  Because the sets $A_i$ with $i\in \{1,2\}$ are $S$-relevant, we have
  $\sep(A_i)\neq \emptyset$.  In order to show that $\{A_1,A_2\}$ is
  an autarkic pair, it remains to show $\sep(A_1)=\sep(A_2)$.  Since
  $\cS$ is agreeable and $S\in \cS$ satisfies some demand pair in
  $\sep(A_i)$, every demand pair in $\sep(A_1)$ is satisfied by the
  collection $\cS$.  The time $t'$ being $S$-good implies that every
  demand pair in $\sep(A_1)$ is satisfied by the set $S$.  Any demand
  pair in $\sep(A_1)$ has a connection time strictly larger than $t'$
  because $A_1\in \cA^{t'}$ and thus both vertices of such a demand
  pair must be contained in sets from $\cA^{t'}$.  But $A_1$ and $A_2$
  are the only $S$-relevant sets in $\cA^{t'}$, so one vertex of the
  demand pair must be in $A_1$ and the other in $A_2$.  This shows
  that any demand pair in $\sep(A_1)$ is also contained in
  $\sep(A_2)$.  By symmetry, we also have
  $\sep(A_2)\subseteq \sep(A_1)$.  We conclude that
  $\sep(A_1)=\sep(A_2)$ and thus $\{A_1,A_2\}$ is indeed an autarkic
  pair, which we add to~$\cP$.

  Next, we show that the autarkic collection $\cP$ is crossing-free.
  In the construction of $\cP$, we included at most one autarkic tuple
  $P_S$ for each set $S\in \cS$.  We saw above that every set $A_i$
  contained in this autarkic tuple constructed for $S\in \cS$ is
  satisfied by $S$.  Thus, every demand pair in $\sep(P_S)$ is
  satisfied by $S$.  The collection $\cS$ being agreeable implies that
  sets in $\cS$ are disjoint, so no demand pair is satisfied by two
  different sets from $\cS$.  This implies that the sets $\sep(P_S)$
  of demand pairs are disjoint, and  $\cP$ is crossing-free.

  We now bound the coverage and cost of the collection $\cP$. Since
  the tree $T_S$ connects every demand pair satisfied by any
  $S \in \cS$, the autarkic pair $P_S$ corresponding to $S$ added to
  $\cP$ also satisfies $\cost(P_S) \leq c(T_S)$. Hence using
  \Cref{lem:constructing_trees_from_opt_part_2}, we have
  \[
    \cost(\cP) \ \leq\ \val(\cS) + 2\cdot \excess(\OPT) \ = \ x_1 +
    \overline{x}_1 + x_2 + \overline{x}_2 + x_3 + 2\cdot
    \excess(\OPT). 
  \]

  Finally, it remains to prove $\coverage(\cP) \geq x_1$.  To this
  end, we fix a set $S\in \cS$ and consider the autarkic pair
  $P_S=\{A_1, A_2\} \in \cP$.  It suffices to show that if
  $U\in \cU_{S}^{t^*}$ for some $S$-good time $t^*$ with
  $m(U,t^*) = 1$, then the autarkic pair $P_S$ covers $U$.  Because
  $m(U,t^*)=1$, there are at most two $S$-relevant sets in
  $\cA^{t^*}$; here we used the fact that $\cS$ is agreeable to infer
  that every $S$-relevant set is contained in $\cU_{\cS}$. Recall that
  this pair $\{A_1, A_2\}$ were added to $\cP$ because
  $A_1, A_2 \in \cA^{t'}$, where $t'$ was the smallest $S$-good time
  where $\cA^{t'}$ contained at most two $S$-relevant sets. The
  minimal choice of $t'$ implies $t' \leq t^*$.  Using that the
  connection time of the demand pair $\varphi(U)=\{v_1,v_2\}$ is greater
  than $t^*\geq t'$, each of the two vertices $v_1$ and $v_2$ of
  $\varphi(U)$ is contained in a different set of $\cA^{t'}$.  Hence, we
  may assume $v_1 \in A_1$ and $v_2 \in A_2$ with $v_1\in U$ and
  $v_2 \notin U$.  Using that $\supp(y)$ is laminar, this implies
  $A_1 \subseteq U$ and $A_2\subseteq V\setminus U$ (because
  $v_1\in A_1 \cap U$, $v_2\in A_2 \setminus U$ and because $U$ cannot
  be a subset of any $A_i$ because of $A_i \in \cA^{t'}$,
  $U\in \cA^{t^*}$ with $t' \leq t^*$).

  Because $P_S$ is an autarkic pair, we have
  $\sep (A_1) = \sep (A_2)$.  To prove that $U$ is covered by the
  autarkic pair $P_S$, we have to show $\sep(U) = \sep (A_1)$.  Consider
  a demand pair $\{u,w\}$ in $\sep(A_1)$, where we may assume
  $u\in A_1 \subseteq U$, implying $w\in A_2 \subseteq V\setminus U$.
  Then $\{u,w\} \in \sep(U)$.  Now consider a demand pair $\{u,w\}$ in
  $\sep(U)$.  Because $\cS$ is agreeable and $S\in \cS$ satisfies the
  demand pair $\varphi(U)$, the collection $\cS$ must also satisfy
  $\{u,w\}$.  Moreover, because the time $t^*$ is $S$-good, the set
  $S\in \cS$ must satisfy $\{u,w\}$, implying $\{u,w\} \in \sep(A_1)$
  (where we use that $A_1$ and $A_2$ are the only $S$-relevant sets in
  $\cA^{t'}$ and the connection time of $\{u,w\}$ is at least
  $ t^* \geq t'$).
\end{proof}

\subsubsection{Better Autarkic Collections $\cP$}
\label{sec:bett-autark-coll}

Combining
\Cref{lem:bad_times_yield_excess,lem:bound_gain_cost,lem:simple_aut_pairs}
we can already get a constant below 2. However, we can improve the
bound in two ways: (a) we get a more refined cost bound for
collections of autarkic pairs, and (b)~we can also use autarkic
\emph{triples}. We capture these in the following two lemmas.

\begin{restatable}{lemma}{AutBetterCost}\label{lem:better_cost_bound_aut_pairs}
  There is a crossing-free autarkic collection $\cP$ such that
  \begin{align*}
    \coverage(\cP) \ \geq&\  x_1  \\
    \cost(\cP) \ \leq&\ x_1 + \tfrac{2}{3} \cdot x_2 + \tfrac{1}{2} \cdot x_3 + 6 \cdot \excess(\OPT).
  \end{align*}
\end{restatable}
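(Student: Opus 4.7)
The plan is to keep the construction of $\cP$ exactly as in the proof of \Cref{lem:simple_aut_pairs}: for each $S \in \cS$, take the smallest $S$-good time $t'$ at which $\cA^{t'}$ contains at most two $S$-relevant sets, and if these sets $A_1, A_2$ exist, include the autarkic pair $P_S = \{A_1, A_2\}$ in $\cP$. The proofs that $\cP$ is crossing-free and that $\coverage(\cP) \geq x_1$ carry over verbatim, so only the cost bound needs improvement.

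For the sharper cost bound, instead of $\cost(P_S) \leq c(T_S)$ as in the simple lemma, I would use $\cost(P_S) = c_{ab} \leq d_T(a,b)$, where $\{a,b\}$ is the designated representative with $a \in A_1, b \in A_2$ and $T$ is the OPT tree containing $S$. Decomposing $c(T_S) = d_T(a,b) + c(\text{off-path branches})$, the task becomes to lower bound the branch cost in terms of the dual mass contributed to $\val(\{S\})$ beyond the $x_1$ regime. The combinatorial heart is this: at any time $t$, any super-component of $\cA^t/\cS$ meeting $\cU_S^t$ in $k+1$ sets (so $m(U,t) = k$ for each such $U$) gives rise to $k+1$ active components, each containing a vertex of some demand pair inside $S$. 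Of these, at most two hold $a$ and $b$, while the remaining $k-1$ contain off-path vertices of $S$ that $T_S$ must reach through branches; a dual-feasibility charging of the branch edges then shows the branches pay for at least a $\tfrac{k-1}{k+1}$ fraction of the associated dual mass. Since $\tfrac{k-1}{k+1}$ equals $\tfrac{1}{3}$ for $k=2$ and is at least $\tfrac{1}{2}$ for $k \geq 3$, integrating over time and summing over $S$ would give
\[
\sum_{S \in \cS} c(\text{off-path branches}) \;\geq\; \tfrac{1}{3}\, x_2 + \tfrac{1}{2}\, x_3 - O(\excess(\OPT)).
\]

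Combining this branch lower bound with \Cref{lem:constructing_trees_from_opt_part_2} produces $\cost(\cP) \leq x_1 + \overline{x}_1 + \tfrac{2}{3}(x_2 + \overline{x}_2) + \tfrac{1}{2}\, x_3 + O(\excess(\OPT))$; \Cref{lem:bad_times_yield_excess} then absorbs the residual $\overline{x}_1 + \tfrac{2}{3}\overline{x}_2$ terms into the excess using the total-bad-time bound $B(\cS) \leq 2\excess(\OPT)$, yielding the claimed slack of $6\, \excess(\OPT)$. The main obstacle is the precise dual-feasibility charging of OPT branch edges at cuts $U$ where $|\OPT \cap \delta(U)| \geq 2$: at such multiply-crossed cuts the clean bijection between OPT edges and dual variables breaks, and the $k-1$ off-path active components may share OPT edges in the branches. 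The careful laminar and time-evolution bookkeeping needed to retain the $\tfrac{k-1}{k+1}$ factor through these multi-crossings is exactly what pushes the excess slack from $2\excess(\OPT)$ in the simple lemma to $6\excess(\OPT)$ here.
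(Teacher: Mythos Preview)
Your plan is essentially the paper's proof: the same $\cP$, the same coverage argument, and the same combinatorial heart---at an $S$-good time with $\ell$ sets in $\cU_S^t$, only two can hold $\{a,b\}$, so the $a$-$b$ path accounts for a $2/\ell$ fraction of the dual mass, yielding the coefficients $1,\tfrac23,\tfrac12$ on $x_1,x_2,x_3$. The paper executes this more directly by writing $c(R)=c(T)-c(T\setminus R)$ for the \emph{full} OPT component $T$ (not the subtree $T_S$) and bounding $c(T\setminus R)$ from below via dual feasibility on cuts $U$ with $|T\cap\delta(U)|=1$ and $|\{v,w\}\cap U|\neq 1$; this bypasses \Cref{lem:constructing_trees_from_opt_part_2} entirely and holds the per-component slack at $2\,\excess(T)$, whence the $6\,\excess(\OPT)$ after adding $2B(\cS)\le 4\,\excess(\OPT)$. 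One slip in your sketch: the branch lower bound yields only $\tfrac13\, x_2 + \tfrac12\, x_3$, not $\tfrac13(x_2+\overline{x}_2)$, because the equality $m(U,t)=|\cU_S^t|-1$ holds only at $S$-good times---at $S$-bad times the merged super-component may absorb sets from other $\cU_{S'}^t$, so $m(U,t)$ can exceed the off-path count within $\cU_S^t$. With that corrected, your $T_S$-based route still works but naively lands at $7\,\excess(\OPT)$; recovering $6$ requires the further observation that the doubly-crossed cuts charged inside the proof of \Cref{lem:constructing_trees_from_opt_part_2} (which lie in $\cU_{\rm loss}$) are disjoint from those charged in your branch argument (which lie in $\cU_S$), so together they contribute a single $\excess(T)$ rather than two.
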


\begin{restatable}{lemma}{AutarkicTriples}\label{lem:existence_autarkic_triples}
  There is a crossing-free autarkic collection $\cP$ such that
  \begin{align*}
    \coverage(\cP) \ \geq&\  x_1 + x_2 \\
    \cost(\cP) \ \leq&\ x_1 + \overline{x}_1 + x_2 + \overline{x}_2 + x_3 + 2\cdot \excess(\OPT).
  \end{align*}
\end{restatable}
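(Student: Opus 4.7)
The plan is to extend the construction from Lemma~\ref{lem:simple_aut_pairs} to produce autarkic triples alongside the pairs. For each $S \in \cS$, let $t'_S$ now denote the smallest $S$-good time at which $\cA^{t'_S}$ contains at most three $S$-relevant sets. If $\cA^{t'_S}$ contains at most one $S$-relevant set, skip $S$; if it contains exactly two $S$-relevant sets $A_1, A_2$, add the autarkic pair $\{A_1, A_2\}$ to $\cP$ as in Lemma~\ref{lem:simple_aut_pairs}; if it contains exactly three $S$-relevant sets $A_1, A_2, A_3$, add the tuple $\{A_1, A_2, A_3\}$ to $\cP$ and verify that it is autarkic.

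The only new condition to verify for $\{A_1, A_2, A_3\}$ is that $\sep(A_1 \cup A_2 \cup A_3) = \emptyset$; disjointness, joint activity at $t'_S$, and $\sep(A_i) \neq \emptyset$ are immediate. Suppose toward a contradiction that some demand pair $\{u, w\}$ had $u \in A_i$ while $w$ lay in no $A_j$. Since $A_i$ is $S$-relevant, agreeability of $\cS$ forces $\{u, w\}$ to be satisfied by $\cS$, and $S$-goodness of $t'_S$ then pins this down to $S$ itself (since $A_i$ separates both $S$ and the satisfying set), so $w \in S$. But $w$ is still active at $t'_S$ and lies in some component of $\cA^{t'_S}$ that is $S$-relevant (it contains the $S$-vertex $w$ while the $S$-vertex $u$ lies in the disjoint component $A_i$), contradicting the assumption that $w$ lies in none of the three $S$-relevant components $A_1, A_2, A_3$.

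For the coverage bound $\coverage(\cP) \geq x_1 + x_2$, we must show that for every $U \in \cU_S^{t^*}$ at an $S$-good time $t^*$ with $m(U, t^*) \in \{1,2\}$, the tuple $P_S$ covers $U$, i.e., $\sep(U) = \sep(A_i)$ for some $A_i \in P_S$. Minimality of $t'_S$ gives $t^* \geq t'_S$, and $U$ itself is $S$-relevant. When $m(U, t^*) = 2$, the three $A_j$'s remain unmerged at $t^*$; writing $\tilde A_j \supseteq A_j$ for the grown component at $t^*$, the non-merger forces every $S$-vertex of $\tilde A_j$ to already lie in $A_j$, and since every pair in $\sep(U)$ or $\sep(A_i)$ must be $S$-satisfied (by agreeability and $S$-goodness), this immediately gives $\sep(\tilde A_i) = \sep(A_i)$. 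When $m(U, t^*) = 1$, exactly two of the $A_j$'s have merged into a single component by $t^*$ while the third $A_i$ remains isolated; a completely parallel bookkeeping shows that \emph{both} of the two remaining $S$-relevant components at $t^*$ have $\sep$-set equal to $\sep(A_i)$, so $P_S$ still covers $U$ through its $A_i$ entry. The main technical obstacle is exactly this case analysis: tracking, via laminarity of $\supp(y)$ together with monotone growth of active components, which $S$-vertices inhabit each grown component at $t^*$ in order to verify the inclusions $\sep(U) \subseteq \sep(A_i)$ and $\sep(U) \supseteq \sep(A_i)$.

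The cost bound carries over essentially verbatim from Lemma~\ref{lem:simple_aut_pairs}: the connector of each $P_S$ is a Steiner tree on at most six designated representatives, all lying in $S$, so its cost is at most $c(T_S)$. Summing via Lemma~\ref{lem:constructing_trees_from_opt_part_2} together with~(\ref{eq:bound_value}) then yields $\cost(\cP) \leq \val(\cS) + 2\excess(\OPT) = x_1 + \overline{x}_1 + x_2 + \overline{x}_2 + x_3 + 2\excess(\OPT)$. Finally, $\cP$ is crossing-free because every demand pair in $\sep(P_S)$ is satisfied by $S$ (by agreeability of $\cS$ and $S$-goodness of $t'_S$), and the sets of $\cS$ are pairwise disjoint by property~\ref{item:property_construction_from_opt}, so the sets $\sep(P_S)$ are pairwise disjoint as $S$ ranges over $\cS$.
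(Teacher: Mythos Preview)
Your construction, the verification that $\{A_1,A_2,A_3\}$ is an autarkic triple, the crossing-free argument, and the cost bound all match the paper's proof. The coverage argument is where you diverge: the paper does not split on the value of $m(U,t^*)$ nor track grown components $\tilde A_j$. Instead it uses laminarity of $\supp(y)$ directly: since $A_j \in \cA^{t'}$, $U \in \cA^{t^*}$, and $t' \leq t^*$, each $A_j$ satisfies $A_j \subseteq U$ or $A_j \cap U = \emptyset$. Writing $\varphi(U) = \{v_1,v_2\}$ with $v_1 \in U$, connection-time considerations force (after relabeling) $A_1 \subseteq U$ and $A_2 \subseteq V \setminus U$, so the only remaining case split is on whether $A_3 \subseteq U$ (then one shows $\sep(U) = \sep(A_2)$) or $A_3 \cap U = \emptyset$ (then $\sep(U) = \sep(A_1)$). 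This sidesteps any discussion of how components evolve between $t'$ and $t^*$.

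Your dynamic approach can be made to work, but the intermediate claim ``the non-merger forces every $S$-vertex of $\tilde A_j$ to already lie in $A_j$'' is false as stated: a vertex $v \in S$ all of whose $S$-satisfied demand partners are already connected to it by time $t'$ sits in a non-$S$-relevant component of $\cA^{t'}$, and that component may later merge into $\tilde A_j$, giving $v \in \tilde A_j \setminus A_j$. What you actually need---and what does hold---is the weaker statement that every endpoint of a pair in $\sep(\tilde A_j)$ lying in $\tilde A_j$ already lies in $A_j$; this follows because such an endpoint is still separated from its partner at $t^* > t'$, so its component at $t'$ is $S$-relevant, hence equal to some $A_k$, and distinctness of the $\tilde A_k$ then forces $k = j$. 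With this fix your argument goes through; the paper's static laminarity route is simply shorter and avoids tracking which $A_j$'s have merged.
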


We prove
\Cref{lem:better_cost_bound_aut_pairs,lem:existence_autarkic_triples}
in \Cref{sec:improving_constant}. %
In order to prove \Cref{lem:win-win}, we first 
combine them as follows:
\begin{corollary}\label{cor:existence_autarkic}
  \begin{equation*}
    2\coverage(\cP) - \cost(\cP) \ \geq x_1 - \tfrac{1}{9} \cdot x_2
    -
    \tfrac{2}{3}\cdot  x_3
    - \tfrac{1}{3} \cdot (\overline{x}_1 + \overline{x}_2)
    - \tfrac{14}{3} \cdot \excess(\OPT). 
  \end{equation*}
\end{corollary}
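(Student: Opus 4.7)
The plan is to prove this by taking a convex combination of the two profit bounds provided by \Cref{lem:better_cost_bound_aut_pairs} and \Cref{lem:existence_autarkic_triples}, and then observing that the maximum profit attainable (over any crossing-free autarkic collection) is at least any such convex combination.

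Specifically, let $\cP_1$ be the collection from \Cref{lem:better_cost_bound_aut_pairs} and $\cP_2$ the collection from \Cref{lem:existence_autarkic_triples}. Combining $\profit(\cdot) = 2\coverage(\cdot) - \cost(\cdot)$ with the stated inequalities, I obtain
\begin{align*}
  \profit(\cP_1) &\ \geq\ x_1 - \tfrac{2}{3}x_2 - \tfrac{1}{2}x_3 - 6\,\excess(\OPT),\\
  \profit(\cP_2) &\ \geq\ x_1 + x_2 - \overline{x}_1 - \overline{x}_2 - x_3 - 2\,\excess(\OPT).
\end{align*}
For any weights $\lambda_1,\lambda_2\geq 0$ with $\lambda_1+\lambda_2=1$, the better of the two collections satisfies $\max\{\profit(\cP_1),\profit(\cP_2)\}\geq \lambda_1\profit(\cP_1) + \lambda_2\profit(\cP_2)$, so it suffices to pick weights that match the target coefficients.

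The choice $\lambda_1=\tfrac{2}{3}$ and $\lambda_2=\tfrac{1}{3}$ will do the job. Indeed, I then compute:
\begin{itemize}[nosep]
\item coefficient of $x_1$: $\tfrac{2}{3}+\tfrac{1}{3}=1$;
\item coefficient of $x_2$: $-\tfrac{2}{3}\cdot\tfrac{2}{3}+\tfrac{1}{3}\cdot 1=-\tfrac{1}{9}$;
\item coefficient of $x_3$: $-\tfrac{1}{2}\cdot\tfrac{2}{3}-\tfrac{1}{3}=-\tfrac{2}{3}$;
\item coefficient of $\overline{x}_1$ and of $\overline{x}_2$: $-\tfrac{1}{3}$ each;
\item coefficient of $\excess(\OPT)$: $-6\cdot\tfrac{2}{3}-2\cdot\tfrac{1}{3}=-\tfrac{14}{3}$.
\end{itemize}
Each of these is exactly what the corollary asks for. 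Taking $\cP$ to be whichever of $\cP_1,\cP_2$ has larger profit then yields the inequality in the statement.

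No single step is really hard here; the only thing to check is that the system of desired coefficients admits nonnegative weights summing to one, and the computation above shows that it does (unique solution $\lambda_1=2/3,\lambda_2=1/3$). In particular, the coefficient of $x_2$ is the binding constraint that determines $\lambda_2$, after which the other coefficients fall out automatically and match. The only subtlety to mention is that we do not need both collections simultaneously: since we are free to choose the better of $\cP_1,\cP_2$, the claim follows from the standard fact that a maximum dominates any convex combination.
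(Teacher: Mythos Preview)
Your proof is correct and matches the paper's approach exactly: take $\cP$ to be whichever of $\cP_1,\cP_2$ has larger profit, bound $\profit(\cP)\geq \tfrac{2}{3}\profit(\cP_1)+\tfrac{1}{3}\profit(\cP_2)$, and substitute the bounds from \Cref{lem:better_cost_bound_aut_pairs,lem:existence_autarkic_triples}. Your arithmetic for each coefficient is also correct.
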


\begin{proof}
  Let $\cP_1, \cP_2$ be the autarkic collections obtained from
  \Cref{lem:better_cost_bound_aut_pairs,lem:existence_autarkic_triples}
  respectively, and let $\cP \in \{\cP_1, \cP_2\}$ be the
  one with higher profit $2\coverage(\cP) - \cost(\cP)$. Using that 
  $\profit(\cP) \geq \tfrac{2}{3} \profit(\cP_1) + \tfrac{1}{3}
  \profit(\cP_2)$ and substituting the numbers from above completes
  the proof.
\end{proof}

\begin{proof}[Proof of \Cref{lem:win-win}]
  Let $\cS$ be the collection of vertex sets constructed in
  \Cref{sec:definition-of-cS}, and let $\cP$ be the crossing-free
  autarkic collection resulting from \Cref{cor:existence_autarkic}.
  Then combining \Cref{lem:bound_gain_cost} and
  \Cref{cor:existence_autarkic} yields:
  \begin{align*}
    \tfrac{10}{3}\cdot (\gain(\cS) - \cost(\cS))
    &+ (2\coverage(\cP) - \cost(\cP)) +  17 \cdot \excess(\OPT) \\[2mm]
    \geq&\   \tfrac{10}{3}\cdot \big(  \tfrac{1}{3} \cdot (x_2 + \overline{x}_2) + \tfrac{1}{2} \cdot x_3 - 2\cdot \excess(\OPT) \big) \\
    &\ +  x_1 - \tfrac{1}{9} \cdot x_2 - \tfrac{2}{3}\cdot  x_3 - \tfrac{1}{3} \cdot  (\overline{x}_1 + \overline{x}_2) - \tfrac{14}{3}\cdot \excess(\OPT) \\
    &\ + 17 \cdot \excess(\OPT) \\[2mm]
    \geq&\ x_1 + x_2 + x_3 -  \tfrac{1}{3}  \cdot  \overline{x}_1 + \tfrac{7}{9} \cdot  \overline{x}_2  + \tfrac{17}{3}  \cdot \excess(\OPT) \\[2mm]
    =&\ \val(\cS) - \tfrac{4}{3}  \cdot  \overline{x}_1 - \tfrac{2}{9} \cdot  \overline{x}_2  + \tfrac{17}{3} \cdot \excess(\OPT),
  \end{align*}
  where we used \eqref{eq:bound_value} in the last inequality.

  It now suffices to show
  $\tfrac{4}{3} \cdot \overline{x}_1 + \tfrac{2}{9} \cdot
  \overline{x}_2 \leq \tfrac{17}{3} \cdot
  \excess(\OPT)$. 
  For a set $S\in \cS$ and a time $t\geq 0$, we have
  $|\cU_S^t| \leq m(U,t) +1$ for all $U\in \cU_S^t$ and thus
    \begin{align*}
      \tfrac{1}{2} \cdot   |\{ U\in \cU_{S}^t: m(U,t) = 1\}|  +  \tfrac{1}{3} \cdot   |\{ U\in \cU_{S}^t: m(U,t) = 2\}|    \ \leq\ 1.
    \end{align*}
  Using this together with the definitions of $\overline{x}_1$, $\overline{x}_2$,
  and $B(\cS)$ (from \Cref{lem:bad_times_yield_excess}) yields
  \begin{align*}
    \tfrac{1}{2} \overline{x}_1 + \tfrac{1}{3} \overline{x}_2 \ =
    &\  \int_{t=0}^\infty \sum_{\substack{S\in \cS: \\ t\text{ is $S$-bad}}}  \Big(\tfrac{1}{2} \cdot   |\{ U\in \cU_{S}^t: m(U,t) = 1\}|  +  \tfrac{1}{3} \cdot   |\{ U\in \cU_{S}^t: m(U,t) = 2\}| \Big)  \ dt. \\
  \leq&\   \int_{t=0}^{\infty}  |\{ S \in \cS : t \text{ is $S$-bad}\}| \ dt 
  \ =\ B(\cS).
  \end{align*}
   By
  \Cref{lem:bad_times_yield_excess}, this implies
  \[
    \tfrac{4}{3} \cdot \overline{x}_1 + \tfrac{2}{9} \cdot
    \overline{x}_2\ \leq\ \tfrac{8}{3} \cdot \big( \tfrac{1}{2}
    \overline{x}_1 + \tfrac{1}{3} \overline{x}_2 \big) \ \leq\
    \tfrac{8}{3}\cdot B(\cS) \ \leq\ \tfrac{16}{3}
    \cdot \excess(\OPT)),
  \]
  which suffices to complete the proof.
\end{proof}

\section{The Approximation Ratio}
\label{sec:approximation-ratio}

Now that we have all the pieces in hand, we can show an approximation
ratio of $\appx$. For ready reference, all the relevant bounds are
given in \Cref{fig:constraints}.

\begin{figure}[ht]
  \centering
  \lineshere
  \begin{alignat*}{2}
    c(F_1) &\leq  2(1+\epsilon)\cdot y(\Usep)  & &
                                                   \text{(\Cref{thm:AKR-GW})} \\
    c(F_2) &\leq  2(1+\eps)\cdot y(\Usep) + (\alpha+\delta)\, \cost(\cS) -
             (1-e^{-\alpha})\, \gain(\cS) & \qquad & \text{(\Cref{thm:effic-contraction})}\\
    c(F_3) &\leq  2\, c(\OPT) - \profit(\cP) + 2\excess(\OPT) && \text{(\Cref{cor:autarkic_pairs})}\\
    y(\Usep) &\leq c(\OPT) - \excess(\OPT) && \text{(\Cref{eq:excess})}\\
    \cost(\cS) &\leq \val(\cS) + 2 \cdot \excess(\OPT) && \text{(\Cref{lem:constructing_trees_from_opt_part_2})}\\
    y(\Usep) &\leq \val(\cS) +  \nicefrac{1}{2\epsilon} \,\excess(\OPT)
                                               &&
                                                  \text{(\Cref{lem:constructing_trees_from_opt_part_3})} \\
    \val(\cS) &\leq \nicefrac{10}{3}\cdot (\gain(\cS) - \cost(\cS)) +
                \profit(\cP) +
                17 \cdot \excess(\OPT) &&
                                                           \text{(\Cref{lem:win-win})} 
  \end{alignat*}
    \lineshere
  \caption{Summary of Bounds}
  \label{fig:constraints}
\end{figure}

With the benefit of hindsight, we set $\eps = 0.0083$ and
$\alpha + \delta = 0.1$, and choose $\delta > 0$ to be a small enough
constant, so that $(1-e^{-\alpha}) \geq 0.095$. Note this value of
$\alpha$ may not be optimal, but it still gives a valid bound.

\begin{enumerate}
\item If $\excess(\OPT) \geq 0.0116 \, c(\OPT)$, then
  using \Cref{thm:AKR-GW} (or \Cref{thm:effic-contraction} with $\alpha =
  0$ and $\delta$ small enough) and the definition of excess implies that
  \[ c(F_1) \leq  2(1+\epsilon)\cdot (1 - 0.0116)\, c(\OPT) \leq
    1.994\, c(\OPT). \]
  We can henceforth assume that $\excess(\OPT) \leq 0.0116 \, c(\OPT)$.
\item If $\profit(\cP) \geq 0.03\, c(\OPT)$, then
  \Cref{cor:autarkic_pairs} implies that
  \[ c(F_3) \leq c(\OPT) \cdot [ 2 - 0.03 + 2 \cdot 0.0116 ] \leq
    \appx \, c(\OPT). \]
  Henceforth, we  assume that $\profit(\cP) \leq
  0.03\, c(\OPT)$, which we substitute into the win-win bound from \Cref{lem:win-win}.

\item At this point, we have the following five constraints:
  \begin{align*}
    c(F_2) \ &\leq  2(1+\eps)\cdot y(\Usep) + 0.1\, \cost(\cS) - 0.095\, \gain(\cS)\\
    y(\Usep) \  &\leq c(\OPT) - \excess(\OPT) \\
    \cost(\cS) \ &\leq \val(\cS) + 2 \cdot \excess(\OPT) \\
    \val(\cS) \ &\leq \nicefrac{10}{3}\cdot (\gain(\cS) - \cost(\cS)) +
                0.03 \, c(\OPT) +
                17 \cdot \excess(\OPT) \\
    y(\Usep) \ &\leq \val(\cS) +  \nicefrac{1}{2\epsilon} \,\excess(\OPT),
  \end{align*}
  where $\eps = 0.0083$.
  If we multiply these constraints by $1, 1.9931, 0.005, 0.0285$ and
  $0.0235$ respectively, and sum them up, we get
  \begin{gather}
    c(F_2) +  0.0829 \cdot \excess(\OPT) \leq 1.994 \cdot c(\OPT).
  \end{gather}
  Since the excess is non-negative, this completes the proof of the approximation.
\end{enumerate}

 \subsection*{Acknowledgments}
 We thank Chandra Chekuri
 for helpful conversations.

{\small
\bibliographystyle{alpha}
\bibliography{refs}
}

\newpage

\appendix

\section{Example: Criticality of Active Connectedness}
\label{sec:crit-active-conn}

Consider the example in \Cref{fig:actively_connected}.

\begin{figure}[h]
\begin{center}
  \begin{tikzpicture}[scale=0.5]

\tikzset{vertex/.style={
fill=black,thick, circle,minimum size=5pt, inner sep=0pt, outer sep=1.5pt}
}

\begin{scope}[every node/.style={vertex}]
\node (a1) at (0,0) {};
\node (a2) at (3,0) {};
\node (a3) at (8,0) {};
\node (b2) at (5,0) {};
\node (b1) at (19,0) {};
\node (b3) at (20,0) {};
\end{scope}

\node[below=1pt] at (a1) {$a_1$};
\node[below=1pt] at (b1) {$b_1$};
\node[below=1pt] at (a2) {$a_2$};
\node[below=1pt] at (b2) {$b_2$};
\node[below=1pt] at (a3) {$a_3$};
\node[below=1pt] at (b3) {$b_3$};

\begin{scope}[very thick]
\draw (a1)  -- node[above] {3} (a2);
\draw (a2)  -- node[above] {2} (b2);
\draw (b2)  -- node[above] {3} (a3);
\draw (a3)  -- node[above] {$M$} (b1);
\draw (b1)  -- node[above] {1} (b3);
\end{scope}

\end{tikzpicture}
\end{center}
\vspace*{-5mm}
\caption{\label{fig:actively_connected}
An instance of Steiner Forest with demand pairs $\{a_1,b_1\}, \{a_2,b_2\}, \{a_3,b_3\}$.
}
\end{figure}
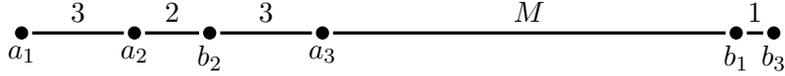

At time $t^* = 2$, the vertices $S = \{a_1, a_2, b_2, a_3\}$ form an
active component. But $S$ is not an actively connected set, since the
vertices $\{a_2,b_2\}$ were deactivated at time $1+2\eps$. Now if we
were to apply the result from \cite{BorchersD97} to form 
$3$-restricted Steiner trees for $S$, the resulting
collection could be $\cS_3 = \{\{a_1, a_2, b_2\}, \{b_2,
a_3\}\}$. Assume that $\eps = 0.1$, say: then at time $t = 1.5$, the
sets in $\cS_3$ do not merge any sets in $\cA^t$, whereas the
(not-actively-connected) set $S$ merges the active components
$\{a_1\}, \{a_3\}$. This would mean that $\cA^t/\cS \neq \cA^t/\{S\}$.

\section{Submodular Optimization with Knapsacks}
\label{sec:submod}

We are given a submodular function $f: 2^X \to \RR_{\geq 0}$, where
$f(\emptyset) = 0$, and each element $x \in X$ has a cost $c(x)$. For
simplicity, assume that the costs are distinct. Let
$S^* = \arg\max_{S \sse X} [f(S) - c(S)]$. We claim the following:
\begin{theorem}
  There exists an algorithm that given $\alpha \in [0,1]$ and
  $\gamma > 0$, finds a set $S \sse X$ such that
  \begin{gather}
    f(S) - c(S) \geq \big(1 - e^{-\alpha}\big) \cdot f(S^*) - (\alpha+\gamma)
    \cdot c(S^*). \label{eq:5}
  \end{gather}
  Moreover, the algorithm runs in time $n^{O(1/\gamma)}$.
\end{theorem}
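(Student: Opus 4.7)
The plan is to adapt the classical template of partial enumeration combined with a (distorted) greedy procedure, in the style of Sviridenko's algorithm for knapsack-constrained monotone submodular maximization and the distorted-greedy of Harshaw--Feldman--Ward--Karbasi, to the soft-penalty objective $\max_S \, f(S) - c(S)$ with tunable parameter $\alpha$.

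\textbf{Algorithm.} Let $k := \lceil c_0/\gamma \rceil$ for a suitable absolute constant $c_0$. For each subset $A \subseteq X$ with $|A| \leq k$ (there are $n^{O(1/\gamma)}$ such), initialize $S \leftarrow A$ and greedily extend by repeatedly adding the element $x \in X \setminus S$ that maximizes the distorted marginal $\alpha (f(S \cup \{x\}) - f(S)) - c(x)$, for as long as this quantity is positive. Return the $S$ with the largest $f(S) - c(S)$ over all enumerations.

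\textbf{Analysis.} Fix the benchmark $S^*$. If $|S^*| \leq k$, then $A = S^*$ is enumerated, and starting greedy from $S = S^*$ only adds elements with positive distorted marginal, so the output is at least $f(S^*) - c(S^*)$. Otherwise, take $A$ to be the $k$ elements of $S^*$ of largest individual cost; each remaining element of $S^* \setminus A$ then has cost at most $c(S^*)/k \leq \gamma c(S^*)/c_0$. Starting greedy from this $A$, I would track the distorted potential
\[
\Phi_i \ :=\ f(S_i) - c(S_i) + e^{-\alpha (N-i)/N} \cdot \bigl( f(S^* \cup S_i) - f(S_i) \bigr),
\]
where $N$ bounds the number of greedy steps. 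Submodularity combined with the greedy's rule (specifically, that the chosen $x$ beats every element of $S^* \setminus S_i$ on the distorted marginal) yields $\Phi_{i+1} - \Phi_i \geq -\,\gamma c(S^*)/(c_0 N)$ per step, where the loss arises from replacing the fractional continuous-greedy step by a single discrete element and is bounded by the max cost of any element of $S^* \setminus A$ not yet added. Summing and using $\Phi_0 \approx e^{-\alpha} f(S^*)$ and $\Phi_N = f(S_N) - c(S_N)$ (once the exponential weight has decayed to $1$) then delivers the target $(1-e^{-\alpha}) f(S^*) - (\alpha + \gamma) c(S^*)$.

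The main obstacle will be designing the distorted potential so that the $(1-e^{-\alpha})$ factor on $f(S^*)$ emerges cleanly from the telescoping of the exponential weights, while the $\alpha c(S^*)$ term emerges from accumulating the linear-cost component of the distorted marginal against the elements of $S^* \setminus S_i$ that could have been added at step $i$. The role of the enumeration $A$ is then to guarantee that $\max_{x \in S^* \setminus A} c(x) \leq \gamma c(S^*)/c_0$, so that the per-step discretization loss telescopes to at most $\gamma c(S^*)$ over all steps, yielding the extra $\gamma$ slack in the cost term. A secondary wrinkle is the case where the greedy terminates early (no element has positive distorted marginal left), where one must check that the potential inequality still holds because the "would-be" next step would decrease $\Phi_i$, so stopping is only beneficial.
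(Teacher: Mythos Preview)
Your high-level template (partial enumeration plus a greedy phase) is the same as the paper's, but the greedy you propose and its analysis do not go through. There are two concrete problems.

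First, the potential $\Phi_i = f(S_i) - c(S_i) + e^{-\alpha(N-i)/N}\bigl(f(S^*\cup S_i)-f(S_i)\bigr)$ does not telescope to what you claim: at $i=N$ the exponential weight is $1$, so $\Phi_N = f(S^*\cup S_N) - c(S_N)$, \emph{not} $f(S_N)-c(S_N)$. For monotone $f$ this is an \emph{upper} bound on your output's value, so a lower bound on $\Phi_N$ tells you nothing about $f(S_N)-c(S_N)$. More fundamentally, the Harshaw--Feldman--Ward--Karbasi distorted greedy is tied to a cardinality (or matroid) constraint: the number of rounds $N$ equals the rank, and the per-step inequality comes from comparing the greedy choice to the \emph{average} over the $N$ elements of $S^*$. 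Here there is no such constraint, $|S^*|$ is unknown and unrelated to any natural $N$, and your rule ``add while $\alpha f_S(x)-c(x)>0$'' makes the number of steps data-dependent. If one simply analyzes your stopping rule directly (at termination $\alpha f_S(x)\le c(x)$ for all $x$, and every added element had positive distorted marginal), one gets at best $f(S)-c(S)\ge (1-\alpha)f(S^*)-\tfrac{1-\alpha}{\alpha}c(S^*)$, which has the wrong factor on $f(S^*)$ and a cost coefficient that blows up as $\alpha\to 0$ rather than shrinking to $\alpha$.

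The paper instead uses the \emph{ratio} greedy: repeatedly add $e_{i+1}=\arg\max_e f_{S_i}(e)/c(e)$, output the best prefix. The analysis looks at the first $k$ with $c(S_k)\ge\alpha\,c(S^*)$ and invokes the classical knapsack-greedy bound $f(S_k)\ge\bigl(1-e^{-c(S_k)/c(S^*)}\bigr)f(S^*)$, so the $(1-e^{-\alpha})$ factor falls out directly and $c(S_k)\le\alpha\,c(S^*)+C_{\max}$ gives the $\alpha$ term on the cost. The enumeration is also slightly richer than yours: one guesses not only the $\le 1/\gamma$ heavy elements $S'$ of $S^*$ but also a threshold element $e_0$ that pins down the low-cost universe $X'=\{e:c(e)\le c(e_0)\}$, so that $C_{\max}\le\gamma\,c(S^*)$ on the residual. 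A final ingredient you are missing is the observation $f(S')-c(S')\ge 0$ for the guessed heavy set (else $L^*=S^*\setminus S'$ would beat $S^*$ by submodularity), which is what lets one scale the heavy contribution by $\alpha$ and merge it with the greedy's $(1-e^{-\alpha})$ factor on $f_{S'}(L^*)$.
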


\begin{proof}
  To prove this, let us first consider the case where each element has
  cost $c(e) \leq C_{\max}$ for some value $C_{\max}$. In this case,
  consider the algorithm that starts with the empty set and picks
  elements greedily, setting
  $e_{i+1} = \arg\max_{e \in X} \frac{f_{S_{i}}(e)}{c(e)}$, where
  $S_i = \{e_1, \ldots, e_i\}$ is the set of the first $i$ elements to
  be picked and $f_{S_{i}}(e) \coloneqq f(S_{i} \cup \{e\}) - f(S_i)$. 
  Return the set $S_i$ that maximizes $f(S_i) - c(S_i)$. It
  suffices to show that there exists a good set.

  Indeed, consider the first index $k$ such that
  $c(S_k) \geq \alpha c(S^*)$; a
  standard submodularity-based analysis shows that:
  \[ f(S_k) \geq \big( 1 - e^{-c(S_k)/c(S^*)} \big) \cdot f(S^*)\geq
    (1-e^{-\alpha}) \cdot f(S^*). \] (See, e.g.,
  \cite[Proposition~2.2]{CV25}; for completeness we present their proof
  in \Cref{prop:greedy}.) Since the elements have cost at most $C_{\max}$, we know
  that $c(S_k) \leq \alpha \, c(S^*) + C_{\max}$, we get that
  \[ f(S_k) - c(S_k) \geq (1-e^{-\alpha}) \cdot f(S^*) - (\alpha \,
    c(S^*) + C_{\max}). \] If $C_{\max} \leq \gamma c(S^*)$, we would
  get the bound claimed in (\ref{eq:5}).

  To handle the general case where the costs may be larger than
  $\gamma \, c(S^*)$, we need a partial enumeration strategy. We guess
  the (at most) $1/\gamma$ elements in $S^*$ that have cost more than
  $\gamma \, c(S^*)$. Moreover, since we do not know the cost
  $c(S^*)$, we also guess the maximum-cost element in $X$ that is at
  most $\gamma \, c(S^*)$, so that we can define the set of
  ``low-cost'' elements on which to run the above algorithm. Indeed,
  the algorithm for the general case is the following:
  \begin{enumerate}
  \item Enumerate over sets $S'$ of at most $1/\gamma$ elements, and
    one extra element $e_0$.
  \item Define $X' := \{e \in X \mid c(e) \leq c(e_0)\}$, and define
    submodular function $g(S) := f_{S'}(S)$ over the ground set
    $X'$. Use the above procedure on submodular function $g$ over
    ground set $X'$ to find a solution $S''$.
  \item Among all these solutions, return the solution with largest
    $f(S) - c(S)$, where $S = S' \cup S''$.
  \end{enumerate}
  To prove the approximation guarantee, consider the run where we
  correctly guess the elements $S'$ to be those in $S^*$ having cost
  strictly more than $\gamma\, c(S^*)$, and also correctly guess $e_0$
  so that $X' = \{e \in X \mid c(e) \leq \gamma \, c(S^*)\}$. Define
  $H^* = S^* \setminus X'$ and $L^* = S^* \cap X'$; by assumption, we
  are focusing on the guess where $H^* = S'$, and also $L^* \sse X'$
  with $C_{\max} \leq \gamma\, c(S^*)$. Now the algorithm's
  solution $S = S' \cup S''$ for this run satisfies
  \begin{align*}
    f(S) - c(S)
    &= f(S') - c(S') + f_{S'}(S'') - c(S'') \\
    &\geq f(H^*) - c(H^*) + (1-e^{-\alpha}) \cdot f_{H^*}(L^*) -
      (\alpha \cdot c(L^*) + \gamma \cdot c(S^*)).
      \intertext{We will soon show the claim that $f(H^*) - c(H^*) \geq
      0$, but first we use it to complete the proof. Multiplying the
      first two terms by $\alpha\in[0,1]$ gives}
    &\geq \alpha(f(H^*) - c(H^*)) + (1-e^{-\alpha}) \cdot f_{H^*}(L^*) -
      (\alpha \cdot c(L^*) + \gamma \cdot c(S^*) ).
      \intertext{Moreover, using that $\alpha \geq 1 - e^{-\alpha}$, we get}
    &\geq (1-e^{-\alpha}) \cdot \big[ f(H^*) + f_{H^*}(L^*) \big] -
      \alpha \cdot \big[ c(H^*) + c(L^*) \big] - \gamma \cdot c(S^*),
  \end{align*}
  which proves the claimed bound using submodularity. Finally, to prove the claim, note that if
  $f(H^*) < c(H^*)$, we get that $f(L^*) - c(L^*) > f(S^*) - c(S^*)$,
  which contradicts the optimality of the set $S^*$.
\end{proof}

\begin{proposition}
  \label{prop:greedy}
  For the sets $S_i$ defined by the greedy process,  $f(S_k) \geq (1 -
  e^{-c(S_k)/c(S^*)}) \cdot f(S^*)$.
\end{proposition}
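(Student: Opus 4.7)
The plan is to prove the per-step inductive inequality
\[
f(S^*) - f(S_{i+1}) \;\leq\; \Bigl(1 - \frac{c(e_{i+1})}{c(S^*)}\Bigr) \cdot \bigl(f(S^*) - f(S_i)\bigr),
\]
and then iterate it, using the termwise estimate $1 - x \leq e^{-x}$, to get the exponential bound. This is the standard cost-sensitive version of the Nemhauser--Wolsey--Fisher greedy analysis; the only twist is that the greedy ratio is taken over all of $X$ (not a constrained family).

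First I would establish the one-step estimate. By monotonicity (which is implicit in this setting; otherwise $f(S^*) \leq f(S^* \cup S_i)$ can fail) and submodularity of $f$,
\[
f(S^*) - f(S_i) \;\leq\; f(S^* \cup S_i) - f(S_i) \;\leq\; \sum_{e \in S^* \setminus S_i} f_{S_i}(e).
\]
Since the greedy chooses $e_{i+1} = \arg\max_{e \in X} f_{S_i}(e)/c(e)$, and $S^* \setminus S_i \subseteq X$, each summand satisfies $f_{S_i}(e) \leq c(e) \cdot f_{S_i}(e_{i+1})/c(e_{i+1})$. Summing and using $c(S^* \setminus S_i) \leq c(S^*)$,
\[
f(S^*) - f(S_i) \;\leq\; \frac{f_{S_i}(e_{i+1})}{c(e_{i+1})} \cdot c(S^*) \;=\; \frac{c(S^*)}{c(e_{i+1})} \cdot \bigl(f(S_{i+1}) - f(S_i)\bigr),
\]
which rearranges to the desired per-step inequality.

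Finally I would iterate this bound from $i=0$ to $i=k-1$, starting from $f(\emptyset)=0$, and apply the inequality $1-x \leq e^{-x}$ termwise, yielding
\[
f(S^*) - f(S_k) \;\leq\; f(S^*) \prod_{i=1}^{k} \Bigl(1 - \frac{c(e_i)}{c(S^*)}\Bigr) \;\leq\; f(S^*)\, \exp\!\Bigl(-\tfrac{1}{c(S^*)}\textstyle\sum_{i=1}^{k} c(e_i)\Bigr) \;=\; f(S^*)\, e^{-c(S_k)/c(S^*)}.
\]
Rearranging gives the claim. There is no real obstacle here; the only point that needs mention is that the greedy's ratio-maximization is over all of $X$, so the pointwise bound $f_{S_i}(e)/c(e) \leq f_{S_i}(e_{i+1})/c(e_{i+1})$ holds for every $e \in S^* \setminus S_i$, which is exactly what the averaging step needs.
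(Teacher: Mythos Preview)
Your proof is correct and essentially identical to the paper's: both establish the per-step inequality $f(S^*)-f(S_{i+1}) \leq (1 - c(e_{i+1})/c(S^*))(f(S^*)-f(S_i))$ via the greedy ratio bound and submodularity, then telescope and apply $1-x \leq e^{-x}$. Your explicit flagging of the monotonicity assumption is a fair point---the paper uses it tacitly in the step $\sum_{e\in S^*} f_{S_i}(e) \geq f(S^*) - f(S_i)$ without stating it.
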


\begin{proof}
  By submodularity and the greedy choice of element $e_{i+1}$, it follows that
  \[ f_{S_i}(e_{i+1}) \geq \frac{c(e_{i+1})}{c(S^*)} \cdot \sum_{e \in
      S^*} f_{S_i}(e) \geq \frac{c(e_{i+1})}{c(S^*)} \cdot (f(S^*) -
    f(S_i)). \] If we define $\Delta_i = f(S^*) - f(S_i)$, the
  definition of marginal values $f_{S_i}(e_{i+1})$ implies that
  \[ \Delta_{i+1} \leq \Delta_i \cdot ( 1- c(e_{i+1})/c(S^*) ) \leq
    \Delta_i \cdot e^{-c(e_{i+1})/c(S^*)}. \] By induction, we get
  \[ \Delta_k \leq \Delta_0 \cdot e^{-c(S_k)/c(S^*)} \implies f(S_k) \geq (1 -
    e^{-c(S_k)/c(S^*)}) \cdot f(S^*),\] where we use the
  fact that $f(S_0) = 0$ and hence $\Delta_0 = f(S^*)$.
\end{proof}

\section{Better autarkic collections}
\label{sec:improving_constant}

In this section we provide the improved constructions of autarkic
collections, and prove
\Cref{lem:better_cost_bound_aut_pairs,lem:existence_autarkic_triples}.

\AutBetterCost*
\begin{proof}
  We consider the same crossing-free autarkic collection $\cP$ as in
  the proof of \Cref{lem:simple_aut_pairs}, which in particular
  satisfies $\coverage(\cP) \geq x_1$.  It remains to prove the
  claimed upper bound on $\cost(\cP)$.

  As in the proof of \Cref{lem:constructing_trees_from_opt_part_2},
  for a connected component $T$ of $\OPT$, we define
  $ \excess(T) \coloneqq c(T) - \val(\{V(T)\})$.  Then
  $ \excess(T) \geq 0$ and
  \[
    \excess(\OPT)
    = \sum_{\substack{T:\text{ conn. comp.}\\ \text{of OPT}}} \excess(T).
  \]
  Moreover, we denote the contribution of a set $S\in \cS$ to $x_1$,
  $x_2$, and $x_3$ by $x_1(S)$, $x_2(S)$, and $x_3(S)$, respectively,
  that is
  \begin{align*}
    x_1(S) \ \coloneqq&\ \int_{t=0}^\infty  |\{ U\in \cU_{S} \cap \cA^t : m(U,t) = 1\}| \cdot \mathbf{1}_{[t\text{ is $S$-good}]}   \ dt \\
    x_2(S) \ \coloneqq&\ \int_{t=0}^\infty    |\{ U\in \cU_{S}  \cap \cA^t: m(U,t) = 2\}|  \cdot \mathbf{1}_{[t\text{ is $S$-good}]}   \ dt \\
    x_3(S) \ \coloneqq&\ \int_{t=0}^\infty   |\{ U\in \cU_{S}  \cap \cA^t: m(U,t) \geq 3 \}|   \ dt ,
  \end{align*}
  where $\mathbf{1}_{[t\text{ is $S$-good}]} $ denotes the indicator variable that is $1$ if time $t$ is $S$-good and $0$ otherwise.
  The contribution of $S\in \cS$ to $B(\cS)$ will be denoted by $B(S)$, that is
  \[
    B(S)  \ \coloneqq\   \int_{t=0}^{\infty}  \mathbf{1}_{[t\text{ is $S$-bad}]}\ dt.
  \]
  Recall that $\cP$ contains (at most) one autarkic pair $P$ for each
  $S\in \cS$.  Consider $S\in\cS$ with a corresponding autarkic pair
  $P=\{A_1,A_2\} \in \cP$.  Because $\cS$ is agreeable, there is some
  connected component $T$ of $\OPT$ containing all vertices from $S$
  (and these connected components of $\OPT$ are distinct for different
  sets $S\in \cS$).  Let $\{v,w\} \in \sep(A_1)$ be the designated
  representative of the autarkic pair $P$.  We will show that
  $\cost(\{P\})$, that is the length of a shortest $v$-$w$ path,
  satisfies
  \begin{equation}\label{eq:cost_bound_per_autarkic_pair}
    \cost(\{P\}) \ \leq\  x_1(S) + \tfrac{2}{3} \cdot x_2(S) + \tfrac{1}{2} \cdot x_3(S) +2 \cdot \excess(T) +2 \cdot B(S).
  \end{equation}
  This will complete the proof as it implies
  \begin{align*}
    \cost(\cP) \ =&\ \sum_{P\in \cP} \cost(\{P\})  \\
    \ \leq&\ \sum_{S\in \cS} \big(  x_1(S) + \tfrac{2}{3} \cdot x_2(S) + \tfrac{1}{2} \cdot x_3(S) + 2 \cdot B(S)\big)  
            +\sum_{\substack{T:\text{ conn. comp.}\\ \text{of OPT}}} 2 \cdot  \excess(T)\\
    =&\ x_1 + \tfrac{2}{3} \cdot x_2 + \tfrac{1}{2} \cdot x_3 + 2 \cdot  B(\cS) +  2 \cdot \excess(\OPT) \\
    \leq&\  x_1 + \tfrac{2}{3} \cdot x_2 + \tfrac{1}{2} \cdot x_3 +  6 \cdot \excess(\OPT)
  \end{align*}
  where the last inequality follows from
  \Cref{lem:bad_times_yield_excess}.

  It remains to prove \eqref{eq:cost_bound_per_autarkic_pair}.  To
  this end, we let $R$ be the $v$-$w$ path in $T$ and observe that
  $c(R)$ is an upper bound on $\cost(\{P\})$.  We have
  $c(R) =c(T) - c(T\setminus R)$.  If the tree $T$ contains only a
  single edge in a cut $\delta(U)$ with $|\{v,w\}\cap U|\neq1$, then
  this edge is contained in $T\setminus R$.  Using that $y$ satisfies
  the constraints of the dual LP, this implies
  \begin{equation*}
    c(T\setminus R) \ \geq\ y\big( \big\{ U\in \supp(y) : |T\cap
    \delta(U)|=1 \text{ and } |\{v,w\} \cap U| \neq 1 \big\}\big). 
  \end{equation*}
  Moreover, we have
  \begin{equation*}
    \sum_{U\in \supp(y)} |T\cap \delta(U)| \cdot y_U 
    \ \leq\ c(T) 
    \ =\ \val(\{V(T)\}) + \excess(T) 
    \ \leq\ \sum_{\substack{U\in \supp(y): \\|T\cap \delta(U)| \geq 1}} y_U  + \excess(T)
  \end{equation*}
  implying
  \[
    \sum_{\substack{U\in \supp(y): \\|T\cap \delta(U)| \geq 2}} y_U \ \leq\ \excess(T)
  \]
  and thus 
  \begin{equation*}
    c(T)\ \leq\ \sum_{\substack{U\in \supp(y): \\|T\cap \delta(U)| \geq 1}} y_U  + \excess(T)
    \ \leq\  \sum_{\substack{U\in \supp(y): \\|T\cap \delta(U)| = 1}} y_U + 2\cdot \excess(T).
  \end{equation*}
  We conclude
  \begin{align*}
    c(R) \ =\ c(T) - c(T\setminus R) \ \leq
    &\  \sum_{\substack{U\in \supp(y): \\|T\cap \delta(U)| = 1}} y_U + 2\cdot \excess(T) \\
    &\ - y\big( \big\{ U\in \supp(y) : |T\cap \delta(U)|=1 \text{ and
      } |\{v,w\} \cap U| \neq 1 \big\}\big) \\[2mm]
    \leq&\ y\big( \big\{ U\in \supp(y) : |\{v,w\} \cap U| = 1 \big\}\big) + 2\cdot \excess(T).
  \end{align*}
  Hence, in order to prove \eqref{eq:cost_bound_per_autarkic_pair}, it remains to prove
  \begin{equation}
    y\big( \big\{ U\in \supp(y) : |U\cap \{v,w\}| = 1 \big\}\big)  \ \leq\ x_1(S) + \tfrac{2}{3} \cdot x_2(S) + \tfrac{1}{2} \cdot x_3(S)  + 2 \cdot B(S).
  \end{equation}
  Let $t^*$ be the connection time of the demand pair $\{v,w\}$.
  Because for every time $t$, the active sets $\cA^t$ are disjoint and thus at most two of them can contain a vertex from $\{v,w\}$, we have
  \begin{align*}
    y\big( \big\{ U\in \supp(y) : |U\cap \{v,w\}| = 1 \big\}\big) 
    \ =&\ \int_{t=0}^{\infty} \big|\big\{U\in \cA^t:  |U\cap \{v,w\}| = 1\big\}\big| 
    \\
    \ \leq &\   \int_{t=0}^{t^*} 2 \cdot \mathbf{1}_{[t\text{ $S$-good}]}\ dt  + 2\cdot B(S).
  \end{align*}
  Now consider an $S$-good time $t$.
  Then, defining $k\coloneqq |\cU_S \cap \cA^t|$, we have $m(U,t) =k -1$ for every set $U\in \cU_S \cap \cA^t$ because $t$ is $S$-good.
  Thus, 
  \begin{align*}
    2\ \leq&\ |\{ U\in \cU_{S} \cap \cA^t : m(U,t) = 1\}|  \\
           &\ + \tfrac{2}{3} \cdot   |\{ U\in \cU_{S}  \cap \cA^t: m(U,t) = 2\}| \\
           &\ + \tfrac{1}{2} \cdot |\{ U\in \cU_{S}  \cap \cA^t: m(U,t) \geq 3 \}|.
  \end{align*}
  By definition of $x_1(S)$, $x_2(S)$, and $x_3(S)$, this implies 
  \[
    \int_{t=0}^{t^*} 2 \cdot \mathbf{1}_{[t\text{ $S$-good}]}\ dt \ \leq\
    x_1(S) + \tfrac{2}{3} \cdot x_2(S) + \tfrac{1}{2} \cdot x_3(S). \qedhere
  \]
\end{proof}

\AutarkicTriples*

\begin{proof}
  To construct $\cP$, we proceed similarly as in the proof of
  \Cref{lem:simple_aut_pairs}, but we will now make use of autarkic
  \emph{triples}.  We start with $\cP=\emptyset$.  Then, for each
  $S\in \cS$, we consider the smallest $S$-good time $t$ such that
  $\cA^t$ contains at most three $S$-relevant sets.  If $\cA^t$
  contains some $S$-relevant set, then there are either exactly two or
  exactly three $S$-relevant sets $A_i$.  If there are exactly two
  such sets $A_i$, then $\{A_1, A_2\}$ is an autarkic pair (by the
  same argument as in the proof of \Cref{lem:simple_aut_pairs}) and we
  add this autarkic pair to $\cP$.  If there are exactly three such
  sets $A_i$, then we will show that $\{A_1,A_2,A_3\}$ is an autarkic
  triple and we will add it to $\cP$.

  To see that $\{A_1,A_2,A_3\}$ is indeed an
  autarkic triple, we need to show
  $\sep(A_1 \cup A_2 \cup A_3) =\emptyset$.  Because
  $\sep(A_i)\neq \emptyset$ for all $i\in \{1,2,3\}$ and $\cS$ is
  agreeable, every demand pair in $\sep(A_i)$ is satisfied by the
  collection $\cS$.  Using that the time $t$ is $S$-good, this
  implies that every demand pair in $\sep(A_i)$ is satisfied by the
  set $S$.  Because $A_1, A_2, A_3$ are the only $S$-relevant sets in
  $\cA^{t}$, every demand pair in $\sep(A_i)$ has one vertex in $A_i$
   and the other vertex in $A_j$ for some $j\neq i$.
   In particular, such a demand pair in $\sep(A_i)$ is not contained in
  $\sep(A_1 \cup A_2 \cup A_3)$. 
   Using $\sep(A_1 \cup A_2 \cup A_3)\subseteq \sep(A_1) \cup \sep(A_2) \cup
  \sep(A_3)$, this implies $\sep(A_1 \cup A_2 \cup A_3) =\emptyset$.
  Hence, $\{A_1, A_2, A_3\}$ is indeed an autarkic triple.

  The autarkic collection $\cP$ is crossing-free for the same reason
  as in the proof of \Cref{lem:simple_aut_pairs}: In the construction
  of $\cP$, we included at most one autarkic tuple $P_S$ for each set
  $S\in \cS$.  We have shown that every set $A_i$ contained in this
  autarkic tuple constructed for $S\in \cS$ is satisfied by $S$.
  Thus, every demand pair in $\sep(P_S)$ is satisfied by $S$.  Because
  the sets in $\cS$ are disjoint (because $\cS$ is agreeable), no
  demand pair is satisfied by two different sets from $\cS$.  This
  implies that the sets $\sep(P_S)$ of demand pairs are disjoint.
  Hence, $\cP$ is crossing-free.

  Because for every set $S\in \cS$, the tree $T_S$ connects every
  demand pair satisfied by $S$, the autarkic tuple $P_S$ added to
  $\cP$ for $S$ satisfies $\cost(P_S) \leq c(T_S)$ and thus by
  \Cref{lem:constructing_trees_from_opt_part_2}, we have
  \[
    \cost(\cP) \ \leq\ \val(\cS) + 2\cdot \excess(\OPT) \ = \ x_1 + \overline{x}_1 + x_2 + \overline{x}_2 + x_3 + 2\cdot \excess(\OPT).
  \]
  Finally, it remains to prove $\coverage(\cP) \geq x_1 + x_2$.  To
  this end, we proceed analogously as in the proof of
  \Cref{lem:simple_aut_pairs}: We fix a set $S\in \cS$ and consider
  the autarkic tuple $P_S \in \cP$.  It suffices to show that if
  $U\in \cU_{S} \cap \cA^{t^*}$ for some $S$-good time $t^*$ with
  $m(U,t^*) \in \{1,2\}$, then $P_S$ covers $U$.  If $P_S$ is an
  autarkic pair, this follows from the proof of
  \Cref{lem:simple_aut_pairs}.  Thus, we may assume that $P_S$ is an
  autarkic triple $P_S=\{A_1,A_2,A_3\}$.

  Let $U\in \cU_{S} \cap \cA^{t^*}$ for some $S$-good time $t^*$ with
  $m(U,t^*) \in \{1,2\}$.  Because $m(U,t^*)\leq 2$, there are at most
  three $S$-relevant sets in $\cA^{t^*}$, where we used that every
  $S$-relevant set is contained in $\cU_{\cS}$ because $\cS$ is
  agreeable.  As in the construction of $\cP$, we let $t'$ be the
  smallest $S$-good time, where $\cA^{t'}$ contains at most three
  $S$-relevant sets.  Then we have $A_i \in \cA^{t'}$ for all
  $i\in \{1,2,3\}$.  The minimal choice of $t'$ implies $t' \leq t^*$.

  Using that the connection time of the demand pair
  $\varphi(U)=\{v_1,v_2\}$ is greater than $t^*\geq t'$, each of the two
  vertices $v_1$ and $v_2$ of $\varphi(U)$ is contained in a different
  set of $\cA^{t'}$.  Hence, we may assume $v_1 \in A_1$ and
  $v_2 \in A_2$ with $v_1\in U$ and $v_2 \notin U$.  Using that
  $\supp(y)$ is laminar, this implies $A_1 \subseteq U$ and
  $A_2\subseteq V\setminus U$ (because $v_1\in A_1 \cap U$,
  $v_2\in A_2 \setminus U$ and because $U$ cannot be a subset of any
  $A_i$ because of $A_i \in \cA^{t'}$, $U\in \cA^{t^*}$ with
  $t' \leq t^*$).  Moreover, we either have $A_3 \cap U = \emptyset$
  or $A_3 \subseteq U$.  We distinguish these two cases and show that
  $U$ is covered by $P_S$ in each of them.

  First consider the case $A_3 \cap U = \emptyset$. We claim
  $\sep(U) = \sep (A_1)$.  To prove this, consider this a demand pair
  $\{u,w\}$ in $\sep(A_1)$, where we may assume
  $u \in A_1 \subseteq U$, which by the definition of autarkic triples
  implies $w \in A_2 \cup A_3 \subseteq V\setminus U$.  Then
  $\{u,w\} \in \sep(U)$.  Conversely, consider a demand pair $\{u,w\}$
  in $\sep(U)$.  Because $\cS$ is agreeable and $S\in \cS$ satisfies
  the demand pair $\varphi(U)$, the collection $\cS$ must also satisfy
  $\{u,w\}$.  Moreover, because the time $t^*$ is $S$-good, the set
  $S\in \cS$ must satisfy $\{u,w\}$, implying $\{u,w\} \in \sep(A_1)$
  (where we use that the connection time of $\{u,w\}$ is at least
  $t^* \geq t'$ and $A_1$, $A_2$, and $A_3$ are the only $S$-relevant
  sets in $\cA^{t'}$).  This shows $\sep(U) = \sep (A_1)$, implying
  that $P_S$ covers $U$.

  The case $A_3 \subseteq U$ is also similar. We claim
  $\sep(U) = \sep (A_2)$. To prove this, consider a demand pair
  $\{u,w\}$ in $\sep(A_2)$, where we may assume
  $w\in A_2 \subseteq V\setminus U$, which by the definition of
  autarkic triples implies $u\in A_1 \cup A_3 \subseteq U$.  Then
  $\{u,w\} \in \sep(U)$.  Conversely, consider a demand pair $\{u,w\}$
  in $\sep(U)$.  Because $\cS$ is agreeable and $S\in \cS$ satisfies
  the demand pair $\varphi(U)$, the collection $\cS$ must also satisfy
  $\{u,w\}$.  Moreover, because the time $t^*$ is $S$-good, the set
  $S\in \cS$ must satisfy $\{u,w\}$, implying $\{u,w\} \in \sep(A_2)$
  (where we use that the connection time of $\{u,w\}$ is at least
  $t^* \geq t'$ and $A_1$, $A_2$, and $A_3$ are the only $S$-relevant
  sets in $\cA^{t'}$).  This shows $\sep(U) = \sep (A_2)$, implying
  that $P_S$ covers $U$. 
\end{proof}

\end{document}